
\synctex=1

\documentclass[reqno]{amsart}
\usepackage[margin=1.6in,bottom=1.5in]{geometry}		


\usepackage{amsmath}		
\usepackage{amssymb}		
\usepackage{amsfonts}		
\usepackage{amsthm}		
\usepackage[foot]{amsaddr}		

\usepackage[centercolon=true]{mathtools}		

\mathtoolsset{%
}

\usepackage[utf8]{inputenc}		
\usepackage[T1]{fontenc}		

\usepackage[
cal=cm,
]
{mathalfa}


\usepackage{dsfont}		




\usepackage[sf,mono=false]{libertine}



\usepackage{acronym}		
\newcommand{\acdef}[1]{\textit{\acl{#1}} \textup{(\acs{#1})}\acused{#1}}		

\usepackage[labelfont={bf,small},labelsep=colon,font=small]{caption}	
\usepackage{subcaption}		
\captionsetup[algorithm]{labelfont=bf,labelsep=colon}		
\captionsetup[sub]{labelfont=bf}
\captionsetup[subcaption]{labelfont=bf,justification=centering}		

\usepackage[dvipsnames,svgnames]{xcolor}		
\colorlet{MyRed}{FireBrick!50!Crimson}
\colorlet{MyBlue}{DodgerBlue!75!black}
\colorlet{MyGreen}{DarkGreen!85!black}
\colorlet{MyViolet}{DarkMagenta}

\colorlet{MyLightBlue}{DodgerBlue!20}
\colorlet{MyLightGreen}{MyGreen!20}

\colorlet{PrimalColor}{MyBlue}
\colorlet{PrimalFill}{MyLightBlue}
\colorlet{DualColor}{MyRed}

\colorlet{RevColor}{BlueViolet}
\colorlet{LinkColor}{MediumBlue}


\newcommand{\afterhead}{.\;}		
\newcommand{\para}[1]{\smallskip\paragraph{\textbf{#1\afterhead}}}

\usepackage{pifont}		


\usepackage{tikz}		
\usepackage{tikz-cd}		
\usetikzlibrary{calc,patterns}		
\usepackage{pgfplots}

\usepackage{array}		
\usepackage{booktabs}		
\usepackage[inline,shortlabels]{enumitem}		
\setenumerate{itemsep=\smallskipamount,topsep=\smallskipamount,left=\parindent}
\setitemize{itemsep=\smallskipamount,topsep=\smallskipamount,left=\parindent}

\usepackage[kerning=true]{microtype}		

\usepackage{tabto}		
\usepackage{xspace}		


\usepackage[numbers,sort&compress]{natbib}		

\bibpunct[, ]{[}{]}{,}{n}{,}{,}

\usepackage{hyperref}
\hypersetup{
final,
colorlinks=true,
linktocpage=true,
pdfstartview=FitH,
breaklinks=true,
pdfpagemode=UseNone,
pageanchor=true,
pdfpagemode=UseOutlines,
plainpages=false,
bookmarksnumbered,
bookmarksopen=false,
bookmarksopenlevel=1,
hypertexnames=false,
pdfhighlight=/O,
urlcolor=LinkColor,linkcolor=LinkColor,citecolor=LinkColor,	
pdftitle={},
pdfauthor={},
pdfsubject={},
pdfkeywords={},
pdfcreator={pdfLaTeX},
pdfproducer={LaTeX with hyperref}
}

\newcommand{\EMAIL}[1]{\email{\href{mailto:#1}{#1}}}

\usepackage[sort&compress,capitalize,nameinlink]{cleveref}		
\crefname{algo}{Algorithm}{Algorithms}
\crefname{assumption}{Assumption}{Assumptions}
\crefname{case}{Case}{Cases}



\usepackage{algorithm}		
\usepackage{algpseudocode}		

\usepackage{thmtools}		
\usepackage{thm-restate}		

\theoremstyle{plain}
\newtheorem{theorem}{Theorem}		
\newtheorem{lemma}{Lemma}		


\newtheorem*{corollary*}{Corollary}		

\theoremstyle{definition}

\newtheorem*{definition*}{Definition}		
\newtheorem*{assumption*}{Assumptions}		
\newtheorem*{example*}{Example}		


\theoremstyle{remark}

\newtheorem*{remark*}{Remark}		

\def\endenv{\hfill$\lozenge$\smallskip}


\newcounter{proofpart}

\numberwithin{example}{section}		

\usepackage[showdeletions,suppress]{color-edits}		
\usepackage[normalem]{ulem}		

\newcommand{\debug}[1]{#1}		



\newcommand{\newmacro}[2]{\newcommand{#1}{\debug{#2}}}		
\newcommand{\newop}[2]{\DeclareMathOperator{#1}{\debug{#2}}}		


\DeclarePairedDelimiter{\braces}{\{}{\}}		
\DeclarePairedDelimiter{\bracks}{[}{]}		
\DeclarePairedDelimiter{\parens}{(}{)}		

\DeclarePairedDelimiter{\abs}{\lvert}{\rvert}		

\DeclarePairedDelimiterX{\setdef}[2]{\{}{\}}{#1:#2}		
\DeclarePairedDelimiterXPP{\exclude}[1]{\mathopen{}\setminus}{\{}{\}}{}{#1}		

\newcommand{\R}{\mathbb{R}}		
\newcommand{\C}{\mathbb{C}}		
\newcommand{\re}{\text{Re}}		
\newcommand{\imag}{\text{Im}}		


\DeclareMathOperator{\aff}{aff}		
\DeclareMathOperator{\bigoh}{\mathcal{O}}		
\DeclareMathOperator{\diag}{diag}		
\DeclareMathOperator{\diam}{diam}		
\DeclareMathOperator{\dist}{dist}		
\DeclareMathOperator{\one}{\mathds{1}}		
\DeclareMathOperator{\proj}{proj}		
\DeclareMathOperator{\relint}{ri}		

\newcommand{\eg}{e.g.,\xspace}		
\newcommand{\ie}{i.e.,\xspace}		
\newcommand{\viz}{viz.\xspace}		

\newcommand{\textpar}[1]{\textup(#1\textup)}		

\newcommand{\txs}{\textstyle}		

\newcommand{\alt}[1]{#1'}		
\newcommand{\altalt}[1]{#1''}		

\newmacro{\dd}{\:d}		
\newcommand{\eps}{\varepsilon}		

\newcommand{\insum}{\sum\nolimits}		

\newmacro{\const}{c}		
\newmacro{\Const}{\rho}		
\newmacro{\coefalt}{\mu}		
\usepackage{xparse}
\NewDocumentCommand{\coef}{O{\lambda}}{\debug{#1}}
\newmacro{\param}{\theta}		
\newmacro{\params}{\Theta}		

\newmacro{\pexp}{p}		
\newmacro{\qexp}{q}		
\newmacro{\rexp}{r}		



\newmacro{\beforestart}{0}		
\newmacro{\start}{1}		
\newmacro{\afterstart}{2}		
\newmacro{\running}{\start,\afterstart,\dotsc}		
\newmacro{\halfrunning}{1,3/2,2\dotsc}		

\newmacro{\run}{t}		
\newmacro{\runalt}{s}		
\newmacro{\runaltalt}{\tau}		
\newmacro{\nRuns}{T}		
\newmacro{\runs}{\mathcal{\nRuns}}		

\newmacro{\state}{\denmat}		
\newmacro{\statealt}{\mathbf{Y}}		
\newmacro{\statealtalt}{\mathbf{Z}}		


\newcommand{\init}[1][\state]{\debug{#1}_{\start}}		


\newcommand{\curr}[1][\state]{\debug{#1}_{\run}}		
\renewcommand{\next}[1][\state]{\debug{#1}_{\run+1}}		


\newmacro{\tstart}{0}		
\newmacro{\timealt}{s}		
\newmacro{\horizon}{T}		

\newmacro{\traj}{x}		
\newmacro{\trajalt}{y}		
\newmacro{\trajaltalt}{z}		

\newmacro{\flowmap}{\Theta}		
\DeclarePairedDelimiterXPP{\flowof}[2]{\flowmap_{#1}}{(}{)}{}{#2}		

\newop{\Nash}{NE}		
\newop{\CE}{CE}		
\newop{\CCE}{CCE}		
\newop{\NI}{NI}		

\newop{\brep}{br}		
\newop{\reg}{Reg}		
\newop{\preg}{\overline{Reg}}		
\newop{\val}{val}		

\newmacro{\strat}{x}		
\newmacro{\stratalt}{\alt\strat}		
\newmacro{\strats}{\mathcal{X}}		

\newmacro{\play}{i}		
\newmacro{\playalt}{j}		
\newmacro{\playaltlalt}{k}		
\newmacro{\nPlayers}{N}		
\newmacro{\players}{\mathcal{\nPlayers}}		

\newmacro{\pure}{\alpha}		
\newmacro{\puredummy}{\kappa}		
\newmacro{\purealt}{\beta}		
\newmacro{\purealtalt}{\gamma}		
\newmacro{\nPures}{A}		
\newmacro{\pures}{\mathcal{\nPures}}		

\newmacro{\loss}{\mathcal{L}}		
\newmacro{\pay}{u}		
\newmacro{\payv}{v}		
\newmacro{\pot}{\Phi}		

\newmacro{\game}{\mathcal{G}}		
\newmacro{\gamefull}{\game(\players,\points,\pay)}		

\newmacro{\fingame}{\Gamma}		
\newmacro{\fingamefull}{\Gamma(\players,\pures,\pay)}		

\newmacro{\gmat}{g}		
\newmacro{\gdist}{\dist_{\gmat}}
\newmacro{\mfld}{M}		
\newmacro{\form}{\omega}		

\newmacro{\tvec}{z}		
\newmacro{\uvec}{u}		

\newmacro{\ball}{\basin}		
\newmacro{\sphere}{\mathbb{S}}		

\newmacro{\graph}{\mathcal{G}}
\newmacro{\vertices}{\mathcal{V}}
\newmacro{\edges}{\mathcal{E}}

\newcommand{\herm}[1][]{\mathbb{H}^{#1}}		
\newcommand{\psd}[1][]{\mathbb{H}_{+}^{#1}}		

\newmacro{\mat}{\mathbf{A}}		
\newmacro{\matval}{\lambda}		
\newmacro{\matvals}{\mathbf{\Lambda}}		
\newmacro{\matvec}{\mathbf{u}}		
\newmacro{\matvecs}{\mathbf{U}}		

\newmacro{\matalt}{\mathbf{M}}		
\newmacro{\hmat}{\mathbf{H}}		

\newmacro{\diagmat}{\mathbf{\Lambda}}		
\newmacro{\unitmat}{\mathbf{U}}		

\newop{\row}{row}		
\newop{\col}{col}		

\newmacro{\ones}{\mathbf{1}}		
\newmacro{\eye}{\mathbf{I}}		
\newmacro{\zer}{\mathbf{0}}		


\DeclareMathOperator{\tr}{tr}		
\DeclarePairedDelimiterXPP{\trof}[1]{\tr}{[}{]}{}{#1}

\DeclarePairedDelimiterXPP{\norm}[1]{}{\lVert}{\rVert}{_{\debug{F}}}{#1}		
\DeclarePairedDelimiterXPP{\dnorm}[1]{}{\lVert}{\rVert}{_{\debug{F}}}{#1}		

\DeclarePairedDelimiterXPP{\onenorm}[1]{}{\lVert}{\rVert}{_{1}}{#1}		
\DeclarePairedDelimiterXPP{\twonorm}[1]{}{\lVert}{\rVert}{_{2}}{#1}		
\DeclarePairedDelimiterXPP{\supnorm}[1]{}{\lVert}{\rVert}{_{\infty}}{#1}		

\DeclarePairedDelimiter{\bra}{\langle}{\rvert}		
\DeclarePairedDelimiter{\ket}{\lvert}{\rangle}		
\DeclarePairedDelimiterX{\braket}[2]{\langle}{\rangle}{#1\mathopen{}\delimsize\vert\mathopen{}#2}
\DeclarePairedDelimiter{\obs}{\langle}{\rangle}		

\newmacro{\vecspace}{\mathcal{V}}		
\newmacro{\subspace}{\mathcal{W}}		

\newmacro{\coord}{\pure}		
\newmacro{\coordalt}{\purealt}		
\newmacro{\coordaltalt}{\purealtalt}		
\newmacro{\nCoords}{d}		
\newmacro{\dims}{\nCoords}		
\newmacro{\vdim}{\nCoords}		

\newmacro{\pvec}{z}		
\newmacro{\pvecalt}{r}		
\newmacro{\bvec}{e}		
\newmacro{\bvecs}{\mathcal{E}}		
\newmacro{\cvec}{b}     
\newmacro{\cvecalt}{d}     

\newmacro{\pspace}{\vecspace}		
\newmacro{\dspace}{\vecspace^{\ast}}		

\newmacro{\dvec}{\dpoint}		
\newmacro{\dbvec}{\eps}		

\newmacro{\dpoint}{\mathbf{Y}}		
\newmacro{\dpointalt}{\alt\dpoint}		
\newmacro{\dpointaltalt}{\altalt\dpoint}		
\newmacro{\dpoints}{\mathcal{Y}}		

\newmacro{\dstate}{\dmat}		
\newmacro{\dbase}{v}		

\newcommand{\defeq}{\coloneqq}		

\newcommand{\from}{\colon}		

\newmacro{\argdot}{\textrm{\small\textbullet}}		
\newmacro{\fun}{f}		

\newop{\Opt}{Opt}		
\newop{\Sol}{Sol}		
\newop{\gap}{Gap}		
\newop{\orcl}{Or}		

\newmacro{\tfun}{f}		
\newmacro{\obj}{f}		
\newmacro{\objalt}{g}		
\newmacro{\sobj}{F}		

\newmacro{\gvec}{g}		
\newmacro{\oper}{A}		
\newmacro{\vecfield}{v}		
\newmacro{\payfield}{\mathbf{V}}		
\newmacro{\paysignal}{\mathbf{\hat V}}		

\newcommand{\sol}[1][\point]{#1^{\ast}}		
\newmacro{\solvec}{\vecfield(\sol)}		
\newmacro{\solpay}{\eq[\payfield]}		

\newmacro{\test}{\mathbf{P}}

\newmacro{\signal}{V}		
\newmacro{\step}{\gamma}		
\newmacro{\learn}{\eta}		

\newmacro{\vbound}{G}		
\newmacro{\strong}{}		

\newop{\tspace}{T}		
\newop{\tcone}{TC}		
\newop{\dcone}{\tcone^{\ast}}		
\newop{\ncone}{NC}		
\newop{\pcone}{PC}		
\newop{\hull}{\Delta}		

\newmacro{\cvx}{\mathcal{C}}		
\newmacro{\subd}{\partial}		

\newmacro{\minmax}{\mathcal{L}}		

\newmacro{\minvar}{\point_{1}}		
\newmacro{\minvaralt}{\alt\minvar}		
\newmacro{\minvars}{\points_{1}}		
\newmacro{\minsol}{\sol[\minvar]}		

\newmacro{\maxvar}{\point_{2}}		
\newmacro{\maxvaaltr}{\alt\maxvar}		
\newmacro{\maxvars}{\points_{2}}		
\newmacro{\maxsol}{\sol[\maxvar]}		

\newop{\Eucl}{\mathbf{\Pi}}		
\newop{\logit}{\mathbf{\Lambda}}		
\newop{\dkl}{KL}		

\newmacro{\hreg}{h}		
\newmacro{\hconj}{\hreg^{\ast}}		
\newmacro{\breg}{D}		
\newmacro{\mprox}{P}		
\newmacro{\mirror}{\mathbf{Q}}		
\newmacro{\fench}{F}		
\newmacro{\depth}{H}		
\newmacro{\hstr}{K}		
\newmacro{\hker}{\theta}		

\newmacro{\proxdom}{\points_{\hreg}}		
\newmacro{\proxdomi}{\points_{\hreg_{\play}}}		
\newmacro{\zone}{\mathbb{D}}		

\DeclarePairedDelimiterXPP{\proxof}[2]{\mprox_{#1}}{(}{)}{}{#2}		


\newmacro{\point}{x}		
\newmacro{\pointalt}{\alt\point}		
\newmacro{\pointaltalt}{\altalt\point}		
\newmacro{\points}{\mathcal{X}}		
\newmacro{\intpoints}{\relint\points}		

\newmacro{\base}{\mathbf{P}}		
\newmacro{\basealt}{\mathbf{Q}}		
\newmacro{\basealtalt}{u}		

\newmacro{\open}{\mathcal{U}}		
\newmacro{\closed}{\mathcal{C}}		
\newmacro{\cpt}{\mathcal{K}}		
\newmacro{\nhd}{\mathcal{U}}		
\newmacro{\nhdalt}{\alt\nhd}		

\newop{\ex}{\mathbb{E}}		
\newop{\prob}{\mathbb{P}}		
\newop{\Var}{Var}		
\newop{\simplex}{\hull}		

\providecommand\given{}		

\DeclarePairedDelimiterXPP{\exof}[1]{\ex}{[}{]}{}{
\renewcommand\given{\nonscript\,\delimsize\vert\nonscript\,\mathopen{}} #1}

\DeclarePairedDelimiterXPP{\probof}[1]{\prob}{(}{)}{}{
\renewcommand\given{\nonscript\:\delimsize\vert\nonscript\:\mathopen{}} #1}

\DeclarePairedDelimiterXPP{\oneof}[1]{\one}{\{}{\}}{}{
\renewcommand\given{\nonscript\,\delimsize\vert\nonscript\,\mathopen{}} #1}

\newmacro{\sample}{\omega}		
\newmacro{\samples}{\Omega}		

\newmacro{\seed}{\theta}		
\newmacro{\seeds}{\Theta}		

\newmacro{\filter}{\mathcal{F}}		
\newmacro{\probspace}{(\samples,\filter,\prob)}		

\newmacro{\history}{\mathcal{H}}		

\newcommand{\as}{\debug{\textpar{a.s.}}\xspace}		
\newmacro{\event}{E}       
\newmacro{\eventalt}{H}       

\newmacro{\mean}{\mu}		
\newmacro{\sdev}{\sigma}		
\newmacro{\variance}{\sdev^{2}}		

\newcommand{\est}[1]{\hat #1}		

\newmacro{\proper}{\tau}		

\newmacro{\error}{Z}		
\newmacro{\noise}{U}		
\newmacro{\bias}{b}		
\newmacro{\brown}{W}		


\newmacro{\sbound}{M}		
\newmacro{\bbound}{B}		
\newmacro{\noisepar}{\sdev}		
\newmacro{\noisevar}{\variance}		


\addauthor[Kyriakos]{KL}{magenta}

\newmacro{\evector}{\mathbf{u}}
\newmacro{\evalue}{x}

\newmacro{\lag}{\mathcal{L}}		
\newmacro{\hit}{\tau}		
\newmacro{\cont}{\mathcal{D}}		
\newmacro{\limitden}{\denmat_{\infty}}		
\newmacro{\limitnhd}{\strats^{*}}		
\newmacro{\bench}{\mathbf{P}}		
\newmacro{\invar}{\mu}		
\newmacro{\quotientmap}{q}		
\newmacro{\quotientspace}{\dmats_0}		
\newmacro{\quotientmirror}{\mirror_0}		
\newmacro{\Leb}{\lambda}		
\newmacro{\limitrank}{r}		
\newmacro{\limitnhdrank}{\limitnhd_\limitrank}		
\newmacro{\basin}{\nhd_0}		
\newmacro{\denmateq}{\denmat^*}		

\addauthor[Pan]{PM}{MediumBlue}

\newop{\IWE}{IWE}

\newmacro{\hilbert}{\mathcal{H}}		
\newmacro{\quant}{\psi}		
\newmacro{\quantalt}{\alt\psi}		
\newmacro{\quants}{\Psi}		

\newmacro{\povm}{\mathbf{P}}		
\newmacro{\pdist}{P}		
\newmacro{\outcome}{\sample}		
\newmacro{\outcomes}{\samples}		
\newmacro{\payobs}{U}		

\newmacro{\denop}{\chi}		
\newmacro{\denmat}{\mathbf{X}}		
\newmacro{\denval}{x}		
\newmacro{\denvals}{\diag(\mathbf{x})}		
\newmacro{\denvec}{\mathbf{u}}		
\newmacro{\denvecs}{\mathbf{U}}		

\newmacro{\denmatalt}{\alt\denmat}		
\newmacro{\denmats}{\boldsymbol{\mathcal{X}}}		
\newmacro{\refmat}{\mathbf{R}}		

\newmacro{\dmat}{\mathbf{Y}}		
\newmacro{\dmatalt}{\alt\dmat}		
\newmacro{\dmats}{\boldsymbol{\mathcal{Y}}}		

\newmacro{\tanmat}{\mathbf{Z}}		
\newmacro{\tanmats}{\boldsymbol{\mathcal{Z}}}		

\newmacro{\pflow}{\boldsymbol{\chi}}		
\DeclarePairedDelimiterXPP{\pflowof}[2]{\pflow_{#1}}{(}{)}{}{#2}		

\newmacro{\dflow}{\boldsymbol{\psi}}		
\DeclarePairedDelimiterXPP{\dflowof}[2]{\dflow_{#1}}{(}{)}{}{#2}		

\newmacro{\pnhd}{\boldsymbol{\mathcal{U}}}		
\newmacro{\pnhdalt}{\alt\pnhd}		

\newmacro{\dnhd}{\boldsymbol{\mathcal{W}}}		

\newmacro{\limpoint}{\hat\denmat}		

\newmacro{\payent}{V}		
\newmacro{\denent}{X}		

\newcommand{\eq}{\sol[\denmat]}		

\newmacro{\qgame}{\mathcal{Q}}		
\newmacro{\qgamefull}{\qgame(\players,\quants,\pay)}		
\newmacro{\qgamecont}{\qgame(\players,\denmats,\pay)}		

\newmacro{\energy}{F}		

\newmacro{\paymat}{\mathbf{W}}       

\newmacro{\A}{\boldsymbol{\mathcal{A}}}		

\newmacro{\adjust}{\mathbf{P}}
\newmacro{\dir}{\mathbf{Z}}

\newmacro{\qmat}{\mathbf{Z}}		
\newmacro{\qmats}{\boldsymbol{\mathcal{Z}}}		
\newmacro{\qval}{z}		

\newmacro{\qnhd}{\boldsymbol{\mathcal{W}}}		
\newmacro{\qmap}{\boldsymbol{\Pi}}		

\newmacro{\qflow}{\boldsymbol{\zeta}}		
\DeclarePairedDelimiterXPP{\qflowof}[2]{\qflow_{#1}}{(}{)}{}{#2}		

\newmacro{\set}{\mathcal{S}}

\newmacro{\diagbasis}{\mathbf{E}}
\newmacro{\nondiagbasis}{\mathbf{e}}
\newmacro{\diracmat}{\mathbf{\Delta}}

\newmacro{\dirs}{\bvecs^{\pm}} 

\newmacro{\safe}{r}   
\newmacro{\pivot}{\denmat^{(\radius)}} 

\newmacro{\ergodic}{\bar\denmat} 
\newmacro{\radius}{\delta} 
\newmacro{\conf}{\eta}

\newmacro{\playone}{1} 
\newmacro{\playtwo}{2}	
\newmacro{\lossval}{\loss^{*}}	
\newmacro{\Dim}{D}	

\newmacro{\smooth}{L}		
\newmacro{\lips}{G}	
\newmacro{\bound}{B}	

\newmacro{\plusXi}{\Xi^{(+)}}	
\newmacro{\minusXi}{\Xi^{(-)}}	
\newmacro{\sign}{s}
\newmacro{\rem}{R}	
\newmacro{\hmax}{H}	%

\newmacro{\nhdvs}{\pnhd_{\text{vs}}}
\newmacro{\nhdone}{\pnhd_{\eps}}
\newmacro{\nhdtwo}{\pnhd_{\eps/4}}

\newmacro{\pmt}{\eps'}
\newmacro{\serror}{\Psi}		
\newmacro{\snoise}{M}		
\newmacro{\sbias}{Z}		
\newmacro{\almost}{\tilde D}
\newmacro{\maxnorm}{}


\begin{document}


\title
[Payoff-based learning in quantum games]
{Payoff-based learning with\\matrix multiplicative weights in quantum games}

\author
[P.~Mertikopoulos]
{Kyriakos Lotidis$^{\ast}$}
\address{$^{\ast}$\,%
Stanford University.}
\EMAIL{klotidis@stanford.edu}
\author
[P.~Mertikopoulos]
{Panayotis Mertikopoulos$^{\sharp}$}
\address{$^{\sharp}$\,%
Univ. Grenoble Alpes, CNRS, Inria, Grenoble INP, LIG, 38000 Grenoble, France.}
\EMAIL{panayotis.mertikopoulos@imag.fr}
\author
[N.~Bambos]
{\\Nicholas Bambos$^{\ast}$}
\EMAIL{bambos@stanford.edu}
\author
[J.~Blanchet]
{Jose Blanchet$^{\ast}$}
\EMAIL{jose.blanchet@stanford.edu}

\subjclass[2020]{%
Primary 91A10, 91A26, 37N40; secondary 68Q32, 81Q93}

\keywords{%
Quantum games;
Nash equilibrium;
matrix multiplicative weights;
bandit feedback.}

\newacro{LHS}{left-hand side}
\newacro{RHS}{right-hand side}
\newacro{iid}[i.i.d.]{independent and identically distributed}

\newacro{KW}{Kiefer\textendash Wolfowitz}
\newacro{NE}{Nash equilibrium}
\newacroplural{NE}[NE]{Nash equilibria}
\newacro{VI}{variational inequality}
\newacroplural{VI}[VIs]{variational inequalities}

\newacro{DGF}{distance-generating function}
\newacro{KKT}{Karush\textendash Kuhn\textendash Tucker}
\newacro{ODE}{ordinary differential equation}

\newacro{PVM}{projection-valued measure}
\newacro{POVM}{positive operator-valued measure}
\newacro{FTRL}{``follow the regularized leader''}
\newacro{FTQL}{``follow the quantum regularized leader''}
\newacro{q-FTRL}{``quantum follow the regularized leader''}
\newacro{QRD}{quantum replicator dynamics}

\newacro{1PE}{one-point estimator}
\newacro{2PE}{two-point estimator}
\newacro{MXW}{matrix exponential weights}
\newacro{MRD}{matrix replicator dynamics}\newacro{PVM}{projection-valued measure}
\newacro{POVM}{positive operator-valued measure}
\newacro{FTRL}{``follow the regularized leader''}
\newacro{FTQL}{``follow the quantum leader''}
\newacro{MEW}{``matrix exponential weights''}
\newacro{EW}{``multiplicative\,/\,exponential weights''}
\newacro{q-FTRL}{``quantum follow the regularized leader''}
\newacro{QRD}{quantum replicator dynamics}
\newacro{GAN}{generative adversarial network}
\newacro{QGAN}{quantum generative adversarial network}

\newacro{MXW}{matrix exponential weights}
\newacro{MMW}{matrix multiplicative weights}
\newacro{3MW}{minimal-information matrix multiplicative weights}
\newacro{BMMW}{bandit \ac{MMW}}
\newacro{RD}{replicator dynamics}
\newacro{MRD}{matrix replicator dynamics}
\newacro{IWE}{importance weighted estimator}
\newacro{SPSA}{single-point stochastic approximation}

\newacro{VS}{variational stability}

\begin{abstract}
%
%
In this paper, we study the problem of learning in quantum games \textendash\ and other classes of semidefinite games \textendash\ with scalar, payoff-based feedback.
For concreteness, we focus on the widely used \acdef{MMW} algorithm and, instead of requiring players to have full knowledge of the game (and/or each other's chosen states), we introduce a suite of \acdef{3MW} methods tailored to different information frameworks.
The main difficulty to attaining convergence in this setting is that, in contrast to classical finite games, quantum games have an infinite continuum of pure states (the quantum equivalent of pure strategies), so standard importance-weighting techniques for estimating payoff vectors cannot be employed.
Instead, we borrow ideas from bandit convex optimization and we design a zeroth-order gradient sampler adapted to the semidefinite geometry of the problem at hand.
As a first result, we show that the \ac{3MW} method with deterministic payoff feedback retains the $\bigoh(1/\sqrt{\nRuns})$ convergence rate of the vanilla, full information \ac{MMW} algorithm in quantum min-max games, even though the players only observe a single scalar.
Subsequently, we relax the algorithm's information requirements even further and we provide a \ac{3MW} method that only requires players to observe a random realization of their payoff observable, and converges to equilibrium at an $\bigoh(\nRuns^{-1/4})$ rate.
Finally, going beyond zero-sum games, we show that a regularized variant of the proposed \ac{3MW} method guarantees local convergence with high probability to all equilibria that satisfy a certain first-order stability condition.
\end{abstract}

\maketitle

\allowdisplaybreaks		
\acresetall		

\section{Introduction}
\label{sec:introduction}

The integration of quantum information theory into computer science and machine learning \cite{Pre18,AABB+19,ZWDC+20} has the potential ofy providing faster and more efficient computing resources, new encryption and security protocols, and improved machine learning algorithms, enabling advancements in areas such as quantum cryptography, shadow tomography, quantum GANs, and adversarial learning \cite{Aar20,DK18,CHL19,LW18}.
As a well-known example, Google's ``Sycamore'' $54$-qubit processor recently showcased this ``quantum advantage'' by training an autonomous vehicle model in less than $200$ seconds \cite{AABB+19}, a fact made possible by the ability of quantum computers to prepare superpositions of qubits that exceed the operational capabilities of standard Boolean gates.

Deploying such models within a multi-agent context, such as the utilization of QGANs or autonomous vehicles, leads to a significant transformation compared to classical non-cooperative environments. 
Indeed, unlike classical games (where a mixed strategy is a probabilistic mixture of the underlying pure strategies), quantum games utilize mixed states, which represent probabilistic mixtures of quantum projectors. 
As a consequence, a mixed quantum state can yield payoffs that cannot be expressed as a convex combination of classical pure strategies. 

In light of this, quantum learning has drawn significant attention in recent years \cite{JW09,ACHK+18,JPS22,Apeldoorn2019,Li2019SublinearQA,Li_Wang_Chakrabarti_Wu_2021}.
In a multi-agent context, the most widely used framework is the so-called \acdef{MMW} algorithm \cite{JW09,JPS22,LPSV23,ACHK+18,chen2022adaptive}:
First introduced by \citet{TRW05} in the context of matrix and dictionary learning, \ac{MMW} can be viewed as a semidefinite analogue of the standard Hedge / EXP3 methods for multi-armed bandits \cite{Vov90, LW94, ACBFS95}, and is a special case of the mirror descent family of algorithms \cite{NY83}.
Specifically, in the contrete setting of two-player, zero-sum quantum games, \citet{JW09} showed that players using the \ac{MMW} algorithm can learn an $\eps$-equilibrium in $\bigoh(1/\eps^{2})$ iterations \textendash\ or, in terms of speed of convergence after $\nRuns$ iterations, they converge to equilibrium at a $\bigoh(1/\sqrt{\nRuns})$ rate.

To the best of our knowledge, this result remains the tightest known bound for equilibrium learning in quantum games \textendash\ and the more general class of semidefinite games \cite{ITT22}.
At this point, we highlight that we focus on classical computing algorithms for solving quantum games, unlike recent results \cite{gao2023logarithmicregret,brandao2017quantum} that employ quantum algorithms to solve classical games and semidefinite programs.
Building on \cite{JW09}, \citet{JPS22} studied its continuous-time analogue \textendash\ the \acdef{QRD} \textendash\ in quantum min-max games, focusing on the recurrence and volume conservation properties of the players' actual trajectory of play.
Going beyond the min-max case, \cite{LMB23} examined the convergence of the dynamics of \acdef{FTQL}, a class of continuous-time dynamics that includes the \ac{QRD} as a special case.
The main result of \cite{LMB23} was that the only states that are asymptotically stable under the (continuous-time) dynamics of \ac{FTQL} are those that satisfy a certain first-order stationarity condition known as \emph{variational stability} \cite{MZ19,MHC23}.
In a similar line of work, \citet{LPSV23} studied the continuous-time \ac{QRD}, and discrete-time \ac{MMW} in quantum potential games, utilizing a Riemannian metric to obtain a gradient flow in the spirit of \cite{Mer12,MM13b}.

\para{Our contributions in the context of previous work}

All works mentioned above, in both continuous and discrete time, assume full information, \ie players have access to their individual payoff gradients \textendash\ which, among others, might imply that they have full knowledge of the game.
However, this condition is rarely met in online learning environments where players only observe their in-game payoffs;
this is precisely the starting point of our paper which aims to derive a convergent payoff-based, gradient-free variant of \ac{MMW} algorithm for learning in quantum games.

A major roadblock in this is that standard approaches from learning in finite games fail in the quantum setup for two reasons:
First and foremost, there is a \emph{continuum} of pure states available to every player, unlike classical finite games where there is only a \emph{finite} set of pure actions. 
Second, even after the realization of the pure states of the players, there is an inherent uncertainty and randomness due to the payoff-generating quantum process (an aspect that has no classical counterpart).
To overcome this hurdle, we employ a continuous-action reformulation of quantum games, and we leverage techniques from bandit convex optimization for estimating the players' payoff gradients.

Our first contribution is a variant of \ac{MMW} that only requires mixed payoff observations and achieves an $\bigoh(1/\sqrt{\nRuns})$ equilibrium convergence rate in two-player zero-sum quantum games, matching the rate of the full information \ac{MMW} in \cite{JW09}.
Then, to account for information-starved environments where players are only able to observe their in-game, realized payoff observable, we also develop a bandit variant of \ac{MMW} which utilizes a single-point gradient estimation technique in the spirit of \cite{Spa92} and achieves an $\bigoh(\nRuns^{-1/4})$ equilibrium convergence rate.
Finally, we also examine the behavior of the \ac{MMW} algorithm with bandit information in general $\nPlayers$-player games, where we show that variationally stable equilibria are locally attracting with high probability.

Importantly, the above results transfer to more general games with a semidefinite structure \textendash\ such as multi-agent covariance matrix optimization in signal processing, energy efficiency maximization in multi-antenna systems, etc. \cite{YRBC04,MM16,MBNS17}.
While we do not provide a complete theory, we discuss a number of non-quantum applications that showcase how our results can be generalized further.

\para{Notation}

Given a (complex) Hilbert space $\hilbert$, we will use Dirac's bra-ket notation and write $\ket{\quant}$ for an element of $\hilbert$ and $\bra{\quant}$ for its adjoint;
otherwise, when a specific basis is implied by the context, we will use the dagger notation ``$\dag$'' to denote the Hermitian transpose $\quant^{\dag}$ of $\quant$.
We will also write $\herm[\vdim]$ for the space of $\vdim\times\vdim$ Hermitian matrices, and $\psd[\vdim]$ for the cone of positive-semidefinite matrices in $\herm[\vdim]$.
Finally, we denote by $\norm{\mat} = \sqrt{\trof{\mat^{\dag}\mat}}$ the Frobenius norm of $\mat$ in $\herm[\vdim]$.

\section{Problem setup and preliminaries}
\label{sec:prelims}

We begin by reviewing some basic notions from the theory of quantum games, mainly intended to set notation and terminology;
for a comprehensive introduction, see \cite{GW07}.
To streamline our presentation, we introduce the primitives of quantum games in a $2$-player setting before treating the general case.

\para{Quantum games}

Following \cite{EWL99,GW07}, a $2$-player \emph{quantum game} consists of the following:
\begin{enumerate}
\itemsep0em
\item
Each player $\play\in\players\defeq\{\playone,\playtwo\}$ has access to a complex Hilbert space $\hilbert_{\play} \cong \C^{\vdim_{\play}}$ describing the set of (pure) \emph{quantum states} available to the player (typically a discrete register of qubits).
A quantum state is an element $\quant_{\play}$ of $\hilbert_{\play}$ with unit norm, so the set of pure states is the unit sphere $\quants_{\play} \defeq \setdef{\quant_{\play}\in\hilbert_{\play}}{\norm{\quant_{\play}}=1}$ of $\hilbert_{\play}$.
We will write $\quants \defeq \quants_{\playone} \times \quants_{\playtwo}$ for the space of all ensembles $\quant = (\quant_{\playone},\quant_{\playtwo})$ of pure states $\quant_{\play}\in\quants_{\play}$ that are independently prepared by each player.
\item
The rewards that players receive are based on their individual \emph{payoff functions} $\pay_{\play}\from\quants \to \R$, and they are derived through a \acdef{POVM} quantum measurement process.
Following \cite{CN10}, this unfolds as follows:
Given a \emph{finite} set of \emph{measurement outcomes} $\outcomes$ that a referee can observe from the players' quantum states (\eg measure a player-prepared qubit to be ``up'' or ``down''), each outcome $\outcome\in\outcomes$ is associated to a positive semi-definite operator $\povm_{\outcome}\from\hilbert\to\hilbert$ defined on the tensor product $\hilbert \defeq \hilbert_{\playone} \otimes \hilbert_{\playtwo}$ of the players' individual state spaces.
We further assume that $\sum_{\outcome\in\outcomes} \povm_{\outcome} = \eye$ so the probability of observing $\outcome\in\outcomes$ at state $\quant\in\quants$ is
\(
\pdist_{\outcome}(\quant)
	= \bra{\quant_{\playone}\otimes\quant_{\playtwo}}
		\povm_{\outcome}
		\ket{\quant_{\playone}\otimes\quant_{\playtwo}}.
\)

\item
The payoff of each player is then generated by this measurement process via a \emph{payoff observable} $\payobs_{\play}\from\outcomes\to\R$:
specifically, the measurement $\outcome$ is drawn from $\outcomes$ based on the players' state profile $\quant = (\quant_{1},\quant_{2})$, and each player $\play\in\players$ receives as reward the quantity $\payobs_{\play}(\outcome)$.
Accordingly, the player's expected payoff at state $\quant\in\quants$ is $\pay_{\play}(\quant)
	\defeq \obs{\payobs_{\play}}
	\equiv \insum_{\outcome} \pdist_{\outcome}(\quant) \, \payobs_{\play}(\outcome).$
\end{enumerate}
A \emph{quantum game}
is then defined as a tuple $\qgame \equiv \qgamefull$ with players, states, and payoff as above.

\para{Mixed states}

Apart from pure states, each player $\play\in\players$ may prepare probabilistic mixtures thereof, known as \emph{mixed states}.
These mixed states differ from mixed strategies used in classical, finite games as they do not correspond to convex combinations of their pure counterparts;
instead, given a family of pure quantum states $\quant_{\play\pure_{\play}} \in \quants_{\play}$ indexed by $\pure_{\play}\in\pures_{\play}$, a mixed state is described by a \emph{density matrix} of the form
\begin{equation}
\label{eq:denmat}
\denmat_{\play}
	= \sum\nolimits_{\pure_{\play}\in\pures_{\play}}
		\strat_{\play\pure_{\play}}
		\ket{\quant_{\play\pure_{\play}}}
		\bra{\quant_{\play\pure_{\play}}}
\end{equation}
where
the \emph{mixing weights} $\strat_{\play\pure_{\play}} \geq 0$ of each $\quant_{\play\pure_{\play}}$ are normalized so that $\tr\denmat_{\play} = 1$.
By Born's rule, this means that
the probability of observing $\outcome\in\outcomes$ under $\denmat = (\denmat_{\playone},\denmat_{\playtwo})$ is
\begin{equation}
\label{eq:prob-density}
\pdist_{\outcome}(\denmat)
	= \insum_{\pure_{\playone} \in \pures_{\playone}}
		\insum_{\pure_{\playtwo}\in \pures_{\playtwo}} \strat_{\playone,\pure_{\playone}}\strat_{\playtwo,\pure_{\playtwo}} \pdist_{\outcome}(\quant_\pure).
\end{equation}
where $\quant_{\pure} = \quant_{\playone,\pure_{\playone}} \otimes \quant_{\playtwo,\pure_{\playtwo}}$.
Therefore, in a slight abuse of notation, the expected payoff of player $\play\in\players$ under $\denmat$ will be 
$\pay_{\play}(\denmat) = \sum_{\pure\in\pures} \strat_{\pure}\pay_{\play}(\quant_{\pure})$.
which, equivalently, can be written as:
\begin{align}
\label{eq:pay-mat}
\pay_{\play}(\denmat)
    &= \insum_{\outcome\in\outcomes} \insum_{\pure\in\pures} \strat_{\pure} \pay_{\play}(\quant_{\pure}) \payobs_{\play}(\outcome).
\end{align}
This gives a succint representation of the payoff structure of $\qgame$ \textendash\ see also \cref{eq:pay-lin} below.

%

\para{Continuous game reformulation}

In view of the above, treating a quantum game as a ``tensorial'' extension of a finite game can be misleading.
For our purposes, it would be more suitable to treat a quantum game as a \emph{continuous game} where each player $\play\in\players$ controls a matrix variable $\denmat_{\play}$ drawn from the ``spectraplex'' defined as $\denmats_{\play} = \setdef{\denmat_{\play}\in\psd[\vdim_{\play}]}{\tr\denmat_{\play} = 1}$.
In this interpretation, the players' payoff functions $\pay_{\play}\from\denmats \equiv \denmats_{\playone} \times \denmats_{\playtwo} \to \R$ are \emph{linear} in each player's density matrix $\denmat_{\play} \in \denmats_{\play}$, $\play\in\players$.
Since $\pay_{\playone},\pay_{\playtwo}$ are linear in $\denmat_{\playone}$ and $\denmat_{\playtwo}$, the individual payoff gradients of each player will be given by
\begin{align}
\label{eq:payfield-def}
    \payfield_{\playone}(\denmat)
	 \defeq \nabla_{\denmat_{\playone}^{\top}} \pay_{\playone}(\denmat) 
\quad \text{and} \quad
\payfield_{\playtwo}(\denmat) 
	  \defeq \nabla_{\denmat_{\playtwo}^{\top}} \pay_{\playtwo}(\denmat) 
\end{align}
so we can further write each player's payoff function as
\begin{equation}
\label{eq:pay-lin}
\pay_{\playone}(\denmat)
	= \trof{\denmat_{\playone} \payfield_{\playone}(\denmat)}
\quad \text{and} \quad
\pay_{\playtwo}(\denmat)
	= \trof{\denmat_{\playtwo} \payfield_{\playtwo}(\denmat)}
	\quad
	\text{for all $\denmat\in\denmats$}.
\end{equation}

Since $\denmats$ is compact and each $\pay_{\play}$ is multilinear in $\denmat$, the players' payoff functions are automatically bounded, Lipschitz continuous and Lipschitz smooth, \ie there exist constants $\bound_{\play}$, $\lips_{\play}$ and $\smooth_{\play}$, $\play\in\players$, such that, for all $\denmat,\denmatalt \in \denmats$, we have:
\begin{enumerate}
\item
Boundedness:
	\tabto{10em}
	$\abs{\pay_\play(\denmat)} \leq \bound_\play$
\item
Lipschitz continuity:
	\tabto{10em}
	$\abs{\pay_\play(\denmat) - \pay_\play(\denmatalt)} \leq \lips_\play\norm{\denmat - \denmatalt}$
\item
Lipschitz smoothness:
	\tabto{10em}
	$\dnorm{\payfield_\play(\denmat) - \payfield_\play(\denmatalt)} \leq \smooth_\play\norm{\denmat - \denmatalt}$
\end{enumerate}
 

\para{\Acl{NE}}

The most widely used solution concept in game theory is that of a \acdef{NE}.
In our context, it is mixed profile $\eq \in \denmats$ from which no player has incentive to deviate, \ie $\pay_{\playone}(\eq) \geq \pay_{\playone}(\denmat_{\playone};\eq_{\playtwo})$ and $\pay_{\playtwo}(\eq) \geq \pay_{\playtwo}(\eq_{\playone};\denmat_{\playtwo})$ for all $\denmat_{\playone}\in\denmats_{\playone}$, $\denmat_{\playtwo}\in\denmats_{\playtwo}$.
Since $\denmats_{\play}$ is convex and $\pay_{\play}$ linear in $\denmat_{\play}$, the existence of \aclp{NE} follows from the Debreu's theorem \cite{Deb52}. 

\para{Zero-sum quantum games}

In the case where $\pay_{\playone} = - \pay_{\playtwo}$, and setting $\loss \from \denmats_{\playone}\times\denmats_{\playtwo} \to \R$, the \aclp{NE} of $\qgame$ are the saddle points of $\loss$, \ie the solutions of the minimax problem
\begin{equation}
\label{eq:minimax}
\max_{\denmat_{\playone} \in \denmats_{\playone}} \min_{\denmat_{\playtwo} \in \denmats_{\playtwo}} \loss(\denmat_{\playone},\denmat_{\playtwo}) = \min_{\denmat_{\playtwo} \in \denmats_{\playtwo}} \max_{\denmat_{\playone} \in \denmats_{\playone}}  \loss(\denmat_{\playone},\denmat_{\playtwo})
\end{equation}
By Sion's minimax theorem \cite{SM58}, the set of \aclp{NE} is nonempty.
Then, given a \acl{NE} $\eq$, we define the \emph{duality gap} of $\denmat = (\denmat_\playone,\denmat_\playtwo)$ as
\begin{equation}
\gap_{\loss}(\denmat)
	\defeq \loss(\eq_\playone, \denmat_{\playtwo}) - \loss(\denmat_\playone, \eq_{\playtwo})
\end{equation}
so $\gap_{\loss}(\denmat) \geq 0$ with equality if and only if $\denmat$ is itself a \acl{NE}.
In particular, $\denmat$ is an $\eps$-\acl{NE} of $\qgame$ if and only if $\gap_{\loss}(\denmat) \leq \eps$.

\para{Other semidefinite games}

In addition to quantum games, our framework can also be used for learning in other classes of games with a semidefinite structure as per \cite{MBNS17,ITT22}.
As an example, consider the problem of covariance matrix optimization in vector Gaussian multiple-access channels \cite{YRBC04,Tel99,MM16,BMB20}.
In this case, there is a finite set of players indexed by $\play \in \players = \{1,\dotsc,\nPlayers\}$;
each player $\play\in\players$ picks a unit-trace semidefinite matrix $\denmat_{\play} \in \denmats_{\play}$ and their payoff is given by the Shannon\textendash Telatar capacity formula \cite{Tel99}, \viz
\begin{equation}
\label{eq:Shannon}
\txs
\pay_{\play}(\denmat_{1},\dotsc,\denmat_{\nPlayers})
	= \log\det\parens[\big]{\eye + \insum_{\playalt} \hmat_{\playalt} \denmat_{\playalt} \hmat_{\playalt}^{\dag}}
\end{equation}
where each $\hmat_{\play}$ is a player-specific gain matrix \cite{TV05}.
Even though $\pay_{\play}$ is no longer multilinear in $\denmat$, the algorithms we derive later in the paper can be applied to this setting essentially verbatim.

\section{The \acl{MMW} algorithm}
\label{sec:mmw}

Throughout the sequel, we will focus on equilibrium learning in quantum \textendash\ and semidefinite \textendash\ games.
In the context of two-player, zero-sum quantum games, the state-of-the-art method is based on the so-called \acdef{MMW} algorithm \cite{BecTeb03,TRW05,JW09,KSST12} which updates as
\begin{equation}
\label{eq:MMW}
\tag{MMW}
\begin{aligned}
\dmat_{\play,\run+1}
	= \dmat_{\play,\run} + \step_\run\payfield_{\play}(\curr)
	\qquad
\denmat_{\play,\run}
	= \frac{\exp(\dmat_{\play,\run})}{\trof*{\exp(\dmat_{\play,\run})}}
\end{aligned}
\end{equation}
In the above,
\begin{enumerate*}
[(\itshape a\hspace*{1pt}\upshape)]
\item
$\curr = (\denmat_{\playone,\run},\denmat_{\playtwo,\run})$ denotes the players' density matrix profile at each stage $\run=\running$ of the process;
\item
$\payfield_{\play}(\curr)$ is the payoff gradient of player $\play \in \players$ under $\curr$;
\item
$\curr[\dmat]$ is an auxiliary state matrix that aggregates gradient steps over time;
and
\item
$\curr[\step] > 0$, $\run=\running$, is a learning rate (or step-size) parameter that can be freely tuned by the players.
\end{enumerate*}

Importantly, as stated, \eqref{eq:MMW} requires \emph{full information} at the player end:
specifically, at each stage $\run=\running$ of the process, each player $\play \in \players$ must receive
 their individual payoff gradient $\payfield_{\play}(\curr)$ in order to perform the gradient update step in \eqref{eq:MMW}.
Under this assumption, \citet{JW09} showed that the induced empirical frequency of play
\begin{equation}
\label{eq:ergodic}
\ergodic_\nRuns
	= \frac{1}{\nRuns} \sum\nolimits_{\run=\start}^{\nRuns} \curr
\end{equation}
converges to equilibrium at a rate of $\bigoh(1/\sqrt{\nRuns})$ as per the formal result below:

\begin{theorem}[{\citet{JW09}}]
\label{thm:JW}
Suppose that each player of a $2$-player zero-sum game $\qgame$ follows \eqref{eq:MMW} for $\nRuns$ epochs with learning rate $\step = \lips^{-1} \sqrt{2\depth/\nRuns}$ where $\hmax = \log(\vdim_{\playone}\vdim_{\playtwo})$.
Then the players' empirical frequency of play enjoys the bound
\begin{equation}
\label{eq:rate-JW}
\gap_{\loss}(\ergodic_{\nRuns})
	\leq \lips\sqrt{{2\hmax}/{\nRuns}}
\end{equation}
In particular, if \eqref{eq:MMW} is run for $\nRuns = \bigoh(1/\eps^{2})$ iterations, $\ergodic_{\nRuns}$ will be an $\eps$-\acl{NE} of $\qgame$.
\end{theorem}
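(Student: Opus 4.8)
The plan is to reduce the duality-gap bound to a pair of per-player no-regret guarantees and then to establish each guarantee by recognizing \eqref{eq:MMW} as entropic mirror descent (dual averaging) on the spectraplex; throughout I use that $\pay_\playone = \loss = -\pay_\playtwo$ is bilinear and linear in each coordinate via \eqref{eq:pay-lin}. For the reduction, define the individual regrets of the realized play $(\curr)_{\run=\start}^{\nRuns}$,
\[
\reg_\playone(\nRuns)
  = \max_{\denmat_\playone\in\denmats_\playone}
      \sum_{\run=\start}^{\nRuns}\bracks[\big]{\loss(\denmat_\playone,\denmat_{\playtwo,\run}) - \loss(\denmat_{\playone,\run},\denmat_{\playtwo,\run})},
\quad
\reg_\playtwo(\nRuns)
  = \max_{\denmat_\playtwo\in\denmats_\playtwo}
      \sum_{\run=\start}^{\nRuns}\bracks[\big]{\loss(\denmat_{\playone,\run},\denmat_{\playtwo,\run}) - \loss(\denmat_{\playone,\run},\denmat_\playtwo)}.
\]
Using bilinearity to pull the time-average inside each argument, $\sum_\run\loss(\denmat_\playone,\denmat_{\playtwo,\run}) = \nRuns\,\loss(\denmat_\playone,\ergodic_{\playtwo,\nRuns})$ and symmetrically, so that
\[
\reg_\playone(\nRuns) + \reg_\playtwo(\nRuns)
  = \nRuns\Bigl[\max_{\denmat_\playone\in\denmats_\playone}\loss(\denmat_\playone,\ergodic_{\playtwo,\nRuns})
      - \min_{\denmat_\playtwo\in\denmats_\playtwo}\loss(\ergodic_{\playone,\nRuns},\denmat_\playtwo)\Bigr]
  \geq \nRuns\,\gap_\loss(\ergodic_\nRuns),
\]
where the inequality restricts the outer max and min to the equilibrium coordinates $\eq_\playone,\eq_\playtwo$. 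Hence $\gap_\loss(\ergodic_\nRuns) \leq [\reg_\playone(\nRuns) + \reg_\playtwo(\nRuns)]/\nRuns$, and it remains to bound each $\reg_\play(\nRuns)$.

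For the per-player bound, observe that \eqref{eq:MMW} is exactly lazy mirror descent / dual averaging on $\denmats_\play$ with the von Neumann entropy $\hreg(\denmat_\play) = \trof{\denmat_\play\log\denmat_\play}$ as regularizer (distance-generating function): the Gibbs map $\dmat_\play\mapsto\exp(\dmat_\play)/\trof{\exp(\dmat_\play)}$ is precisely $\nabla\hconj$, so \eqref{eq:MMW} reads $\denmat_{\play,\run} = \nabla\hconj(\dmat_{\play,\run})$ driven by the aggregated gradients $\dmat_{\play,\run+1} = \dmat_{\play,\run} + \step\,\payfield_\play(\curr)$. Since $\pay_\play$ is linear in $\denmat_\play$, its value against any comparator $\test_\play\in\denmats_\play$ is $\trof{\test_\play\payfield_\play(\curr)}$, so the standard Fenchel-coupling (FTRL energy) telescoping yields
\[
\reg_\play(\nRuns)
  \leq \frac{\breg(\test_\play,\denmat_{\play,\start})}{\step}
     + \frac{\step}{2}\sum_{\run=\start}^{\nRuns}\supnorm{\payfield_\play(\curr)}^2,
\]
where $\breg$ is the quantum relative entropy induced by $\hreg$, $\supnorm{\cdot}$ is the operator norm (dual to the trace norm), and the estimate invokes the $1$-strong convexity of $\hreg$ with respect to the trace norm.

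It remains to insert the constants and optimize. Initializing at the maximally mixed state $\denmat_{\play,\start} = \eye/\vdim_\play$ gives $\breg(\test_\play,\denmat_{\play,\start}) = \trof{\test_\play\log\test_\play} + \log\vdim_\play \leq \log\vdim_\play =: \hmax_\play$, the entropic depth. Lipschitz continuity of $\loss$ in the joint Frobenius norm gives $\norm{\payfield_\playone(\curr)}^2 + \norm{\payfield_\playtwo(\curr)}^2 \leq \lips^2$, and since $\supnorm{\cdot}\leq\norm{\cdot}$ the same bound holds for the operator norms. Summing the two regret bounds, using $\hmax_\playone + \hmax_\playtwo = \log(\vdim_\playone\vdim_\playtwo) = \hmax$ and $\sum_\play\supnorm{\payfield_\play(\curr)}^2 \leq \lips^2$, and dividing by $\nRuns$ as above yields
\[
\gap_\loss(\ergodic_\nRuns)
  \leq \frac{\hmax}{\step\nRuns} + \frac{\step\lips^2}{2},
\]
whose right-hand side is minimized at $\step = \lips^{-1}\sqrt{2\hmax/\nRuns}$, giving exactly $\lips\sqrt{2\hmax/\nRuns}$ as in \eqref{eq:rate-JW}; equating this to $\eps$ gives $\nRuns = \bigoh(1/\eps^2)$.

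The one genuinely delicate point is the matrix character of the mirror-descent step: unlike on the simplex, the exponential map and the accumulated duals do not commute, so one cannot factor $\exp(\dmat_{\play,\run} + \step\payfield_\play(\curr))$ directly. The enabling facts are (a) the $1$-strong convexity of the von Neumann entropy relative to the trace norm, the operator analogue of Pinsker's inequality, which is what licenses the dual (operator-norm) term with the correct constant; and, equivalently, in a direct potential-function derivation, (b) the Golden\textendash Thompson inequality $\trof{\exp(\mat + \matalt)} \leq \trof{\exp(\mat)\exp(\matalt)}$ for Hermitian $\mat,\matalt$, which substitutes for the scalar identity $e^{a+b} = e^{a}e^{b}$ underlying the classical Hedge analysis. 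Everything else is the routine constant bookkeeping above.
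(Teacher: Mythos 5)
Your proof is correct and takes essentially the same route as the paper: \cref{thm:JW} itself is imported from \citet{JW09} without proof, but the paper's own analysis of its payoff-based variants (\cref{thm:2point,thm:1point}) is precisely your argument specialized to exact gradients $\paysignal_{\play,\run}=\payfield_{\play}(\curr)$ \textendash\ the Bregman/energy telescoping of \cref{lem:template} with the equilibrium $\eq$ as comparator, the quantum Pinsker inequality of \cref{lem:strong-conv}, the initialization bound $\breg(\eq,\init)\leq\log(\vdim_{\playone}\vdim_{\playtwo})=\hmax$, and the same optimization of $\step$. The only immaterial difference is that you invoke $1$-strong convexity of the von Neumann entropy with respect to the trace norm and measure gradients in the dual operator norm (with Golden\textendash Thompson as the alternative potential-function route, as in the original Jain\textendash Watrous analysis), whereas the paper's \cref{lem:strong-conv} relaxes this to the Frobenius norm; either choice delivers the stated constant $\lips\sqrt{2\hmax/\nRuns}$.
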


To the best of our knowledge, this guarantee of \citet{JW09} remains the tightest known bound for \acl{NE} learning in $2$-player zero-sum quantum games.
At the same time, \cref{thm:JW} hinges on the players having perfect access to their individual gradients \textendash\ which, among others, might entail full knowledge of the game, observing the other player's density matrix, etc.
Our goal in the sequel will be to relax precisely this assumption and develop a payoff-based variant of \eqref{eq:MMW} that can be employed without stringent information and observability requirements as above.

\section{Matrix learning without matrix feedback} 
\label{sec:estimates}

In an online learning framework, it is more realistic to assume that players observe only the \emph{outcome} of their actions \textendash\ \ie their individual payoffs.
In this information-starved, payoff-based setting, our main goal will be to employ a \acdef{3MW}
algorithm that updates as
\begin{equation}
\label{eq:3MW}
\tag{3MW}
\begin{aligned}
\dmat_{\play,\run+1}
	= \dmat_{\play,\run} + \step_\run\paysignal_{\play,\run}
\qquad
\denmat_{\play,\run}
	= \frac{\exp(\dmat_{\play,\run})}{\trof*{\exp(\dmat_{\play,\run})}}
\end{aligned}
\end{equation}
where $\paysignal_{\play,\run}$ is some payoff-based estimate of the payoff gradient $\payfield_{\play}(\curr)$ of player $\play$ at $\curr$, and all other quantities are defined as per \eqref{eq:MMW}.
In this regard, the main challenge that arises is how to reconstruct each player's payoff gradient matrices when they are not accessible via an oracle.

\subsection{The classical approach: \Aclp{IWE}}

In the context of classical, finite games and multi-armed bandits, a standard approach for reconstructing $\paysignal_{\play,\run}$ is via the so-called \acdef{IWE} \cite{CBL06,BCB12,LS20}.
To state it in the context of finite games, assume that each player has at their disposal a finite set of \emph{pure strategies} $\pure_{\play} \in \pures_{\play}$, and if each player plays $\est\pure_{\play}\in\pures_{\play}$, then, in obvious notation, their individual payoff will be $\est\pay_{\play} = \pay_{\play}(\est\pure_{\play};\est\pure_{-\play})$.
Then, if each player is using a mixed strategy $\strat_{\play} \in \simplex(\pures_{\play})$ to draw their chosen action $\est\pure_{\play}$, the \acdef{IWE} for the payoff of the (possibly unplayed) action $\pure_{\play}\in\pures_{\play}$ of player $\play$ is defined as
\begin{equation}
\label{eq:IWE}
\tag{IWE}
\IWE_{\play\pure_{\play}}
	= \frac{\oneof{\pure_{\play} = \est\pure_{\play}}}{\strat_{\play\pure_{\play}}}
		\pay_{\play}(\est\pure_{\play};\est\pure_{-\play})
	\quad
	\text{for all $\pure_{\play}\in\pures_{\play}$}
\end{equation}
with the assumption that $\strat_{\play}$ has full support, \ie each action $\pure_{\play}\in\pures_{\play}$ has strictly positive probability $\strat_{\play\pure_{\play}}$ of being chosen by the $\play$-th player.%
\footnote{The assumption that $\strat_{\play,\run}$ has full support is only for technical reasons.
In practice, it can be relaxed by using \ac{IWE} with explicit exploration \textendash\ see \cite{LS20} for more details.}

This approach has proven extremely fruitful in the context of multi-armed bandits and finite games where \eqref{eq:IWE} is an essential ingredient of the optimal algorithms for each context \cite{AB10,BCB12,LS20,ZS21}.
However, in our case, there are two insurmountable difficulties in extending \eqref{eq:IWE} to a quantum context:
First and foremost, the quantum regime is characterized by a \emph{continuum} of pure states with highly correlated payoffs (in the sense that quantum states that are close in the Bloch sphere will have highly correlated \ac{POVM} payoff observables);
this comes in stark contrast to the classical regime of finite normal-form games, where players only have to contend with a finite number of actions (with no prior payoff correlations between them).
Secondly, even after the realization of the pure states of the players, there is an inherent uncertainty and randomness due to the quantum measurement process that is invovled in the payoff-generating process;
as such, the players' payoffs are also affected by an exogenous source of randomness which is altogether absent from \eqref{eq:IWE}.

Our approach to tackle these issues will be to exploit the reformulation of a quantum game as a continuous game with multilinear payoffs over the spectraplex (or, rather, a product thereof), and use ideas from bandit convex optimization \textendash\ in the spirit of \cite{FKM05,Kle04} \textendash\ to estimate the players' payoff gradients with minimal, scalar information requirements.
\subsection{Gradient estimation via finite-difference quotients on the spectraplex}

To provide some intuition for the analysis to come, consider first a single-variable smooth function $\obj\from\R \to \R$ and a point $\point \in \R$. Then, for error tolerance $\radius > 0$, a two-point estimate of the derivative of $f$ at $x$ is given by the expression
\begin{equation}
\hat f_x = \frac{f(x+\radius) - f(x-\radius)}{2\radius} 
\end{equation}
Going to higher dimensions, letting $f: \R^d \to \R$ be a smooth function, $\{e_1,\dots,e_d \}$ be the standard basis of $\R^d$ and $s$ drawn from $\{e_1,\dots,e_d \}$ uniformly at random, the estimator
\begin{equation}
\label{eq:2-point-dif}
\hat f_x = \frac{d}{2\radius}\bracks*{f(x+\radius s) - f(x-\radius s)} s
\end{equation}
is a $\bigoh(\radius)$-approximation of the gradient, \ie $\dnorm{\ex_s[\hat f_x] - \nabla f(x)} = \bigoh(\radius)$.
This idea is the basis of the \aclp{KW} stochastic approximation scheme \cite{KW52} and will be the backbone of our work.


Now, to employ this type of estimator for a function over the set of density matrices $\denmats$ in $\herm[\vdim]$, we need to ensure two things: \textit{(i)} the feasibility of the \emph{sampling direction}, and \textit{(ii)} the feasibility of the \emph{evaluation point}. The first caveat is due to the fact that the set of the density matrices forms a lower dimensional manifold in the set of Hermitian operators, and therefore, not all directions from a base of $\herm[\vdim]$ are feasible. The second one is due to the fact that $\denmats$ is bounded, thus, even if the sampling direction is feasible, the evaluation point can lie outside the set $\denmats$. 
We proceed to ensure all this in a series of concrete steps below.

\para{Sampling Directions}

We begin with the issue of defining a proper sampling set for the estimator's finite-difference directions.
To that end, we will first construct an orthonormal basis of the tangent hull $\tanmats = \setdef{\tanmat \in\herm[\vdim]}{\tr\tanmat = 0}$ of $\denmats$, \ie the subspace of traceless matrices of $\herm[\vdim]$.
Note that if $\tanmat \in \tanmats$ then for any $\denmat \in \herm[\vdim]$ it holds
\begin {enumerate*} [label=(\itshape\alph*\upshape)]
\item
$\denmat + \tanmat \in \herm[\vdim]$, and
\item
$\trof{\denmat + \tanmat} = \trof{\denmat}$.
\end {enumerate*} 

Denoting by $\diracmat_{k\ell} \in \herm[\vdim]$ the matrix with $1$ in the $(k,\ell)$-position and $0$'s everywhere else, it is easy to see that the set $\big\{\{\diracmat_{jj}\}_{j=1}^{\vdim} \{\nondiagbasis_{k\ell}\}_{k<\ell}, \{\tilde\nondiagbasis_{k\ell}\}_{k<\ell}\big\}$ is an orthonormal basis of $\herm[\vdim]$, where
\begin{align}
\nondiagbasis_{k\ell} = \frac{1}{\sqrt{2}}\diracmat_{k\ell} +  \frac{1}{\sqrt{2}}\diracmat_{\ell k} \quad & \text{and} \quad
\tilde\nondiagbasis_{k\ell} = \frac{i}{\sqrt{2}}\diracmat_{k\ell} - \frac{i}{\sqrt{2}}\diracmat_{\ell k}
\end{align}
for $1\leq k < \ell\leq\vdim$, where $i$ is the imaginary unit with $i^2 = -1$. The next proposition provides a basis for the subspace $\tanmats$, whose proof lies in the appendix.

\begin{restatable}{proposition}{basis}
\label{prop:basis}
Let $\diagbasis_j$ be defined as $\,\diagbasis_j = \frac{1}{\sqrt{j(j+1)}}\parens*{\diracmat_{11} + \dots + \diracmat_{jj} - j\diracmat_{j+1,j+1}}\,$
for $j = 1,\dots,\vdim-1$. 
Then, the set $\bvecs = \braces*{\{\diagbasis_j\}_{j=1}^{\vdim-1}, \{\nondiagbasis_{k\ell}\}_{k<\ell}, \{\tilde\nondiagbasis_{k\ell}\}_{k<\ell}}$ is an orthonormal basis of $\tanmats$.
\end{restatable}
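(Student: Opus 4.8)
The plan is to verify three things in turn: that every element of $\bvecs$ is a traceless Hermitian matrix, so that $\bvecs \subset \tanmats$; that the elements of $\bvecs$ are pairwise orthonormal with respect to the Frobenius inner product $\langle A,B\rangle = \trof{A^{\dag}B}$; and finally that $\card\bvecs = \dim\tanmats$, so that an orthonormal (hence linearly independent) family of the right cardinality inside $\tanmats$ is forced to be a basis. The bulk of the orthonormality will be inherited from a fact the text already records, and the only genuine computation is confined to the diagonal matrices $\diagbasis_j$.

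For membership I would first note that $\nondiagbasis_{k\ell}$ and $\tilde\nondiagbasis_{k\ell}$ are supported off the diagonal, hence automatically traceless, and are Hermitian by a one-line check (for $\tilde\nondiagbasis_{k\ell}$ one uses $(i\diracmat_{k\ell})^{\dag} = -i\diracmat_{\ell k}$, which gives $\tilde\nondiagbasis_{k\ell}^{\dag}=\tilde\nondiagbasis_{k\ell}$). For the diagonal matrices $\diagbasis_j$, Hermiticity is immediate since they are real and diagonal, while tracelessness holds because their diagonal entries sum to $\underbrace{1+\dots+1}_{j} - j = 0$. This places all of $\bvecs$ inside $\tanmats$. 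For orthonormality, I would invoke the already-stated fact that $\braces*{\{\diracmat_{jj}\}_{j}, \{\nondiagbasis_{k\ell}\}_{k<\ell}, \{\tilde\nondiagbasis_{k\ell}\}_{k<\ell}}$ is an orthonormal basis of $\herm[\vdim]$: in particular the off-diagonal elements of $\bvecs$ are already orthonormal among themselves, and, being supported off the diagonal, are orthogonal to every $\diagbasis_j$, which is diagonal.

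It therefore remains only to treat the diagonal block, and this is the one place requiring an actual computation. Writing $\diagbasis_j$ as the diagonal matrix with diagonal vector $\tfrac{1}{\sqrt{j(j+1)}}(1,\dots,1,-j,0,\dots,0)$ (with $j$ leading ones and the $-j$ in position $j+1$), its squared Frobenius norm is $\tfrac{1}{j(j+1)}(j + j^{2}) = 1$, so each $\diagbasis_j$ is a unit vector. For $j<j'$ the inner product $\langle \diagbasis_j,\diagbasis_{j'}\rangle$ only receives contributions from positions $1,\dots,j+1$, where $\diagbasis_j$ is supported; since $j+1\le j'$, the matrix $\diagbasis_{j'}$ still has entry $1$ at position $j+1$, so the inner product is proportional to $\bigl(\sum_{i=1}^{j}1\bigr) + (-j)\cdot 1 = 0$. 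This telescoping cancellation — which is exactly the Gram–Schmidt structure built into the definition — shows the $\diagbasis_j$ are orthonormal, and hence all of $\bvecs$ is an orthonormal family in $\tanmats$.

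Finally I would close with a dimension count. The space $\herm[\vdim]$ has real dimension $\vdim^{2}$, and $\tanmats$ is the kernel of the nonzero real-linear functional $\tanmat\mapsto\tr\tanmat$, so $\dim\tanmats = \vdim^{2}-1$. On the other hand $\bvecs$ contains $(\vdim-1) + \binom{\vdim}{2} + \binom{\vdim}{2} = (\vdim-1) + \vdim(\vdim-1) = \vdim^{2}-1$ elements, and an orthonormal set of $\vdim^{2}-1$ vectors inside the $(\vdim^{2}-1)$-dimensional space $\tanmats$ is necessarily a basis. Since every step is elementary, I do not expect a serious obstacle; the only point demanding care is the diagonal block, where one must confirm simultaneously the normalization $1/\sqrt{j(j+1)}$ and the cancellation that makes distinct $\diagbasis_j$ orthogonal.
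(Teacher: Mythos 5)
Your proposal is correct and follows essentially the same route as the paper's proof: verify that $\bvecs$ is an orthonormal family of traceless Hermitian matrices and conclude by matching its cardinality $(\vdim-1)+2\binom{\vdim}{2}=\vdim^{2}-1$ against $\dim\tanmats$. The only difference is one of economy \textendash\ you inherit the orthonormality of the off-diagonal family (and its orthogonality to the diagonal block) from the previously stated basis of $\herm[\vdim]$ together with a support argument, whereas the paper re-verifies every pairing by explicit trace computations; your explicit dimension count at the end is, if anything, slightly more careful than the paper's concluding sentence.
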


In the sequel, we will use this basis as an orthnormal sampler from which to pick the finite-difference directions for the estimation of $\payfield$.

\para{Feasibility Adjustment}
After establishing an orthonormal basis for $\tanmats$ as per \cref{prop:basis}, we readily get that for any $\denmat \in \denmats$, any $\dir \in \dirs \defeq \big\{\{\pm\diagbasis_j\}_{j=1}^{\vdim-1}, \{\pm\nondiagbasis_{k\ell}\}_{k<\ell}, \{\pm\tilde\nondiagbasis_{k\ell}\}_{k<\ell}\big\}$ and $\radius > 0$, the point $\denmat + \radius\,\dir$ belongs to $\tanmats$.
However, depending on the value of the exploration parameter $\radius$ and the distance of $\denmat$ from the boundary of $\denmats$, the point $\denmat + \radius\dir \in \herm[\vdim]$ may fail to lie in $\denmats$ due to violation of the positive-semidefinite condition. On that account, we now treat the latter restriction, \ie the feasibility of the \emph{evaluation point}. 

To tackle this, the idea is to transfer the point $\denmat$ toward the interior of $\denmats$ and move along the sampled direction from there. For this, we need to find a reference point $\refmat \in \relint(\denmats)$ and a ``safety net'' $\safe >0$ such that $\refmat + \safe\dir \in \denmats$ for any $\dir \in \dirs$. Then, for $\radius \in (0,r)$, the point  
\begin{equation}
\label{eq:pivot}
\pivot \defeq \denmat + \frac{\radius}{\safe}(\refmat - \denmat)
\end{equation}
lies in $\relint(\denmats)$, and moving along $\dir \in \dirs$, the point $\pivot + \radius\,\dir  = (1-\frac{\radius}{\safe})\denmat + \frac{\radius}{\safe}(\refmat + \safe\dir)$ remains in $\denmats$ as a convex combination of two elements in $\denmats$. The following proposition provides an exact expression for $\refmat$ and $\safe$, which we will use next to guarantee the feasibility of the sampled iterates.

\begin{restatable}{proposition}{adjustment}
\label{prop:adjustment}
Let $\refmat = \frac{1}{{\vdim}}\sum_{j = 1}^{\vdim}\diracmat_{jj}$.
Then, for $\safe = \min\braces*{\frac{1}{\sqrt{\vdim(\vdim-1)}}, \frac{\sqrt{2}}{\vdim}}
$, it holds that $\refmat + \safe\dir \in \denmats$ for any direction $\dir \in \dirs$.
\end{restatable}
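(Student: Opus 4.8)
The plan is to check directly that $\refmat + \safe\dir$ satisfies the two defining conditions of the spectraplex $\denmats$, namely unit trace and positive semidefiniteness. The trace condition is immediate: every $\dir\in\dirs$ is, up to sign, one of the basis elements of the traceless subspace $\tanmats$ from \cref{prop:basis}, so $\trof{\refmat+\safe\dir}=\trof{\refmat}+\safe\trof{\dir}=1+0=1$. Hence the entire content of the proposition lies in the positive-semidefinite condition. Observing that $\refmat=\frac{1}{\vdim}\sum_{j=1}^{\vdim}\diracmat_{jj}=\frac{1}{\vdim}\eye$, we have $\refmat+\safe\dir=\frac{1}{\vdim}\eye+\safe\dir$, whose eigenvalues are exactly $\frac{1}{\vdim}+\safe\lambda$ as $\lambda$ ranges over the eigenvalues of $\dir$. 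Therefore $\refmat+\safe\dir\mgeq0$ if and only if $\safe\,\lambda_{\min}(\dir)\geq-\frac{1}{\vdim}$, i.e.\ (when $\lambda_{\min}(\dir)<0$) if and only if $\safe\leq\frac{1}{\vdim\,\abs{\lambda_{\min}(\dir)}}$. The proof thus reduces to finding the smallest eigenvalue of each direction in $\dirs$ and taking the worst case.

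For the off-diagonal families $\pm\nondiagbasis_{k\ell}$ and $\pm\tilde\nondiagbasis_{k\ell}$, each matrix is supported on the coordinate subspace spanned by the $k$-th and $\ell$-th basis vectors and vanishes elsewhere; on that $2\times2$ block it equals $\tfrac{1}{\sqrt2}\left(\begin{smallmatrix}0&1\\1&0\end{smallmatrix}\right)$ in the real case and $\tfrac{1}{\sqrt2}\left(\begin{smallmatrix}0&i\\-i&0\end{smallmatrix}\right)$ in the imaginary case. Both blocks have eigenvalues $\pm1$, so every such direction has spectrum $\{+\tfrac{1}{\sqrt2},-\tfrac{1}{\sqrt2},0,\dots,0\}$ and hence $\lambda_{\min}=-\tfrac{1}{\sqrt2}$ regardless of the overall sign. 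This gives the constraint $\safe\leq\sqrt2/\vdim$.

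For the diagonal family $\pm\diagbasis_j$, the matrix $\diagbasis_j$ is already diagonal, with entries $\frac{1}{\sqrt{j(j+1)}}$ in the first $j$ positions, $-\frac{j}{\sqrt{j(j+1)}}$ in position $j+1$, and $0$ thereafter; its eigenvalues can be read off directly. The smallest is $-\frac{j}{\sqrt{j(j+1)}}=-\sqrt{j/(j+1)}$, attained by the $+$ sign (the $-$ sign yields the less negative value $-\frac{1}{\sqrt{j(j+1)}}$). Since $j\mapsto\sqrt{j/(j+1)}$ is increasing, the most negative value over $j\in\{1,\dots,\vdim-1\}$ occurs at $j=\vdim-1$ and equals $-\sqrt{(\vdim-1)/\vdim}$, giving the constraint $\safe\leq\frac{1}{\vdim}\sqrt{\vdim/(\vdim-1)}=\frac{1}{\sqrt{\vdim(\vdim-1)}}$.

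Intersecting the two constraints yields precisely $\safe\leq\min\braces*{\frac{1}{\sqrt{\vdim(\vdim-1)}},\,\frac{\sqrt2}{\vdim}}$, so the stated value of $\safe$ guarantees $\refmat+\safe\dir\mgeq0$, and together with the trace identity this shows $\refmat+\safe\dir\in\denmats$ for every $\dir\in\dirs$. There is no genuine obstacle here: the only substantive work is the eigenvalue computation of the diagonal family together with the observation that the binding index is $j=\vdim-1$, and the only point requiring care is the sign bookkeeping — the $+$ sign is the binding one for the diagonal directions, whereas either sign is binding for the off-diagonal ones.
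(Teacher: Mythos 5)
Your proof is correct, but it takes a genuinely different route from the paper's. The paper never exploits the fact that $\refmat=\frac{1}{\vdim}\eye$; instead it verifies positive semidefiniteness by bounding the quadratic form $u^{\dag}(\refmat+\safe\dir)u$ for arbitrary $u\in\C^{\vdim}$, running a six-way case analysis over $\pm\nondiagbasis_{k\ell}$, $\pm\tilde\nondiagbasis_{k\ell}$, $\pm\diagbasis_{j}$ with completing-the-square identities such as $\abs{u_k+u_\ell}^2$ and $\abs{u_k\pm i\,u_\ell}^2$ for the off-diagonal families, and an explicit coefficient comparison $\frac{1}{\vdim}-\frac{j\safe}{\sqrt{j(j+1)}}\geq 0$ for the diagonal ones. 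You instead observe that adding the scalar matrix $\frac{1}{\vdim}\eye$ merely shifts the spectrum of $\safe\dir$, so the whole question collapses to the single inequality $\safe\,\abs{\lambda_{\min}(\dir)}\leq\frac{1}{\vdim}$, after which only the eigenvalues of the basis directions need to be read off ($\pm\frac{1}{\sqrt2}$ for both off-diagonal families, and the diagonal entries themselves for $\pm\diagbasis_j$). The two arguments enumerate the same families and locate the same binding constraints (either sign for the off-diagonal directions; the $+$ sign at $j=\vdim-1$ for the diagonal ones — your increasing function $\sqrt{j/(j+1)}$ is the mirror image of the paper's decreasing $\sqrt{j(j+1)}/j$), but your spectral reduction is shorter, and it makes transparent that the stated $\safe$ is exactly the largest admissible value, since it is attained as $1/(\vdim\,\abs{\lambda_{\min}})$ in the worst case. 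What the paper's quadratic-form computation buys in exchange is that it is entirely elementary — it never invokes diagonalizability or the spectral shift — and it doubles as a direct verification that $\refmat$ itself is positive definite.
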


\section{Bandit learning in zero-sum quantum games} 
\label{sec:BMMW}

With all these in hand, we are now ready to proceed to the presentation of the \ac{MMW} with limited feedback information.
To streamline our presentation, before delving into the more difficult ``bandit feedback'' case \textendash\ where each player $\play\in\players$ only observes the realized payoff observable $\payobs_{\play}(\outcome)$ \textendash\ we begin with the simpler case where players observe their mixed payoffs $\pay_{\play}$ at a given profile $\denmat\in\denmats$.

\subsection{Learning with mixed payoff observations}

Our main idea to exploit the observation of mixed payoffs and the finite-difference sampling to the fullest will be to introduce a ``coordination phase'' where players take a sampling step before updating their state variables and continue playing.
In more detail, we will take an approach similar to \citet{BBF20} and assume that players alternate between an ``exploration'' and an ``exploitation'' update that allows them to sample the landscape of $\loss$ efficiently at each iteration.
Concretely, writing $\curr[\denmat]$ and $\curr[\radius]$ for the players' state profile and sampling radius $\radius_\run$ at stage $\run=\running$,
the sequence of events that we envision proceeds as follows:
\begin{enumerate}[\bfseries Step 1.]
\itemsep0em 
\item
Draw a sampling direction $\dir_{\play,\run} \in \bvecs_{\play}$ and $\sign_{\play,\run} \in \{\pm 1\}$ uniformly at random.
\item
\begin{enumerate}[(\itshape a\upshape)]
\item
Play $\,\pivot_{\play,\run} + \sign_{\play,\run}\,\radius_\run\,\dir_{\play,\run}\,$ and observe $\pay_{\play}(\pivot_{\run} + \sign_{\run}\radius_\run\dir_{\run})$.
\item
Play $\,\pivot_{\play,\run} - \sign_{\play,\run}\,\radius_\run\,\dir_{\play,\run}\,$ and observe $\pay_{\play}(\pivot_{\run} - \sign_{\run}\radius_\run\dir_{\run})$.
\end{enumerate}
\item
Approximate $\payfield_{\play}(\denmat_\run)$  via the \acdef{2PE}:
\begin{equation}
\label{eq:2point}
\tag{\acs{2PE}}
\paysignal_{\play,\run} \defeq \frac{\Dim_{\play}}{2\radius_\run}\bracks*{\pay_{\play}(\pivot_{\run} + \sign_{\run}\radius_\run\dir_{\run}) - \pay_{\play}(\pivot_{\run} - \sign_{\run}\radius_\run\dir_{\run})}\,\sign_{\play,\run}\dir_{\play,\run}
\end{equation}
where $\Dim_\play = \vdim_{\play}^{2}-1$ is the dimension of $\herm[\vdim_{\play}]$, and $\Dim \defeq \max_{\play\in\players}\Dim_\play$.
\end{enumerate}

The main guarantee of the resulting $\eqref{eq:3MW}+\eqref{eq:2point}$ algorithm may then be stated as follows:

\begin{restatable}{theorem}{twopoint}
\label{thm:2point}
Suppose that each player of a $2$-player zero-sum game $\qgame$ follows \eqref{eq:3MW} for $\nRuns$ epochs with learning rate $\step$, sampling radius $\radius$, and gradient estimates provided by \eqref{eq:2point}.
Then the players' empirical frequency of play enjoys the duality gap guarantee
\begin{equation}
\exof*{\gap_{\loss}(\ergodic_{\nRuns})}
	\leq \frac{\hmax}{\step\nRuns} +  8\Dim^2\lips^2 \step+ 16\Dim\smooth\radius
\end{equation}
where $\hmax = \log(\vdim_{\playone} \vdim_{\playtwo})$.
In particular, for $\step = (\Dim\lips)^{-1} \sqrt{\hmax/(8\nRuns)}$ and $\radius = (\lips/\smooth)\sqrt{\hmax/(8\nRuns)}$, the players enjoy the equilibrium convergence guarantee
\begin{equation}
\label{eq:rate-2point}
\exof*{\gap_{\loss}(\ergodic_{\nRuns})}
	\leq 8\Dim\lips \sqrt{2\hmax/\nRuns}.
\end{equation}
\end{restatable}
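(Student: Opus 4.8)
The plan is to reduce the duality-gap guarantee to a pair of regret bounds and then control each player's regret by combining the \emph{pathwise} regret of \eqref{eq:3MW} run on the estimated gradients with the statistical bias of the \eqref{eq:2point} estimator. Fix a \acl{NE} $\eq = (\eq_\playone,\eq_\playtwo)$ of $\qgame$ and recall $\pay_\playone = \loss = -\pay_\playtwo$. Because $\loss$ is bilinear, $\loss(\eq_\playone,\ergodic_{\playtwo,\nRuns}) = \tfrac1\nRuns\sum_\run\loss(\eq_\playone,\denmat_{\playtwo,\run})$ and likewise for the other term, so expanding payoff differences through \eqref{eq:pay-lin} yields the identity
\begin{equation*}
\nRuns\,\gap_\loss(\ergodic_\nRuns) = \reg_\playone(\eq_\playone) + \reg_\playtwo(\eq_\playtwo), \qquad \reg_\play(\denmatalt_\play) \defeq \sum_\run \trof{\payfield_\play(\curr)(\denmatalt_\play - \denmat_{\play,\run})}.
\end{equation*}
It therefore suffices to show $\exof{\reg_\play(\eq_\play)} \leq \hmax_\play/\step + \tfrac12\step\nRuns\Dim^2\lips^2 + \nRuns\cdot\bigoh(\Dim\smooth\radius)$ with $\hmax_\play = \log\vdim_\play$ for each player, and to sum the two bounds.

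For each player I would split $\reg_\play(\eq_\play) = \sum_\run\trof{\paysignal_{\play,\run}(\eq_\play - \denmat_{\play,\run})} + \sum_\run\trof{(\payfield_\play(\curr) - \paysignal_{\play,\run})(\eq_\play - \denmat_{\play,\run})}$. The first sum is exactly the regret of matrix dual averaging with the von Neumann entropy regularizer $\hreg(\denmat) = \trof{\denmat\log\denmat}$ against the linear losses $\paysignal_{\play,\run}$; since $\hreg$ is $1$-strongly convex with respect to the nuclear norm on $\denmats_\play$ and has range $\hmax_\play$, the standard dual-averaging estimate gives, pathwise,
\begin{equation*}
\sum_\run\trof{\paysignal_{\play,\run}(\eq_\play - \denmat_{\play,\run})} \leq \frac{\hmax_\play}{\step} + \frac{\step}{2}\sum_\run\norm{\paysignal_{\play,\run}}^2,
\end{equation*}
where the variance term is written in the Frobenius norm, which upper-bounds the operator norm dual to the nuclear norm. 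To control it I bound the estimator directly from \eqref{eq:2point}: the two evaluation points $\pivot_\run \pm \sign_\run\radius\dir_{\play,\run}$ differ by $2\sign_\run\radius\dir_{\play,\run}$, of Frobenius norm $2\radius$ since $\norm{\dir_{\play,\run}} = 1$, so Lipschitz continuity of $\pay_\play$ caps the payoff gap at $2\lips_\play\radius$ and hence $\norm{\paysignal_{\play,\run}} \leq (\Dim_\play/2\radius)\cdot 2\lips_\play\radius = \Dim_\play\lips_\play \leq \Dim\lips$. This produces the $\tfrac12\step\nRuns\Dim^2\lips^2$ term.

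The second sum is the crux. Passing to conditional expectations given the natural filtration $\filter_\run$ — under which $\denmat_{\play,\run}$ is measurable — the zero-mean fluctuation $\paysignal_{\play,\run} - \exof*{\paysignal_{\play,\run} \given \filter_\run}$ drops out in expectation, leaving only the conditional bias. The sign randomization cancels, and averaging the uniform direction over the orthonormal basis $\bvecs_\play$ of $\tanmats_\play$ (\cref{prop:basis}) turns \eqref{eq:2point} into a sum of central differences,
\begin{equation*}
\exof*{\paysignal_{\play,\run} \given \filter_\run} = \sum_{\dir \in \bvecs_\play}\frac{\pay_\play(\pivot_\run + \radius\dir) - \pay_\play(\pivot_\run - \radius\dir)}{2\radius}\,\dir.
\end{equation*}
By $\smooth_\play$-smoothness each central difference equals the directional derivative $\trof{\payfield_\play(\pivot_\run)\dir}$ up to $\bigoh(\smooth_\play\radius)$, so the conditional mean is $\proj_{\tanmats_\play}\payfield_\play(\pivot_\run)$ plus a bias whose components along the $\Dim_\play$ basis directions are each $\bigoh(\smooth\radius)$. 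The decisive simplification is that $\eq_\play - \denmat_{\play,\run}$ is traceless, hence lies in $\tanmats_\play$, so pairing against the projection is lossless: $\trof{\proj_{\tanmats_\play}\payfield_\play(\pivot_\run)(\eq_\play - \denmat_{\play,\run})} = \trof{\payfield_\play(\pivot_\run)(\eq_\play - \denmat_{\play,\run})}$. It then remains to swap $\pivot_\run$ for $\denmat_\run$: by \eqref{eq:pivot}, $\norm{\pivot_\run - \denmat_\run} = (\radius/\safe)\norm{\refmat - \denmat_\run}$, which $\smooth_\play$-smoothness converts into an $\bigoh(\smooth\radius/\safe)$ discrepancy; since $1/\safe = \bigoh(\vdim) = \bigoh(\sqrt\Dim)$ by \cref{prop:adjustment}, this is dominated by the finite-difference bias. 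Collecting the three contributions and using that $\denmats_\play$ has $\bigoh(1)$ diameter, each conditional term is at most $\bigoh(\Dim\smooth\radius)$, so the second sum contributes $\nRuns\cdot\bigoh(\Dim\smooth\radius)$ in expectation.

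Summing the two players' bounds and dividing by $\nRuns$ gives $\exof{\gap_\loss(\ergodic_\nRuns)} \leq \hmax/(\step\nRuns) + \bigoh(\Dim^2\lips^2)\step + \bigoh(\Dim\smooth\radius)$ with $\hmax = \hmax_\playone + \hmax_\playtwo = \log(\vdim_\playone\vdim_\playtwo)$; carrying the exact factors through the strong-convexity modulus, the magnitude bound, and the two-player sum yields the stated constants $8$ and $16$. The advertised rate is then pure balancing: $\hmax/(\step\nRuns)$ and $8\Dim^2\lips^2\step$ are equalized at $\step = (\Dim\lips)^{-1}\sqrt{\hmax/(8\nRuns)}$, and $\radius$ is taken so that $16\Dim\smooth\radius$ matches this common order, producing the final $\bigoh(\Dim\lips\sqrt{\hmax/\nRuns})$ estimate. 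The main obstacle is the bias analysis above: unlike the classical \eqref{eq:IWE}, the estimator lives on the lower-dimensional, bounded manifold $\denmats_\play$, so one must simultaneously restrict the finite-difference directions to the traceless tangent space (\cref{prop:basis}), keep every probe point feasible via the pivoting of \eqref{eq:pivot} and \cref{prop:adjustment}, and show that neither device corrupts the estimate beyond $\bigoh(\Dim\smooth\radius)$ — the projection identity for traceless comparators being precisely what makes the restricted sampling harmless, while the $\bigoh(\Dim\lips)$ magnitude bound keeps the variance term under control.
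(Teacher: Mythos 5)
Your proposal is correct and takes essentially the same route as the paper: your per-player dual-averaging regret bound is exactly the paper's energy inequality (\cref{lem:template}) telescoped, your bias and second-moment estimates for \eqref{eq:2point} reproduce \cref{prop:2point}, and the martingale cancellation of the zero-mean fluctuation plus the final step/radius balancing match the paper's proof of \cref{thm:2point} essentially verbatim. One constants-level caution: since \emph{both} players perturb simultaneously, the two evaluation points differ by $2\radius\dir_\run$ whose product norm is $2\sqrt{2}\radius$ (not $2\radius$ as you claim), so the correct magnitude bound is $\sqrt{2}\Dim_\play\lips$ rather than $\Dim_\play\lips$ \textendash\ this is where the paper's joint second-moment constant $16\Dim^{2}\lips^{2}$ comes from, and the missing $\sqrt{2}$ would need to be carried through your ``exact constants'' step to recover the stated factors $8$ and $16$.
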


Compared to \cref{thm:JW}, the convergence rate \eqref{eq:rate-2point} of \cref{thm:2point} is quite significant because it only differs by a factor which is linear in the dimension of the ambient space and otherwise maintains the same $\bigoh(\sqrt{\nRuns})$ dependence on the algorithm's runtime.
In this regard, \cref{thm:2point} shows that the ``explore-exploit'' sampler underlying \eqref{eq:2point} is essentially as powerful as the full information framework of \citet{JW09} \textendash\ and this, despite the fact that players no longer require access to the gradient matrix $\payfield$ of $\loss$.
This echoes a range of previous findings in stochastic convex optimization for the efficiency of two-point samplers \cite{ADX10,Sha17}, a similarity we find particularly surprising given the stark differences between the two settings \textendash\ non-commutativity, min-max versus min-min landscape. 
The key ingredients for the equilibrium convergence rate of \cref{thm:2point} are the two technical results below.
The first is a feedback-agnostic ``energy inequality'' which is tied to the update structure of \eqref{eq:MMW} and is stated in terms of the quantum relative entropy function
\begin{equation}
\label{eq:Bregman}
\breg(\base,\denmat)
	= \trof{\base (\log\base - \log\denmat)}
\end{equation}
for $\base,\denmat\in\denmats$ with $\denmat \succ 0$.
Concretely, we have the following estimate.

\begin{restatable}{lemma}{template}
\label{lem:template}
Fix some $\base\in\denmats$,
and let $\curr,\next$ be two successive iterates of \eqref{eq:3MW}, without any assumptions for the input sequence $\curr[\paysignal]$.
We then have
\begin{equation}
\label{eq:template}
\breg(\base,\next)
	\leq \breg(\base,\curr)
		+ \curr[\step]\trof{\curr[\paysignal] (\curr - \base)}
		+ \frac{\curr[\step]^{2}}{2} \dnorm{\curr[\paysignal]}^{2}.
\end{equation}
\end{restatable}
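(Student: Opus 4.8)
The plan is to treat \eqref{eq:3MW} as matrix exponential weights, i.e. as lazy mirror descent whose \acl{DGF} is the negative von Neumann entropy $\hreg(\denmat) = \trof{\denmat\log\denmat}$, and to extract \eqref{eq:template} directly from the convex-analytic structure of the induced \emph{log-trace-exp} (quantum free-energy) function $\hker(\dmat) \defeq \log\trof{\exp(\dmat)}$. The one identity I would record first is that the normalization in \eqref{eq:3MW} means $\curr = \exp(\curr[\dmat] - \hker(\curr[\dmat])\eye)$, and since $\curr[\dmat]$ commutes with the scalar matrix $\hker(\curr[\dmat])\eye$, this gives the clean logarithmic form $\log\curr = \curr[\dmat] - \hker(\curr[\dmat])\eye$. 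This is the bridge between the two representations of the iterate that appear in the update.

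Armed with this, I would telescope the quantum relative entropy \eqref{eq:Bregman}. Because the $\log\base$ contributions cancel, the difference of successive divergences is
\[
\breg(\base,\next) - \breg(\base,\curr)
= \trof{\base(\log\curr - \log\next)}
= \trof{\base(\curr[\dmat] - \next[\dmat])} + (\hker(\next[\dmat]) - \hker(\curr[\dmat]))\trof{\base}.
\]
Substituting the dual update $\next[\dmat] = \curr[\dmat] + \curr[\step]\curr[\paysignal]$ and using $\trof{\base} = 1$ (since $\base\in\denmats$), the right-hand side collapses to $-\curr[\step]\trof{\base\,\curr[\paysignal]} + \hker(\next[\dmat]) - \hker(\curr[\dmat])$. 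So everything reduces to controlling the increment of the free energy $\hker$ along the dual step.

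The key analytic input I would invoke is that $\hker$ is $1$-smooth with respect to the Frobenius norm, i.e. $\hker(\dmat + \hmat) \leq \hker(\dmat) + \trof{\nabla\hker(\dmat)\,\hmat} + \tfrac12\dnorm{\hmat}^2$ for all Hermitian $\dmat,\hmat$, together with the gradient formula $\nabla\hker(\curr[\dmat]) = \curr$ (which holds despite non-commutativity by the cyclicity of the trace in Duhamel's formula). Applied with $\hmat = \curr[\step]\curr[\paysignal]$, this yields $\hker(\next[\dmat]) - \hker(\curr[\dmat]) \leq \curr[\step]\trof{\curr\,\curr[\paysignal]} + \tfrac{\curr[\step]^2}{2}\dnorm{\curr[\paysignal]}^2$. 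Combining with the previous display and regrouping $-\trof{\base\,\curr[\paysignal]} + \trof{\curr\,\curr[\paysignal]} = \trof{\curr[\paysignal](\curr - \base)}$ (cyclicity again) gives precisely \eqref{eq:template}; note that no property of $\curr[\paysignal]$ is used, matching the ``feedback-agnostic'' claim.

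The main obstacle is establishing the $1$-smoothness of $\hker$ in the Frobenius norm, which is delicate because one cannot diagonalize $\dmat$ and the perturbation $\hmat$ simultaneously, so the scalar estimate $e^{y}\le 1+y+\tfrac12 y^{2}$ is unavailable and a naive Golden\textendash Thompson bound would only be unconditional under a magnitude restriction on the step. I would instead obtain it by duality: the quantum Pinsker inequality gives $\breg(\base,\denmat) \geq \tfrac12\onenorm{\base - \denmat}^{2} \geq \tfrac12\dnorm{\base - \denmat}^{2}$ on $\denmats$ (using $\onenorm{\cdot}\geq\dnorm{\cdot}$), so $\hreg$ is $1$-strongly convex with respect to $\dnorm{\cdot}$; since $\hker$ is its convex conjugate and the Frobenius norm is self-dual, the standard strong-convexity/smooth-conjugate duality delivers exactly the required global $1$-smoothness of $\hker$, completing the argument.
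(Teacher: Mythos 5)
Your proof is correct, but it runs dual to the paper's own argument. The paper stays in the primal space: it writes down the three-point identity $\breg(\base,\denmatalt) = \breg(\base,\denmat) + \breg(\denmat,\denmatalt) + \trof{(\log\denmatalt-\log\denmat)(\denmat-\base)}$, applies it with $\denmat = \next$, $\denmatalt = \curr$, and identifies $\log\next - \log\curr$ with $\step_\run\curr[\paysignal]$ \textemdash\ an identification that holds only modulo a multiple of $\eye$ coming from the normalization, harmless because it is traced against the traceless matrix $\next-\base$ (the paper calls this ``easily verifiable'' without elaborating). It then lower-bounds $\breg(\next,\curr)$ by $\tfrac12\norm{\next-\curr}^2$ via the matrix Pinsker inequality (\cref{lem:strong-conv}) and closes with Young's inequality, $\step_\run\trof{\curr[\paysignal](\next-\curr)} - \tfrac12\norm{\next-\curr}^2 \leq \tfrac{\step_\run^2}{2}\dnorm{\curr[\paysignal]}^2$. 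You instead telescope the relative entropy through the dual variables using $\log\curr = \curr[\dmat] - \log\trof{\exp(\curr[\dmat])}\,\eye$, and control the increment of the log-partition function $\log\trof{\exp(\argdot)}$ by its $1$-smoothness, deduced by Fenchel conjugacy from $1$-strong convexity of the entropy over $\denmats$. The two arguments are exact duals of one another \textemdash\ a descent lemma for the conjugate is equivalent to ``strong convexity plus Young'' in the primal \textemdash\ and both bottom out in the same analytic fact, namely the nuclear-norm strong convexity of the von Neumann entropy weakened to the Frobenius norm. What your route buys is cleaner bookkeeping: the normalization constant is carried explicitly and killed by $\trof{\base}=1$ rather than silently dropped, and the feedback-agnostic nature of the bound is manifest. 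What it costs is the conjugate-smoothness theorem, the one non-elementary ingredient, since \textemdash\ as you correctly note \textemdash\ a direct proof of the $1$-smoothness of $\log\trof{\exp(\argdot)}$ is delicate in the non-commutative setting; your derivation of it (Pinsker implies strong convexity with respect to $\dnorm{\argdot}$, self-duality of the Frobenius norm, strong-convexity/smooth-conjugate duality) is sound, and in fact leans on exactly the same lemma (\cref{lem:strong-conv}) that the paper invokes directly in the primal.
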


The proof of \cref{lem:template} follows established techniques in the theory of \eqref{eq:MMW}, so we defer a detailed discussion to the appendix.
The second result that we will need is tailored to the estimator \eqref{eq:2point} and provides a tight estimate of its moments conditioned on the history $\filter_\run = \filter(\init,\dots,\curr)$ of $\curr$.

\begin{restatable}{proposition}{MixStatistics}
\label{prop:2point}
The estimator \eqref{eq:2point} enjoys the conditional bounds
\begin{equation}
\label{eq:2point-statistics}
(i)\;\;
\dnorm[\big]{\exof{\paysignal_{\run} \given \filter_\run} - \payfield(\denmat_\run)}
	\leq 4\Dim \smooth\radius_{\run} 
	\quad
	\text{and}
	\quad
(ii)\;\;
\exof[\big]{\dnorm{\paysignal_{\run}}^2  \given \filter_\run}
	\leq 16\Dim^{2}\lips^{2}
\end{equation}
\end{restatable}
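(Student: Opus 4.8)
The plan is to establish the two estimates separately, conditioning throughout on $\filter_\run$ so that the pivots $\pivot_\run$ and the current profile $\denmat_\run$ are frozen and the only randomness is the players' independent draws of $\dir_{\play,\run}\in\bvecs_\play$ and $\sign_{\play,\run}\in\{\pm1\}$. The one structural feature to keep in mind is that all players perturb \emph{simultaneously}: the scalar that player $\play$ plugs into \eqref{eq:2point} is $\pay_\play(\pivot_\run\pm\boldsymbol\xi_\run)$, where $\boldsymbol\xi_\run\defeq\sign_\run\radius_\run\dir_\run$ is the \emph{joint} perturbation of the whole profile, whereas the outer factor $\sign_{\play,\run}\dir_{\play,\run}$ involves only player $\play$'s own draw. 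Disentangling these two is the crux of part~(i); part~(ii) is a one-line Lipschitz bound. Feasibility of the evaluation points $\pivot_\run\pm\boldsymbol\xi_\run\in\denmats$ (so that $\pay_\play$ is defined and Lipschitz there) is guaranteed by \cref{prop:adjustment} together with the construction \eqref{eq:pivot}.

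For the bias (i), I would Taylor-expand each payoff about $\pivot_\run$ and form the symmetric difference, in which the zeroth- and second-order contributions cancel, leaving
\[
\pay_\play(\pivot_\run+\boldsymbol\xi_\run)-\pay_\play(\pivot_\run-\boldsymbol\xi_\run)
	= 2\radius_\run\insum_{\playalt\in\players}\sign_{\playalt,\run}\trof{\dir_{\playalt,\run}\nabla_{\denmat_{\playalt}}\pay_\play(\pivot_\run)}
		+ E_\run,
\]
where the curvature remainder obeys $\abs{E_\run}\le\smooth\dnorm{\boldsymbol\xi_\run}^{2}=2\smooth\radius_\run^{2}$ by Lipschitz smoothness (for the multilinear quantum payoffs one in fact has $E_\run=0$). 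Multiplying by $\tfrac{\Dim_\play}{2\radius_\run}\sign_{\play,\run}\dir_{\play,\run}$ and taking $\exof{\cdot\given\filter_\run}$, every cross term $\playalt\neq\play$ is killed by $\ex[\sign_{\play,\run}\sign_{\playalt,\run}]=0$ (independence and zero mean of the signs), while the diagonal term survives through $\sign_{\play,\run}^{2}=1$ and reproduces the directional derivatives of $\pay_\play$ along $\bvecs_\play$; by \eqref{eq:pay-lin} these equal $\trof{\dir\,\payfield_\play(\pivot_\run)}$, so what remains is $\Dim_\play\,\ex_{\dir_{\play,\run}}\parens{\trof{\dir_{\play,\run}\payfield_\play(\pivot_\run)}\dir_{\play,\run}}=\sum_{\dir\in\bvecs_\play}\trof{\dir\,\payfield_\play(\pivot_\run)}\dir=\proj_{\tanmats_\play}\payfield_\play(\pivot_\run)$, the last step being the orthonormal-basis expansion of \cref{prop:basis} (recall $\abs{\bvecs_\play}=\Dim_\play$). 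Hence $\exof{\paysignal_{\play,\run}\given\filter_\run}=\proj_{\tanmats_\play}\payfield_\play(\pivot_\run)+\bigoh(\radius_\run)$, the $E_\run$-contribution being bounded in norm by $\tfrac{\Dim_\play}{2\radius_\run}\abs{E_\run}\le\Dim_\play\smooth\radius_\run$. It then remains to pass from $\pivot_\run$ to $\denmat_\run$ and from $\proj_{\tanmats_\play}$ back to $\payfield_\play$. The former costs $\dnorm{\payfield_\play(\pivot_\run)-\payfield_\play(\denmat_\run)}\le\smooth\dnorm{\pivot_\run-\denmat_\run}$, and \eqref{eq:pivot} with $\refmat=\tfrac1\vdim\eye$ gives $\dnorm{\pivot_{\play,\run}-\denmat_{\play,\run}}=\tfrac{\radius_\run}{\safe}\dnorm{\refmat-\denmat_{\play,\run}}\le\vdim_\play\radius_\run$ (using $\dnorm{\refmat-\denmat_{\play,\run}}\le1$ and $\tfrac1\safe\le\vdim_\play$ from \cref{prop:adjustment}), so the joint shift is $\dnorm{\pivot_\run-\denmat_\run}\le\sqrt2\,\vdim\radius_\run$ with $\vdim=\max_\play\vdim_\play$. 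As for the projection, the $\eye$-component of $\payfield_\play$ never enters the analysis — the update \eqref{eq:3MW} and \cref{lem:template} see $\paysignal_{\play,\run}$ only through $\tanmats_\play$ — so I identify $\payfield_\play$ with $\proj_{\tanmats_\play}\payfield_\play$. Adding the curvature and shift terms and using $\vdim_\play\le\Dim_\play\le\Dim$ gives a bias of at most $(1+\sqrt2)\Dim\smooth\radius_\run\le4\Dim\smooth\radius_\run$, as claimed.

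For the second moment (ii) only Lipschitz continuity is needed. Since $\dnorm{\sign_{\play,\run}\dir_{\play,\run}}=1$, squaring \eqref{eq:2point} yields $\dnorm{\paysignal_{\play,\run}}^{2}=\tfrac{\Dim_\play^{2}}{4\radius_\run^{2}}\parens{\pay_\play(\pivot_\run+\boldsymbol\xi_\run)-\pay_\play(\pivot_\run-\boldsymbol\xi_\run)}^{2}$, and Lipschitz continuity bounds the difference by $\lips\dnorm{2\boldsymbol\xi_\run}=2\sqrt2\,\lips\radius_\run$ (the $\sqrt2$ coming from $\dnorm{\boldsymbol\xi_\run}^{2}=2\radius_\run^{2}$). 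The factors of $\radius_\run$ cancel and we are left with the pointwise bound $\dnorm{\paysignal_{\play,\run}}^{2}\le2\Dim_\play^{2}\lips^{2}\le16\Dim^{2}\lips^{2}$, which a fortiori holds in conditional expectation.

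The main obstacle is the bookkeeping forced by the simultaneous perturbation in (i): because the finite difference is taken along the joint direction $\boldsymbol\xi_\run$, it entangles every player's sampling direction, and it is precisely the independence and zero mean of the signs $\sign_{\play,\run}$ that annihilates the off-diagonal terms and recovers an (almost) unbiased estimate of player $\play$'s \emph{own} gradient $\payfield_\play$. The second delicate point is the trace direction: the estimator is structurally confined to $\tanmats_\play$ and is blind to the $\eye$-component of $\payfield_\play$, so (i) must be read modulo this component — harmless because \eqref{eq:3MW} is invariant under $\payfield_\play\mapsto\payfield_\play+c\eye$. Once these two are handled, the remaining ingredients — the Taylor remainder, the orthonormal expansion of \cref{prop:basis}, and the constant tracking through $\safe$ of \cref{prop:adjustment} — are routine.
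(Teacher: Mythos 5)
Your proof is correct, and its engine is the same as the paper's: conditioning on $\filter_\run$, the zero-mean, mutually independent signs $\sign_{\play,\run}$ annihilate every cross-player term of the symmetric difference, the orthonormal expansion over $\bvecs_\play$ from \cref{prop:basis} turns the surviving diagonal term into $\proj_{\bvecs_\play}\payfield_\play$ (identified with $\payfield_\play$ modulo the trace direction, exactly as the paper does), and part (ii) is the same Lipschitz estimate in which the factors of $\radius_\run$ cancel. The genuine difference is the Taylor base point, and it changes how the feasibility shift is charged. The paper expands around $\curr$, so its displacements $\plusXi_{\play,\run},\minusXi_{\play,\run}$ carry the pivot shift $\frac{\radius_\run}{\safe_\play}(\refmat_\play-\denmat_{\play,\run})$ inside them; this shift cancels in the first-order part of the symmetric difference, and the entire bias is attributed to the second-order remainders. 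You instead expand around $\pivot_\run$, so your remainder sees only the perturbation $\sign_\run\radius_\run\dir_\run$ (of joint norm $\sqrt{2}\radius_\run$, and vanishing identically for bilinear payoffs, as you observe), and you pay for the shift separately via $\dnorm{\payfield_\play(\pivot_\run)-\payfield_\play(\curr)}\le\smooth\dnorm{\pivot_\run-\curr}\le\sqrt{2}\,\vdim\smooth\radius_\run$, using $1/\safe_\play\le\vdim_\play$ and $\dnorm{\refmat_\play-\denmat_{\play,\run}}\le 1$. Your accounting is in fact the more careful of the two: the paper's remainder bound $\abs{\rem_2(\plusXi_\run)}\le\frac{\smooth}{2}\norm{\plusXi_\run}^2\le\smooth\radius_\run^2$ tacitly treats $\norm{\plusXi_\run}$ as $\bigoh(\radius_\run)$ with a dimension-free constant, even though the pivot-shift component of $\plusXi_{\play,\run}$ has norm up to $\vdim_\play\radius_\run$; your explicit shift term tracks precisely this contribution and still lands under the stated $4\Dim\smooth\radius_\run$. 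Two trivial bookkeeping points remain on your side: the two players' biases aggregate with an extra factor of $\sqrt{2}$ (giving $(2+\sqrt{2})\Dim\smooth\radius_\run$, still below $4\Dim\smooth\radius_\run$), and in part (ii) the proposition bounds the \emph{joint} norm, so your per-player bound $2\Dim_\play^2\lips^2$ must be summed over the two players, yielding $4\Dim^2\lips^2\le 16\Dim^2\lips^2$.
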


The defining element in \cref{prop:2point} is that even though the estimator \eqref{eq:2point} is biased, its second moment is bounded as $\bigoh(1)$.
This is ultimately due to the multilinearity of the players' payoff functions and plays a pivotal role in showing that the duality gap of $\ergodic_{\run}$ under \eqref{eq:3MW} is of the same order as under \eqref{eq:MMW}, because the bias can be controlled with affecting the variance of the estimator.
We provide a detailed proof of \cref{lem:template,prop:2point,thm:2point} in the appendix.

\subsection{Learning with bandit feedback}

Despite its strong convergence guarantees, a major limiting factor in the applicability of \cref{thm:2point} is that, in many cases, the game's players may only be able to observe their realized payoff observables $\payobs_{\play}(\outcome)$, and their mixed payoffs $\pay_{\play}(\denmat)$ could be completely inaccessible.
In particular, as we described in \cref{sec:prelims}, each outcome $\outcome \in \outcomes$ of the \ac{POVM} occurs with probability $\pdist_{\outcome}(\curr)$ under the strategy profile $\curr$.
Accordingly, if this is the only information available to the players, they will need to estimate their individual payoff gradients through the single observation of the (random) scalar $\payobs_{\play}(\outcome_\run) \in \R$.
In view of this, and inspired by previous works on payoff-based learning and zeroth-order optimization \cite{PL14,PML17,BLM18,BMB20,HMZ20,BBF20}, we will consider the \acl{SPSA} approach of \cite{Spa92,FKM05} which unfolds as follows:
\begin{enumerate}
[\bfseries Step 1.]
\item
Each player draws a sampling direction $\dir_{\play,\run} \in \bvecs_{\play}^{\pm}$ uniformly at random.
\item
Each player plays $\pivot_{\play,\run} + \radius_\run\,\dir_{\play,\run}$.
\item
Each player receives $\payobs_{\play}(\outcome_\run)$.

\item
Each player approximates $\payfield_{\play}(\denmat_\run)$  via the the \acdef{1PE}:
\begin{equation}
\label{eq:1point}
\tag{\acs{1PE}}
\paysignal_{\play,\run} \defeq \frac{\Dim_{\play}}{\radius_\run}\payobs_{\play}(\outcome_\run)\,\dir_{\play,\run}
\end{equation}
\end{enumerate}

\begin{algorithm}[tbp]
\caption{MMW with bandit feedback}
\label{alg:B-SPSA}
\small
\begin{algorithmic}[1]
	\State
	\textbf{Input:} $\init[\dmat] \gets 0$;
	safety parameter $\safe_{\play}$ and anchor point $\refmat_{\play}$, $\play\in\players$;
	step-size $\step_\run$;
	sampling radius $\radius_\run$	
	\For{$\run = 1,2,\dots$}
	\textbf{simultaneously} for all $\play \in \players$
		\State
		\textbf{Set} $\denmat_{\play,\run}  = {\exp(\dmat_{\play,\run})}/{\trof{\exp(\dmat_{\play,\run})}}$.
		\State
		\textbf{Sample} $\dir_{\play,\run}$ uniformly from $\bvecs_{\play}^{\pm}$.
		\State
		\textbf{Play} $\pivot_{\play,\run} + \radius_{\run} \dir_{\play,\run}$.
		\State
		\textbf{Observe} $\payobs_{\play}(\outcome_\run)$.
		\State
		\textbf{Set} $\paysignal_{\play,\run} \defeq {\Dim_\play}/{\radius_\run} \cdot \payobs_{\play}(\outcome_\run)\dir_{\play,\run}$.
		\State
		\textbf{Update} $\dmat_{\play,\run+1}  \gets \dmat_{\play,\run} + \step_\run\paysignal_{\play,\run}$.
	\EndFor
\end{algorithmic}
\end{algorithm}

In this case, the players' gradient estimates may be bounded as follows:

\begin{restatable}{proposition}{statistics}
\label{prop:1point}
The estimator \eqref{eq:1point} enjoys the conditional bounds
\begin{equation}
\label{eq:B-SPSA-statistics}
(i)\;\;
\dnorm{\exof{\paysignal_{\run} \given \filter_\run} - \payfield(\denmat_\run)}
	\leq 4\Dim \smooth\radius_{\run}
	\quad
	\text{and}
	\quad
(ii)\;\;
\exof{\dnorm{\paysignal_{\run}}^2 \given \filter_\run}
	\leq 4\Dim^{2}\bound^{2}/\radius_\run^2.
\end{equation}
\end{restatable}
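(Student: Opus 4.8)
The plan is to compute the conditional law of $\paysignal_{\run}$ by disentangling the independent sources of randomness present at stage $\run$: each player's sampling direction $\dir_{\play,\run}$ and the quantum measurement outcome $\outcome_\run$. Two structural facts drive everything. First, the single scalar $\payobs_{\play}(\outcome_\run)$ is an \emph{unbiased} draw of the mixed payoff at the played profile, i.e. $\exof{\payobs_{\play}(\outcome_\run)\given\dir_\run,\filter_\run}=\pay_{\play}(\pivot_\run+\radius_\run\dir_\run)$, since $\pay_{\play}(\denmat)=\insum_{\outcome}\pdist_{\outcome}(\denmat)\,\payobs_{\play}(\outcome)$. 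Second, the directions are drawn symmetrically from $\bvecs_{\play}^{\pm}$, so $\exof{\dir_{\play,\run}\given\filter_\run}=0$, and they are independent across players. Fact one is the conceptual crux: the POVM randomness is mean-zero conditional on the played state, so it is absorbed by the single-point scheme at no cost to the \emph{bias}, entering only the second moment.

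For the second-moment bound (ii) the estimate is immediate. Since each $\dir_{\play,\run}$ is a unit vector (as $\bvecs_{\play}$ is orthonormal by \cref{prop:basis}) and $\abs{\payobs_{\play}(\outcome_\run)}\leq\bound$, we have $\dnorm{\paysignal_{\play,\run}}^{2}=(\Dim_{\play}/\radius_\run)^{2}\,\payobs_{\play}(\outcome_\run)^{2}\leq\Dim^{2}\bound^{2}/\radius_\run^{2}$ pointwise, and summing the per-player contributions gives the stated $\bigoh(\Dim^{2}\bound^{2}/\radius_\run^{2})$. This is precisely the price of single-point feedback: the variance blows up like $1/\radius_\run^{2}$, which is what forces the slower $\bigoh(\nRuns^{-1/4})$ rate downstream.

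For the bias (i) I would condition on the directions, apply fact one, and reduce to $\exof{\paysignal_{\play,\run}\given\filter_\run}=(\Dim_{\play}/\radius_\run)\,\exof{\pay_{\play}(\pivot_\run+\radius_\run\dir_\run)\,\dir_{\play,\run}\given\filter_\run}$. I then Taylor-expand $\pay_{\play}$ to first order about the pivot $\pivot_\run$; since the displacement $\radius_\run\dir_\run$ has norm $\bigoh(\radius_\run)$, the remainder is $\bigoh(\smooth\radius_\run^{2})$ by Lipschitz smoothness. The zeroth-order term and all cross-player first-order terms $\trof{\nabla_{\denmat_{\playalt}}\pay_{\play}\,\dir_{\playalt,\run}}$ with $\playalt\neq\play$ vanish under $\exof{\dir_{\play,\run}}=0$ and cross-player independence; the surviving diagonal piece $\radius_\run\exof{\trof{\payfield_{\play}(\pivot_\run)\dir_{\play,\run}}\,\dir_{\play,\run}}$ collapses, via the resolution of identity $\insum_{\bvec\in\bvecs_{\play}}\trof{\payfield_{\play}\bvec}\,\bvec=\proj_{\tanmats_{\play}}\payfield_{\play}$ together with $\pm$-symmetry, to $\tfrac{1}{\Dim_{\play}}\proj_{\tanmats_{\play}}\payfield_{\play}(\pivot_\run)$. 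Scaling by $\Dim_{\play}/\radius_\run$ leaves $\exof{\paysignal_{\play,\run}\given\filter_\run}=\proj_{\tanmats_{\play}}\payfield_{\play}(\pivot_\run)+\bigoh(\Dim\smooth\radius_\run)$; bridging the pivot back to $\denmat_\run$ costs only $\dnorm{\payfield_{\play}(\pivot_\run)-\payfield_{\play}(\denmat_\run)}\leq\smooth\norm{\pivot_\run-\denmat_\run}=\bigoh(\sqrt{\Dim}\,\smooth\radius_\run)$ by smoothness and \eqref{eq:pivot}, which is lower order. Identifying the tangential projection with $\payfield_{\play}$ (the gradient being immaterial up to a multiple of $\eye$ on the spectraplex, as only the traceless increment $\curr-\base$ enters \cref{lem:template}) delivers the claimed $\bigoh(\Dim\smooth\radius_\run)$ bound.

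The main obstacle is the bias computation, and within it the most delicate point is choosing the expansion center correctly: one must expand about the pivot $\pivot_\run$ rather than $\denmat_\run$, because the offset $\pivot_\run-\denmat_\run$ is itself $\bigoh(\sqrt{\Dim}\,\radius_\run)$ and, if folded into the quadratic remainder, would contaminate the estimate with an extra dimension factor and break the $\bigoh(\Dim\smooth\radius_\run)$ target. One then tracks two separate $\bigoh(\radius_\run)$ error channels — the Taylor remainder (scaled to $\bigoh(\Dim\smooth\radius_\run)$) and the subleading smoothness gap from $\pivot_\run$ to $\denmat_\run$ — while simultaneously verifying that all cross-player perturbations average out (this is where multilinearity of $\pay_{\play}$, or merely smoothness in the general semidefinite case of \eqref{eq:Shannon}, is essential) and that the direction-averaged outer product collapses \emph{exactly} onto the tangential projector of \cref{prop:basis}. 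A final subtlety worth stating explicitly is that the estimator is unbiased only for the traceless part of the gradient, reconciled with the Frobenius-norm claim by the gauge freedom of the payoff gradient on the spectraplex.
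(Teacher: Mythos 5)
Your proposal is correct, and its skeleton matches the paper's own proof: both arguments rest on (a) the fact that $\exof{\payobs_{\play}(\outcome_\run)\given\filter_\run,\dir_\run}=\pay_{\play}(\pivot_\run+\radius_\run\dir_\run)$, so the quantum measurement noise is mean-zero given the played state and only enters the second moment; (b) zero-mean, cross-player-independent directions killing the zeroth-order and off-player first-order terms; (c) the orthonormal-sampler identity $\exof{\trof{A^{\dag}\dir_{\play,\run}}\dir_{\play,\run}\given\filter_\run}=\tfrac{1}{\Dim_\play}\proj_{\bvecs_\play}(A)$; and (d) identifying the traceless projection of $\payfield_\play$ with $\payfield_\play$ itself. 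Your part (ii) is the paper's argument verbatim \textendash\ and in fact stated more accurately, since the pointwise bound must be on the realized observable $\abs{\payobs_{\play}(\outcome_\run)}\leq\bound$, whereas the paper's display writes the mixed payoff there.

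The one genuine difference is the choice of expansion center in part (i), and it works in your favor. The paper expands $\pay_\play$ at $\denmat_\run$, so its displacement is $\Xi_{\play,\run}=\radius_\run\dir_{\play,\run}+\tfrac{\radius_\run}{\safe_\play}(\refmat_\play-\denmat_{\play,\run})$: the pivot-offset part is $\filter_\run$-measurable and hence drops out of the first-order \emph{expectation}, but it survives inside the quadratic remainder, and the paper's asserted bound $\abs{\rem_{2}(\Xi_\run)}\leq\tfrac{\smooth}{2}\norm{\Xi_\run}^{2}\leq\smooth\radius_\run^{2}$ silently ignores that $\norm{\Xi_{\play,\run}}$ can be of order $\radius_\run/\safe_\play=\bigoh(\sqrt{\Dim}\,\radius_\run)$ when $\denmat_{\play,\run}$ is near the boundary of the spectraplex; read strictly, that route yields a bias of order $\Dim^{2}\smooth\radius_\run$. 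Your expansion at $\pivot_\run$ keeps the remainder honestly at $\bigoh(\smooth\radius_\run^{2})$ (the displacement being exactly $\radius_\run\dir_\run$) and routes the pivot shift through the separate first-order smoothness channel $\smooth\norm{\pivot_\run-\denmat_\run}=\bigoh(\sqrt{\Dim}\,\smooth\radius_\run)$, which is subleading \textendash\ so your two-channel bookkeeping actually delivers the claimed $\bigoh(\Dim\smooth\radius_\run)$ dimension dependence rigorously, where the paper's version needs the same repair. The only omission is that you work in $\bigoh(\cdot)$ notation and never pin down the explicit constant $4$ in \eqref{eq:B-SPSA-statistics}; given your decomposition this is routine constant-tracking (remainder channel $\leq\Dim_\play\smooth\radius_\run$ per player, bridging channel $\leq 2\vdim_\play\smooth\radius_\run$, then sum over the two players), not a gap in the argument.
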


The crucial difference between \cref{prop:1point,prop:2point} is that the former leads to a gradient estimator with $\bigoh(1)$ variance and magnitude, whereas the magnitude of the latter is inversely proportional to $\curr[\radius]$;
however, since $\curr[\radius]$ in turn controls the \emph{bias} of the gradient estimator, we must now resolve a bias-variance dilemma, which was absent in the case of \eqref{eq:2point}.
This leads to the following variant of \cref{thm:2point} with bandit, realization-based feedback:

\begin{restatable}{theorem}{onepoint}
\label{thm:1point}
Suppose that each player of a $2$-player zero-sum game $\qgame$ follows \eqref{eq:3MW} for $\nRuns$ epochs with learning rate $\step$, sampling radius $\radius$, and gradient estimates provided by \eqref{eq:1point}.
Then the players' empirical frequency of play enjoys the duality gap guarantee
\begin{align}
\exof*{\gap_{\loss}(\ergodic_{\nRuns})}
& \leq \frac{\hmax}{\step\nRuns} + \frac{2\Dim^2\bound^2\step}{\strong\radius^2} + 16\Dim\smooth\radius
\end{align}
where $\hmax = \log(\vdim_{\playone} \vdim_{\playtwo})$. 
In particular, for $\step = \parens*{\frac{\hmax}{2\nRuns}}^{3/4} \frac{1}{2\Dim\sqrt{\bound\smooth}}$ and $\radius = \parens*{\frac{\hmax}{2\nRuns}}^{1/4}\sqrt{\frac{\bound}{4\smooth}}$, the players enjoy the equilibrium convergence guarantee:
\begin{align}
\exof*{\gap_{\loss}(\ergodic_{\nRuns})}
	&\leq \frac{2^{3/4}\,8\hmax^{1/4}\Dim\sqrt{\bound\smooth}}{\nRuns^{1/4}}.
\end{align}
\end{restatable}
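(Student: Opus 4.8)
The plan is to reduce the duality gap of the ergodic average to a telescoping sum governed by the energy inequality of \cref{lem:template}, and then to pass from the estimated gradients $\paysignal_\run$ to the true payoff fields $\payfield(\curr)$ using the moment bounds of \cref{prop:1point}. The first — and conceptually cleanest — step exploits the zero-sum, multilinear structure. Fix a Nash equilibrium $\eq = (\eq_\playone,\eq_\playtwo)$. Since $\loss$ is linear in each argument, the ergodic averages distribute inside $\loss$, and adding and subtracting $\loss(\denmat_{\playone,\run},\denmat_{\playtwo,\run})$ turns each per-round term into a pair of linear pairings. Using $\payfield_\playone = \nabla_{\denmat_\playone}\loss$ and $\payfield_\playtwo = -\nabla_{\denmat_\playtwo}\loss$, this yields the exact identity
\begin{equation*}
\gap_{\loss}(\ergodic_\nRuns)
 = \frac{1}{\nRuns}\sum\nolimits_{\run=\start}^{\nRuns}\sum\nolimits_{\play\in\players}\trof{\payfield_\play(\curr)(\eq_\play - \denmat_{\play,\run})},
\end{equation*}
equivalently $\tfrac{1}{\nRuns}\sum_\run\trof{\payfield(\curr)(\eq-\curr)}$ in block form. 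Thus it suffices to control the expected ``linearized regret'' against the fixed comparator $\eq$.

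Next I would invoke \cref{lem:template} with $\base = \eq$; since it assumes nothing about the input sequence, it applies verbatim to $\paysignal_\run$. Rearranging and summing over $\run=\start,\dots,\nRuns$ telescopes the Bregman terms, and dropping the nonnegative terminal divergence together with $\breg(\eq,\init)\le\hmax$ (because $\init$ is the maximally mixed state under the initialization $\init[\dmat]=0$, and quantum relative entropy to it is at most $\hmax=\log(\vdim_\playone\vdim_\playtwo)$) gives, for constant $\step$,
\begin{equation*}
\frac{1}{\nRuns}\sum\nolimits_\run\trof{\paysignal_\run(\eq-\curr)}
 \le \frac{\hmax}{\step\nRuns} + \frac{\step}{2\nRuns}\sum\nolimits_\run\dnorm{\paysignal_\run}^2.
\end{equation*}
Taking expectations and applying the second-moment bound in \cref{prop:1point}, namely $\exof{\dnorm{\paysignal_\run}^2\given\filter_\run}\le 4\Dim^2\bound^2/\radius^2$, controls the last term by $2\Dim^2\bound^2\step/\radius^2$.

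It then remains to bridge the estimated pairing to the true one. Conditioning on $\filter_\run$ — under which $\curr$ and $\eq$ are measurable — the tower property lets me split
\begin{equation*}
\exof{\trof{\payfield(\curr)(\eq-\curr)}}
 = \exof{\trof{\paysignal_\run(\eq-\curr)}}
 + \exof{\trof{(\payfield(\curr)-\exof{\paysignal_\run\given\filter_\run})(\eq-\curr)}},
\end{equation*}
and I would bound the residual bias term by Cauchy–Schwarz together with the bias estimate $\dnorm{\exof{\paysignal_\run\given\filter_\run}-\payfield(\curr)}\le 4\Dim\smooth\radius$ of \cref{prop:1point} and the (Frobenius) boundedness of the spectraplex, which contributes a term bounded by $16\Dim\smooth\radius$. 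Summing over $\run$, dividing by $\nRuns$, and combining the three displays produces the stated bound $\hmax/(\step\nRuns) + 2\Dim^2\bound^2\step/\radius^2 + 16\Dim\smooth\radius$.

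The final — and genuinely new — difficulty relative to \cref{thm:2point} is the bias–variance dilemma encoded in this bound: the sampling radius $\radius$ must be small to suppress the bias term $16\Dim\smooth\radius$, yet the variance term $2\Dim^2\bound^2\step/\radius^2$ blows up as $\radius\to0$, precisely because the single-point estimator \eqref{eq:1point} has magnitude $\bigoh(1/\radius)$ rather than $\bigoh(1)$. I would therefore optimize jointly over $(\step,\radius)$: taking $\step\propto(\hmax/\nRuns)^{3/4}$ and $\radius\propto(\hmax/\nRuns)^{1/4}$ balances the three terms so each scales as $\nRuns^{-1/4}$, which delivers the claimed $\bigoh(\nRuns^{-1/4})$ rate. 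Matching constants to the stated $\step$ and $\radius$ is routine arithmetic; the conceptual crux is that the $1/\radius^2$ variance of the one-point scheme degrades the exponent from the $1/2$ of the two-point regime down to $1/4$.
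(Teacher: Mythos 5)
Your proposal is correct and follows essentially the same route as the paper's own proof: the energy inequality of \cref{lem:template} with $\base=\eq$, telescoping with $\breg(\eq,\init)\leq\hmax$ (from the uniform initialization), the bias/second-moment bounds of \cref{prop:1point} (your tower-property split is just an implicit form of the paper's $\paysignal_\run = \payfield(\curr)+\bias_\run+\noise_\run$ decomposition), the multilinear conversion of the linearized regret against $\eq$ into $\gap_{\loss}(\ergodic_{\nRuns})$, and the same $(\step,\radius)$ tuning to balance the three terms at $\bigoh(\nRuns^{-1/4})$. The differences are purely presentational (order of the steps and where the noise term is killed), so no gap to report.
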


An important observation here is that the players' equilibrium convergence rate under $\eqref{eq:3MW}+\eqref{eq:1point}$ no longer matches the convergence rate of the vanilla \ac{MMW} algorithm (\cref{thm:JW}).
The reason for this is the bias-variance trade-off in the estimator \eqref{eq:1point}, and is reminiscent of the drop in the rate of regret minimization from $\bigoh(\nRuns^{1/2})$ to $\bigoh(\nRuns^{2/3})$ under \eqref{eq:IWE} with bandit feedback and explicit exploration in finite games.
A kernel-based approach in the spirit of \citet{BLE17} could possibly be used to fill the $\bigoh(\nRuns^{1/4})$ gap between \cref{thm:JW,thm:1point}, but this would come at the cost of a possibly catastrophic dependence on the dimension (which is already quadratic in our setting).
This consideration is beyond the scope of our work, but it would constitute an important future direction. 


\section{Bandit learning in $\nPlayers$-player quantum games}
\label{sec:general}

We conclude our paper with an examination of the behavior of the \ac{MMW} algorithm in general, $\nPlayers$-player quantum games.
Here, a major difficulty that arises is that, in stark contrast to the min-max case, the set of the game's equilibria can be disconnected, so any convergence result will have to be, by necessity, local.
In addition, because general $\nPlayers$-games do not have the amenable profile of a bilinear min-max problem \textendash\ they are multilinear, multi-objective problems \textendash\ it will not be possible to obtain any convergence guarantees for the game's empirical frequency of play (since there is no convex structure to exploit).
Instead, we will have to focus squarely on the induced trajectory of play, which carries with it a fair share of complications.

Inspired by the very recent work of \cite{LMB23}, we will not constrain our focus to a specific class of \emph{games}, but to a specific class of \emph{equilibria}.
In particular, we will consider the behavior of \ac{MMW}-based learning with respect to \aclp{NE} $\eq\in\denmats$ that satisfy the \emph{variational stability} condition
\begin{equation}
\label{eq:VS}
\tag{VS}
\trof{\payfield(\denmat) (\denmat - \eq)}
	< 0
	\quad
	\text{for all $\denmat\in\pnhd\exclude{\eq}$}.
\end{equation}
This condition can be traced back to \cite{MZ19}, and can be seen as a game-theoretic analogue of first-order stationarity in the context of continuous optimization, or as an equilibrium refinement in the spirit of the seminal concept of \emph{evolutionary stability} in population games \cite{MP73,May82}.%
\footnote{It should be noted here that, if reduced to the simplex, the stability condition \eqref{eq:VS} is exactly equivalently to the variational characterization of evolutionarily stable states due to \citet{Tay79}.}
Importantly, as was shown in \cite{LMB23}, variationally stable equilibria are the only equilibria that are asymptotically stable under the \emph{continuous-time} dynamics of the \ac{FTRL} class of learning policies, so it stands to reason to ask whether they enjoy a similar convergence landscape in the context of bona fide, discrete-time learning with minimal, payoff-based feedback.

Our final result provides an unambiguously positive answer to this question:%
\footnote{Strictly speaking, the algorithms \eqref{eq:3MW} and \eqref{eq:1point} have been stated in the context of $2$-player games.
The extension to $\nPlayers$-player games is straightforward, so we do not present it here;
for the details (which hide no subtleties), see the appendix.}

\begin{restatable}{theorem}{VS}
\label{thm:VS}
Fix some tolerance level $\conf\in(0,1)$ and suppose that the players of an $\nPlayers$-player quantum game follow \eqref{eq:3MW} with bandit, realization-based feedback, and surrogate gradients provided by the estimator \eqref{eq:1point} with step-size and sampling radius parameters such that
\begin{equation}
\label{eq:params}
\txs
(i)\;
\sum_{\run=\start}^{\infty} \step_\run
	= \infty,
	\quad
(ii)\;
\sum_{\run=\start}^{\infty} \step_\run \radius_\run
	< \infty,
	\quad
	\text{and}
	\quad
(ii)\;
\sum_{\run=\start}^{\infty} \step_{\run}^{2}/\radius_{\run}^{2}
	< \infty.
\end{equation}
If $\eq$ is variationally stable, there exists a neighborhoold $\nhd$ of $\eq$ such that
\begin{equation}
\label{eq:attract}
\txs
\probof{\lim_{\run\to\infty} \curr = \eq}
	\geq 1-\conf
	\quad
	\text{whenever $\init\in\nhd$.}
\end{equation}
\end{restatable}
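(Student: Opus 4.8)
The plan is to treat the (separable) quantum relative entropy $\event_\run \defeq \breg(\eq,\curr) = \sum_{\play} \breg(\eq_\play, \denmat_{\play,\run})$ as a stochastic Lyapunov function and to show that, on a suitable neighborhood of $\eq$, it behaves as an almost-supermartingale whose downward drift is governed precisely by the stability functional $\trof{\payfield(\denmat)(\denmat-\eq)}$ appearing in \eqref{eq:VS}. The two workhorses are the feedback-agnostic energy inequality of \cref{lem:template} (which applies verbatim to the input sequence produced by \eqref{eq:1point}) and the moment bounds of \cref{prop:1point}. Since the iterates $\curr$ produced by \eqref{eq:3MW} are of Gibbs form and hence strictly positive-definite, $\event_\run$ is always finite, and a Pinsker-type inequality $\breg(\eq,\denmat) \geq \tfrac12\onenorm{\eq-\denmat}^2$ lets us pass freely between smallness of the energy and closeness to $\eq$.

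First I would instantiate \cref{lem:template} at $\base = \eq$ and take conditional expectations given $\filter_\run$. Splitting $\paysignal_\run$ into its conditional mean and a martingale part, the cross term $\step_\run\trof{\paysignal_\run(\curr-\eq)}$ decomposes as $\step_\run\trof{\payfield(\curr)(\curr-\eq)}$ plus a bias contribution bounded by $\step_\run\,4\Dim\smooth\radius_\run\,\diam(\denmats)$ via \cref{prop:1point}, while the quadratic term is controlled by $2\Dim^2\bound^2\step_\run^2/\radius_\run^2$. On any neighborhood where \eqref{eq:VS} holds the leading term is $\leq 0$, so writing $b_\run \defeq -\trof{\payfield(\curr)(\curr-\eq)} \geq 0$ I obtain the recursion
\begin{equation*}
\exof{\event_{\run+1}\given\filter_\run} \leq \event_\run - \step_\run b_\run + c_\run, \qquad c_\run = \bigoh(\step_\run\radius_\run) + \bigoh(\step_\run^2/\radius_\run^2),
\end{equation*}
whose perturbation $c_\run$ is summable by hypotheses \textit{(ii)} and \textit{(iii)}. (These hypotheses also force $\radius_\run \to 0$, so the sampled points eventually satisfy the feasibility requirement of \cref{prop:adjustment}.)

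The crux is the high-probability confinement. Fix a sublevel set $\setdef{\denmat \in\denmats}{\breg(\eq,\denmat)\leq L}$ small enough (by Pinsker) to sit inside the stability region $\pnhd$, let $\hit \defeq \inf\setof{\run : \event_\run > L}$, and stop the recursion at $\hit$. Because the drift is non-positive up to time $\hit$, the process $W_\run \defeq \event_{\run\wedge\hit} + \sum_{\runalt \geq \run} c_\runalt$ is a nonnegative supermartingale, so Ville's maximal inequality yields $\probof{\hit < \infty} \leq (\event_\start + \sum_{\runalt\geq\start} c_\runalt)/L$. Choosing the initial neighborhood $\nhd = \setdef{\denmat}{\breg(\eq,\denmat) < \eps}$ with $\eps$ small and the aggregate noise budget $\sum_{\runalt} c_\runalt$ correspondingly small (tuning the leading magnitudes of the admissible schedule), this probability is at most $\conf$. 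I expect this balancing of the initial energy against the accumulated bias--variance budget — so as to guarantee confinement to $\pnhd$ with probability at least $1-\conf$ while still covering the \emph{entire} trajectory from time $\start$ — to be the main technical obstacle.

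Finally, on the confinement event $\setof{\hit = \infty}$ the recursion is an exact Robbins--Siegmund scheme: $\event_\run$ converges almost surely and $\sum_\run \step_\run b_\run < \infty$. Since $\sum_\run \step_\run = \infty$ by \textit{(i)}, this forces $\liminf_\run b_\run = 0$; by \eqref{eq:VS} together with the compactness of $\denmats$ (so that the stability functional is bounded away from $0$ outside any neighborhood of $\eq$), a subsequence of $\curr$ converges to $\eq$, whence $\event_\run \to 0$ along it. As $\event_\run$ converges, its limit must be $0$, and Pinsker then gives $\curr \to \eq$ on an event of probability at least $1-\conf$, which is exactly \eqref{eq:attract}. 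The extension of \cref{lem:template,prop:1point} from the $2$-player to the $\nPlayers$-player setting is routine, since the energy $\event_\run$ is separable across players, so the argument transfers without change.
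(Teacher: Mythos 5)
Your proposal is correct, and its skeleton coincides with the paper's: both arguments use the quantum relative entropy $\breg(\eq,\curr)$ as a stochastic Lyapunov function, feed the estimator \eqref{eq:1point} into the energy inequality of \cref{lem:template}, invoke the ($\nPlayers$-player version of the) 1PE moment bounds, establish high-probability confinement in an entropy sublevel set contained in the region where \eqref{eq:VS} holds, and then combine a Robbins--Siegmund argument with subsequential convergence and the Pinsker-type bound of \cref{lem:strong-conv} to conclude. Where you genuinely differ is the implementation of the confinement step. The paper keeps the energy inequality pathwise, splits the accumulated perturbation into three processes \textendash\ $\sbias_\run$ (bias), $\snoise_\run$ (martingale noise) and $\serror_\run$ (second-order term) \textendash\ bounds the supremum of each by Doob's maximal inequality, takes a union bound to build a good event of probability at least $1-\conf$, and proves confinement by induction on that event. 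You instead take conditional expectations first, absorb the bias and second-moment contributions into a deterministic summable compensator $c_\run = \bigoh(\step_\run\radius_\run)+\bigoh(\step_\run^2/\radius_\run^2)$, and control the single nonnegative supermartingale $\breg(\eq,\denmat_{\run\wedge\hit})+\sum_{\runalt\geq\run}c_\runalt$ via Ville's inequality. Your route is more compact (one maximal inequality, no union bound, no induction), while the paper's decomposition keeps the three error sources separate and thereby makes the explicit constants in the schedule-tuning conditions visible; both reduce to the same trade-off between the initial energy plus noise budget and the sublevel threshold, and both share the same caveat, which you correctly flag: conditions \textit{(ii)}\textendash\textit{(iii)} of \eqref{eq:params} only give finiteness, so the schedules must in addition be scaled so that the aggregate budget is small relative to $\conf$.

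Two minor points. First, your final step should formally be run on the stopped process: Robbins--Siegmund applies to $\breg(\eq,\denmat_{\run\wedge\hit})$ globally, whose drift is bounded by $-\step_\run b_\run\one\{\hit>\run\}+c_\run$ with $b_\run\one\{\hit>\run\}\geq 0$ (note $\{\hit>\run\}\in\filter_\run$), and its conclusions are then restricted to $\{\hit=\infty\}$; "the recursion is Robbins--Siegmund on the confinement event" is informal as stated. This is exactly the role of the indicator processes $\one_{\event_\run}$ in the paper's Step 5. Second, your parenthetical claim that \textit{(ii)}\textendash\textit{(iii)} force $\radius_\run\to 0$ is false: for instance, $\step_\run=1/\run$ with $\radius_\run=1$ when $\run$ is a perfect square and $\radius_\run=\run^{-1/4}$ otherwise satisfies all three conditions while $\radius_\run\not\to 0$. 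This is harmless, however, because feasibility of the sampled points is not an "eventual" property: one needs $\radius_\run\leq\safe$ for \emph{every} $\run$, which is a standing requirement of the algorithm (\cref{prop:adjustment}), not a consequence of the summability hypotheses.
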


It is worth noting that the last-iterate convergence guarantee of \cref{thm:VS} is considerably stronger than the time-averaged variants of \cref{thm:JW,thm:1point,thm:2point}, and we are not aware of any comparable convergence guarantee for general quantum games.
[Trivially, last-iterate convergence implies time-averaged convergence, but the converse, of course, may fail to hold]
As such, especially in cases that require to track the trajectory of the system or the players' day-to-day rewards, \cref{thm:VS} provides an important guarantee for the realized sequence of events.

On the other hand, in contrast to \cref{thm:VS}, it should be noted that the guarantees of \cref{thm:JW,thm:1point,thm:2point} are global.
Given that general quantum games may in general possess a large number of disjoint \aclp{NE}, this transition from global to local convergence guarantees seems unavoidable.
It is, however, an open question whether \eqref{eq:VS} could be exploited further in order to deduce the rate of convergence to such equilibria;
we leave this as a direction for future research.

\section{Numerical Experiments}
\label{sec:numerics}

In this last section, we provide numerical simulations to validate and explore the performance of \eqref{eq:MMW} with payoff-based feedback. Additional experiments can be found in \cref{app:simulations}.

\para{Game setup}
Our testbed is a two-player zero-sum quantum game, which is the quantum analogue of a $2\times2$ min-max game with actions $\{\pure_1,\pure_2\}$ and $\{\purealt_1,\purealt_2\}$, and payoff matrix
\begin{equation}
P =
\begin{pmatrix}
(4,-4) & (2,-2)\\
(-4,4) & (-2,2)
\end{pmatrix}
\end{equation}
In the quantum regime, the payoff information of the quantum game is encoded in the Hermitian matrices $\paymat_{1} = \diag(4,2,-4,-2)$, and $\paymat_{2} = -\paymat_{1}$ as per \cref{eq:pay-mat} in \cref{sec:prelims}.
By elementary considerations, the action profile $(\pure_1,\purealt_2)$ is a strict \acl{NE} of the classical zero-sum game, which corresponds to the pure quantum state with density matrix profile $\eq = (\eq_{1},\eq_{2})$ where $\eq_{1} =e_1 \otimes e_1$ and $\eq_{2} = e_2 \otimes e_2$
in the standard basis in which $\paymat_{1}$ and $\paymat_{2}$ are diagonal.

\para{Convergence speed analysis}
In \cref{fig:gap}, we evaluate the convergence properties of \eqref{eq:3MW} using the estimators \eqref{eq:2point} and \eqref{eq:1point}, and compare it with the full information variant \eqref{eq:MMW}.
For each method, we perform $10$ different runs, with $\nRuns = 10^{5}$ steps each, and compute the mean value of the duality gap as a function of the iteration $\run = \running,\nRuns$.
The solid lines correspond to the mean values of the duality gap of each method, and the shaded regions enclose the area  of $\pm 1$ (sample) standard deviation among the $10$ different runs. Note that the red line, which corresponds to the full information \eqref{eq:MMW}, does not have a shaded region, since there is no randomness in the algorithm.
All the runs for the three different methods were initialized for $\dmat = 0$ and we used $\step = 10^{-2}$ for all methods.
In particular, for \eqref{eq:3MW} with gradient estimates given by \eqref{eq:2point} estimator, we used a sampling radius $\radius = 10^{-2}$, and for \eqref{eq:3MW} with \eqref{eq:1point} estimator, we used $\radius = 10^{-1}$ (in tune with our theoretical results which suggest the use of a tighter sampling radius when mixed payoff information is available to the players).
\begin{figure}[t]
\centering
\begin{subfigure}{0.49\textwidth}
    \includegraphics[width=\textwidth]{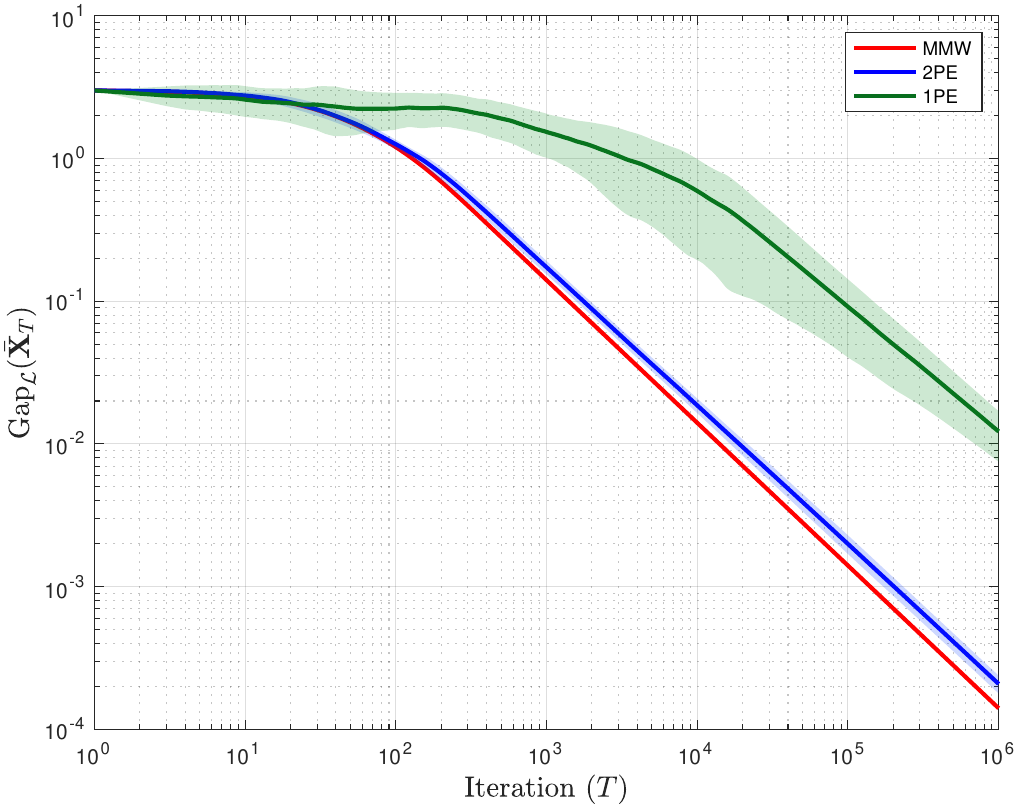}
\end{subfigure}
\begin{subfigure}{0.49\textwidth}
    \includegraphics[width=\textwidth]{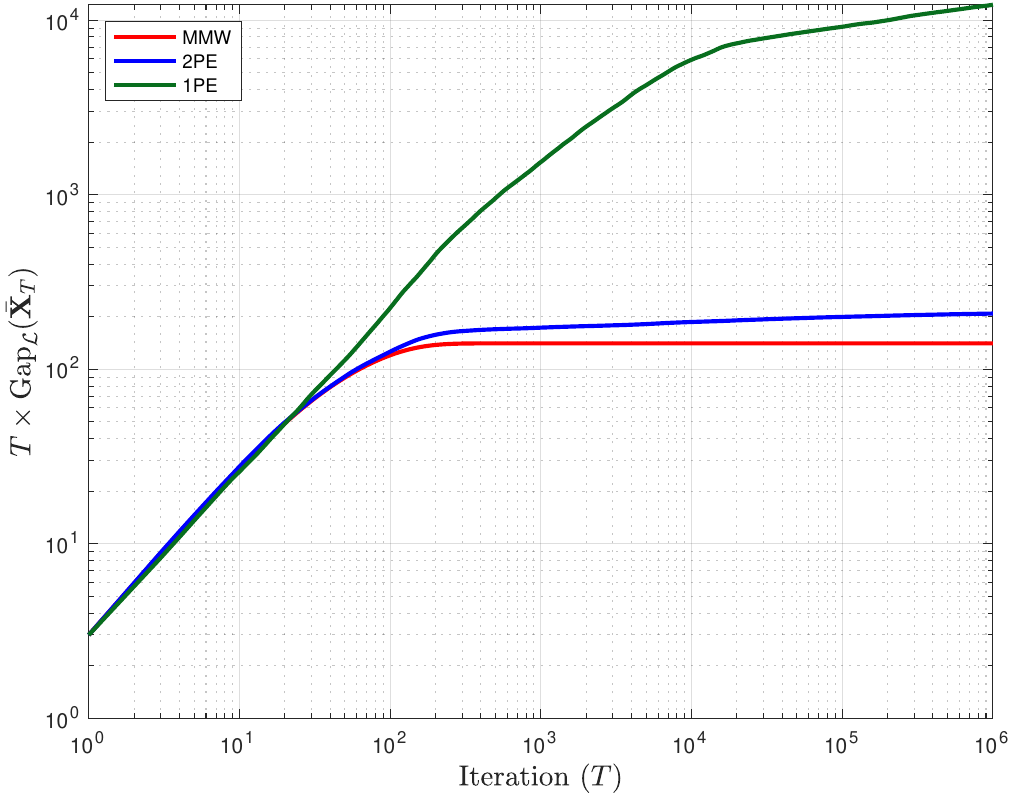}
\end{subfigure}
\caption{Performance evaluation of the \eqref{eq:3MW} with the \eqref{eq:2point} and \eqref{eq:1point} estimators and comparison with the full information \eqref{eq:MMW}. The solid lines correspond to the mean values of the duality gap of each method, and the shaded regions enclose the area  of $\pm 1$ (sample) standard deviation among the different runs.}
\label{fig:gap}
\end{figure}

\Cref{fig:gap} has several important take-aways.
First and foremost, as is to be expected, the payoff-based methods lag behind the full-information variant of \eqref{eq:MMW};
however, what is particularly surprising is that the drop in performance is singularly mild.
As we see in the second plot in \cref{fig:gap}, the various algorithms achieved a rate of convergence closer to $\bigoh(1/\nRuns)$, which is significantly faster than $\bigoh(1/\sqrt{\nRuns})$ and/or $\bigoh(1/\nRuns^{1/4})$.
This suggests that, in practice, the bandit variants of \eqref{eq:MMW} may yield excellent performance benefits, despite the high degree of uncertainty incurred by the complete lack of information on the game being played.

\section*{Acknowledgments}
\begingroup
\small
%
%
This work has been partially supported by
the Air Force Office of Scientific Research under award number FA9550-20-1-0397,
the French National Research Agency (ANR) in the framework of
the ``Investissements d'avenir'' program (ANR-15-IDEX-02),
the LabEx PERSYVAL (ANR-11-LABX-0025-01),
MIAI@Grenoble Alpes (ANR-19-P3IA-0003),
and
project MIS 5154714 of the National Recovery and Resilience Plan Greece 2.0 funded by the European Union under the NextGenerationEU Program.
Additional support is gratefully acknowledged from NSF 1915967, 2118199, 2229012, 2312204.
KL is grateful for support by the Onassis Foundation (F ZR 033-1/2021-2022).
PM is also a member of the Archimedes Unit, Athena RC, Department of Mathematics, National \& Kapodistrian University of Athens.
NB was supported by the Koret Foundation via the Digital Living 2030 project.
\endgroup

\appendix
\numberwithin{equation}{section}		
\numberwithin{lemma}{section}		
\numberwithin{proposition}{section}		
\numberwithin{theorem}{section}		
\numberwithin{corollary}{section}		

\section*{Appendix}

In the series of technical appendices that follow, we provide the missing proofs from the main part of our paper, and we provide some numerical illustrations of the performance of the proposed algorithms.
As a roadmap, we begin in \cref{app:aux} with some auxiliary results that are required throughout our analysis.
Subsequently, in \cref{app:estimates,app:BMMW,app:general}, we provide the proofs of the results presented in \cref{sec:estimates,sec:BMMW,sec:general} respectively.
Finally, in \cref{app:simulations}, we provide a suite of numerical experiments to assess the practical performance of \eqref{eq:3MW} using the estimators \eqref{eq:2point} and \eqref{eq:1point}, and we compare it with the full information setting underlying \eqref{eq:MMW}.

\section{Auxiliary Results}
\label{app:aux}

We now introduce some notation for quantum games in a $\nPlayers$-player setting, and explain how the extension from the 2-player setting is straightforward. 

\para{$\nPlayers$-player quantum games}

First of all, a quantum game $\qgame$ consists of a finite set of players $\play\in\players = \{1,\dotsc,\nPlayers\}$, where each player $\play\in\players$ has access to a complex Hilbert space $\hilbert_{\play} \cong \C^{\vdim_{\play}}$. The set of pure states is the unit sphere $\quants_{\play} \defeq \setdef{\quant_{\play}\in\hilbert_{\play}}{\norm{\quant_{\play}}=1}$ of $\hilbert_{\play}$.
We will write $\quants \defeq \prod_{\play} \quants_{\play}$ for the space of all ensembles $\quant = (\quant_{1},\dotsc,\quant_{\nPlayers})$ of pure states $\quant_{\play}\in\quants_{\play}$ that are independently prepared by each $\play\in\players$.

In analogy with the 2-player case, each outcome $\outcome\in\outcomes$ is associated to a positive semi-definite operator $\povm_{\outcome}\from\hilbert\to\hilbert$ defined on the tensor product $\hilbert \defeq \bigotimes_{\play} \hilbert_{\play}$ of the players' individual state spaces;
we further assume that $\sum_{\outcome\in\outcomes} \povm_{\outcome} = \eye$, thus, the probability of observing $\outcome\in\outcomes$ at state $\quant\in\quants$ is
\begin{equation}
\pdist_{\outcome}(\quant)
	= \bra{\quant_{1}\otimes\dotsm\otimes\quant_{\nPlayers}}
		\povm_{\outcome}
		\ket{\quant_{1}\otimes\dotsm\otimes\quant_{\nPlayers}}
\end{equation}
and, the player's expected payoff at state $\quant\in\quants$ will be
\begin{equation}
\pay_{\play}(\quant)
	\defeq \obs{\payobs_{\play}}
	\equiv \insum_{\outcome} \pdist_{\outcome}(\quant) \, \payobs_{\play}(\outcome)
\end{equation}

Similarly to the 2-player setting, if each player $\play\in\players$ prepares a density matrix $\denmat_{\play}$ as per \eqref{eq:denmat}, the expected payoff of player $\play\in\players$ under $\denmat = (\denmat_1,\dots,\denmat_\nPlayers)$ will be
\begin{align}
\label{eq:pay-mat-N}
\pay_{\play}(\denmat)
    &= \sum_{\outcome\in\outcomes}\payobs_{\play}(\outcome) \trof{\povm_{\outcome} \denmat_{1}\otimes\dotsm\otimes\denmat_{\nPlayers}}
    = \trof{\paymat_{\play}\, \denmat_{1}\otimes\dotsm\otimes\denmat_{\nPlayers}}
\end{align}
where $\paymat_\play = \sum_{\outcome\in\outcomes}\payobs_{\play}(\outcome)\povm_{\outcome} \in \hilbert$ for $\play \in \players$. Finally, we denote by $\payfield_{\play}(\denmat)$ the individual payoff gradient of player $\play$ under $\denmat$ as
\begin{align}
\label{eq:payfield-def}
    \payfield_{\play}(\denmat)
	& \defeq \nabla_{\denmat_{\play}^{\top}} \pay_{\play}(\denmat) 
\end{align}

All other notions are extended, accordingly. \endenv
\vspace{0.5cm}


As noted in \cref{sec:prelims}, we define the norm $\norm{A} = \sqrt{\trof{A^{\dag}A}}$ for any $A \in \herm[\vdim_\play]$, \ie $(\herm[\vdim_\play],\norm{\cdot})$ is an inner-product space. With a slight abuse of notation, we define for $\denmat = (\denmat_1,\dots,\denmat_\nPlayers)\in \denmats$ its norm as:
\begin{equation}
\label{eq:prod-norm}
\norm{\denmat} = \sqrt{\sum_{\play = 1}^{\nPlayers} \norm{\denmat_\play}^2}
\end{equation}

%

\begin{lemma}
\label{lem:diam}
For any $\denmat_\play \in \denmats_\play$, it holds $\norm{\denmat_\play} \leq 1$, and $\diam(\denmats) = 2\sqrt{\nPlayers}$.
\end{lemma}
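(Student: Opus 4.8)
The plan is to prove the two assertions in turn, first the per-player norm bound and then the diameter, the latter assembled from the product structure of $\denmats = \prod_\play \denmats_\play$. For the bound $\norm{\denmat_\play}\le 1$, I would pass to the spectral decomposition of $\denmat_\play$: since $\denmat_\play\in\psd[\vdim_\play]$ with $\tr\denmat_\play = 1$, its eigenvalues $\matval_1,\dots,\matval_{\vdim_\play}$ are nonnegative and sum to $1$, so each lies in $[0,1]$. Expressing the norm through the spectrum gives $\norm{\denmat_\play}^{2} = \sum_i \matval_i^{2}\le \sum_i \matval_i = 1$, where the inequality is simply $\matval_i^{2}\le\matval_i$ on $[0,1]$; equality holds exactly at rank-one projectors, i.e. pure states. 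This settles the first claim.

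For the diameter, the definition of the product norm in \eqref{eq:prod-norm} makes the squared distance additive over players, $\norm{\denmat-\denmatalt}^{2} = \sum_\play \norm{\denmat_\play-\denmatalt_\play}^{2}$, so the diameter factorizes as $\diam(\denmats)^{2} = \sum_\play \diam(\denmats_\play)^{2}$ and it suffices to show $\diam(\denmats_\play) = 2$ for each player. The upper bound is immediate from the triangle inequality and Part 1: for any $\denmat_\play,\denmatalt_\play\in\denmats_\play$ one has $\norm{\denmat_\play-\denmatalt_\play}\le\norm{\denmat_\play}+\norm{\denmatalt_\play}\le 2$, whence $\diam(\denmats_\play)\le 2$ and therefore $\diam(\denmats)\le 2\sqrt{\nPlayers}$.

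To upgrade this to an equality I would exhibit, coordinatewise, an extremal pair saturating $\diam(\denmats_\play)=2$. The natural candidates are the diagonal pure states $\diracmat_{11},\diracmat_{22}\in\denmats_\play$ supported on orthogonal basis vectors: having disjoint supports, they saturate the triangle inequality used above, so their separation attains the maximal value $\norm{\diracmat_{11}}+\norm{\diracmat_{22}} = 2$. Selecting this pair independently in every coordinate produces profiles $\denmat,\denmatalt\in\denmats$ with $\norm{\denmat-\denmatalt}^{2} = \sum_\play 4 = 4\nPlayers$, i.e. $\norm{\denmat-\denmatalt} = 2\sqrt{\nPlayers}$, matching the upper bound and establishing $\diam(\denmats)=2\sqrt{\nPlayers}$.

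Part 1 and the triangle-inequality upper bound are routine. The main obstacle is the lower bound: one must pin down the per-player diameter by identifying an extremal pair and confirming that density matrices with orthogonal supports genuinely saturate $\diam(\denmats_\play)=2$. Once this per-player value is secured, the product decomposition delivers the stated diameter $2\sqrt{\nPlayers}$ with no further work.
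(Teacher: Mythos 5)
Your Part 1 is correct and identical in substance to the paper's proof (spectral decomposition, $\sum_j \matval_j^2 \le \sum_j \matval_j = 1$), and your triangle-inequality upper bound $\diam(\denmats) \le 2\sqrt{\nPlayers}$ is also fine (the paper gets the same bound via $\norm{\denmat_\play - \denmatalt_\play}^2 \le 2\norm{\denmat_\play}^2 + 2\norm{\denmatalt_\play}^2$). The genuine gap is in your saturation step, and it is not merely unjustified \textendash\ it is false. Orthogonally supported states do \emph{not} saturate the triangle inequality: tightness in $\norm{\denmat_\play - \denmatalt_\play} \le \norm{\denmat_\play} + \norm{\denmatalt_\play}$ requires the two matrices to be antiparallel ($\denmat_\play = -c\,\denmatalt_\play$, $c \ge 0$), which is impossible for two positive-semidefinite matrices of unit trace. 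Disjoint supports instead give the Pythagorean identity: since $\trof{\diracmat_{11}\diracmat_{22}} = 0$, a direct computation yields $\norm{\diracmat_{11} - \diracmat_{22}}^2 = \trof{\diracmat_{11} + \diracmat_{22}} = 2$, so your candidate pair is at distance $\sqrt{2}$, not $2$.

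Moreover, no repaired choice of pair can close this gap, because the claimed equality fails: for any $\denmat_\play, \denmatalt_\play \in \denmats_\play$ one has $\norm{\denmat_\play - \denmatalt_\play}^2 = \norm{\denmat_\play}^2 + \norm{\denmatalt_\play}^2 - 2\trof{\denmat_\play\denmatalt_\play} \le 1 + 1 - 0 = 2$, using positivity of $\trof{\denmat_\play\denmatalt_\play}$ for PSD matrices together with Part 1. Hence $\diam(\denmats_\play) = \sqrt{2}$ (attained exactly by your orthogonal pure states), and by the additive product structure $\diam(\denmats) = \sqrt{2\nPlayers} < 2\sqrt{\nPlayers}$. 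Your attempt to make the lower bound explicit in fact exposes a flaw in the lemma itself: the paper's proof simply asserts ``since the equality is attained'' after the $2\norm{\denmat_\play}^2 + 2\norm{\denmatalt_\play}^2$ bound, but equality there would require $\denmat_\play = -\denmatalt_\play$, which is likewise impossible on the spectraplex. Only the inequality $\diam(\denmats) \le 2\sqrt{\nPlayers}$ is used in the subsequent analysis (e.g., to bound the bias and noise terms), and that part both you and the paper establish correctly.
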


\begin{proof}
For the first part, since $\denmat_\play \in \denmats_\play$, it admits an orthonormal decomposition $Q\Lambda Q^{\dag}$ such that $QQ^{\dag} = Q^{\dag}Q = \eye$ and $\Lambda = \diag(\lambda_1,\dots,\lambda_{\vdim_\play})$ with $\sum_{j = 1}^{\vdim_\play} \lambda_j = 1$, and $\lambda_j \geq 0$ for all $j$. Hence
\begin{align}
\norm{\denmat_\play}^2 
	= \trof{\denmat_{\play}^{\dag}\denmat_\play} 
	= \trof{Q\Lambda Q^{\dag}Q\Lambda Q^{\dag}}
	=  \trof{Q\Lambda^2 Q^{\dag}}
	= \sum_{j = 1}^{\vdim_\play} \lambda_{\play}^2
	\leq \sum_{j = 1}^{\vdim_\play} \lambda_{\play} = 1
\end{align}
where the last inequality holds, since $0 \leq \lambda_j \leq 1$, and the result follows.

For the second part, letting $\denmat = (\denmat_1,\dots,\denmat_\nPlayers)$ and $\denmatalt = (\denmatalt_1,\dots,\denmatalt_\nPlayers)$ be two points in $\denmats$,
we have
\begin{align}
	\norm{\denmat - \denmatalt} 
	& = \sqrt{\sum_{\play = 1}^{\nPlayers}\norm{\denmat_\play - \denmatalt_\play}^2}
	\leq  \sqrt{\sum_{\play = 1}^{\nPlayers}(2\norm{\denmat_\play}^2 +2\norm{\denmatalt_\play}^2)}
	\leq 2\sqrt{\nPlayers}
\end{align}
and since the equality is attained, we get the result.
\end{proof}

Our next result concerns the \emph{quantum relative entropy}
\begin{equation}
\breg(\base,\denmat)
	= \sum_{\play = 1}^{\nPlayers} \breg_\play(\base_\play,\denmat_\play)
\end{equation}
where
$\base = (\base_1,\dots,\base_\nPlayers) \in \denmats$ and $\denmat = (\denmat_1,\dots,\denmat_{\nPlayers}) \in \relint(\denmats)$
and
\begin{equation}
\breg_\play(\base_\play,\denmat_\play)
	\defeq \trof{\base_\play (\log\base_\play- \log\denmat_\play)}
\end{equation}
The lemma we will require is a semidefinite version of Pinsker's inequality which reads as follows:

\begin{lemma}
\label{lem:strong-conv}
For all $\base \in \denmats$ and $\denmat \in \relint(\denmats)$ we have
\begin{equation}
\label{eq:strong-conv}
\breg(\base,\denmat) \geq \frac{1}{2}\norm{\base - \denmat}^2
\end{equation}
\end{lemma}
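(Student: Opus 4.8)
The plan is to reduce the claim to a single density matrix and then recognize the inequality as a statement of $1$-strong convexity of the von Neumann negentropy with respect to the Frobenius norm. First I would exploit the product structure: by definition $\breg(\base,\denmat) = \sum_{\play\in\players} \breg_\play(\base_\play,\denmat_\play)$, while the norm in \eqref{eq:prod-norm} gives $\norm{\base - \denmat}^2 = \sum_{\play} \norm{\base_\play - \denmat_\play}^2$. Hence it suffices to prove, for each player, the single-matrix inequality $\breg_\play(\base_\play,\denmat_\play) \ge \tfrac12 \norm{\base_\play - \denmat_\play}^2$ and then sum over $\play\in\players$.

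For the single-matrix statement, write $\hreg(\denmat) = \trof{\denmat\log\denmat}$ for the von Neumann negentropy on $\relint(\denmats_\play)$, whose gradient is $\nabla\hreg(\denmat) = \log\denmat + \eye$. Since $\trof{\base_\play - \denmat_\play} = 0$ (both matrices have unit trace), a direct computation identifies $\breg_\play$ with the Bregman divergence of $\hreg$, \ie $\breg_\play(\base_\play,\denmat_\play) = \hreg(\base_\play) - \hreg(\denmat_\play) - \langle \nabla\hreg(\denmat_\play), \base_\play - \denmat_\play\rangle$. Writing $Y_t = \denmat_\play + t(\base_\play - \denmat_\play)$ and $H = \base_\play - \denmat_\play$, Taylor's theorem with integral remainder then yields
\[
\breg_\play(\base_\play,\denmat_\play) = \int_0^1 (1-t)\,\langle H,\ \nabla^2\hreg(Y_t)[H]\rangle \dd t .
\]
Here $Y_t$ is a convex combination of density matrices, hence itself a density matrix, and it is positive definite for $t\in[0,1)$ because $\denmat_\play\in\relint(\denmats_\play)$; thus all of its eigenvalues lie in $(0,1]$. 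The problem therefore reduces to the uniform lower bound $\langle H, \nabla^2\hreg(Y)[H]\rangle \ge \norm{H}^2$ for every density matrix $Y$ and every Hermitian $H$, which is exactly $1$-strong convexity of $\hreg$ in the Frobenius norm and immediately gives $\breg_\play \ge \tfrac12\norm{H}^2$ upon integrating $\int_0^1(1-t)\dd t = \tfrac12$.

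The crux is the Hessian estimate, which I would establish through the Daleckii--Krein formula for the Fréchet derivative of the matrix logarithm. In an orthonormal eigenbasis of $Y$ with eigenvalues $\lambda_k$, the second directional derivative of $\hreg$ is
\[
\langle H, \nabla^2\hreg(Y)[H]\rangle = \trof{H\, \mathrm{D}\log(Y)[H]} = \sum_{k,\ell} \abs{H_{k\ell}}^2\, \frac{\log\lambda_k - \log\lambda_\ell}{\lambda_k - \lambda_\ell},
\]
where the diagonal terms $k=\ell$ are read as $\abs{H_{kk}}^2/\lambda_k$. Each divided difference of $\log$ admits the lower bound $\tfrac{\log a - \log b}{a-b} \ge \tfrac{1}{\max\{a,b\}}$ for $a,b\in(0,1]$ (by the mean value theorem the ratio equals $1/\xi$ for some $\xi$ between $a$ and $b$), and since all eigenvalues of $Y$ are at most $1$ this is $\ge 1$. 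Consequently $\langle H, \nabla^2\hreg(Y)[H]\rangle \ge \sum_{k,\ell}\abs{H_{k\ell}}^2 = \norm{H}^2$, which closes the argument.

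The main obstacle is the Hessian computation itself --- getting the divided-difference representation of $\mathrm{D}\log$ right and handling the coincident-eigenvalue case cleanly --- together with the minor technical point that $\base_\play$ may be rank-deficient (lie on the boundary of $\denmats_\play$), so the integrand must be controlled up to $t=1$; this is harmless because the lower bound on the integrand is uniform in $t$. As an alternative that avoids the Hessian entirely, one could invoke the trace-norm quantum Pinsker inequality $\breg_\play(\base_\play,\denmat_\play) \ge \tfrac12\onenorm{\base_\play - \denmat_\play}^2$ and then use $\onenorm{\cdot} \ge \norm{\cdot}$ (the trace norm dominates the Frobenius norm), but the self-contained divided-difference route is preferable here since it produces the stated Frobenius constant directly.
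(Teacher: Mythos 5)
Your proposal is correct, and it differs from the paper's proof in the one step that actually carries the weight. Both arguments share the same skeleton: decompose $\breg$ and the product norm player-by-player, identify $\breg_\play(\base_\play,\denmat_\play)$ with the Bregman divergence of the von Neumann negentropy $\hreg_\play(\denmat_\play) = \trof{\denmat_\play\log\denmat_\play}$ using $\trof{\base_\play - \denmat_\play} = 0$, and conclude from $1$-strong convexity of $\hreg_\play$ in the Frobenius norm. Where you part ways is in how that strong convexity is established: the paper simply \emph{cites} the known fact (Yu, 2015) that $\hreg_\play$ is $1$-strongly convex with respect to the nuclear norm $\onenorm{\cdot}$ and then invokes the domination $\onenorm{\denmat_\play} \geq \norm{\denmat_\play}$ \textendash\ which is precisely the ``alternative route'' you sketch in your last paragraph \textendash\ whereas your main argument proves the Frobenius-norm Hessian bound from scratch via the Daleckii\textendash Krein divided-difference representation of $\mathrm{D}\log$, the mean-value bound $\frac{\log\lambda_k - \log\lambda_\ell}{\lambda_k - \lambda_\ell} \geq 1/\max\{\lambda_k,\lambda_\ell\} \geq 1$ (valid because density matrices have spectrum in $(0,1]$), and Taylor's theorem with integral remainder. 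Your route buys self-containedness and produces the constant $1/2$ directly, at the price of the Fréchet-derivative machinery and the boundary care at $t=1$ when $\base_\play$ is rank-deficient \textendash\ which you handle correctly, since the integrand's lower bound is uniform and one can pass to the limit along $[0,s]$, $s \to 1^-$, by continuity of $\hreg_\play$ on $\denmats_\play$. The paper's route is shorter but rests on an external reference; the two are otherwise logically equivalent, as the nuclear-norm statement it cites is itself a Pinsker-type inequality of exactly the kind your alternative invokes.
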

\begin{proof}

Focusing on player $\play \in \players$, we will show first that 
\begin{equation}
\label{eq:strong-conv-play}
\breg_\play(\base_\play,\denmat_\play) \geq \frac{1}{2}\norm{\base_\play - \denmat_\play}^2
\end{equation}
for all $\base = (\base_1,\dots,\base_\nPlayers) \in \denmats$ and $\denmat = (\denmat_1,\dots,\denmat_{\nPlayers}) \in \relint(\denmats)$.

To this end, we define the function $\hreg_\play \from \psd[\vdim_\play] \to \R$ as $\hreg_\play(\denmat_\play) = \trof{\denmat_\play \log\denmat_\play}$, which is 1-strongly convex with respect to the nuclear norm $\onenorm{\cdot}$ \cite{Yu2015}, and since $\onenorm{\denmat_\play} \geq \norm{\denmat_\play}$ for all $\denmat_\play \in \denmats_\play$, we readily get that $\hreg_\play$ is 1-strongly convex with respect to the Frobenius norm, as well.

Letting $\nabla \hreg_{\play}(\denmat_\play) = \log\denmat_\play + \eye$, by 1-strong convexity, we have for $\base = (\base_1,\dots,\base_\nPlayers) \in \denmats$ and $\denmat = (\denmat_1,\dots,\denmat_{\nPlayers}) \in \relint(\denmats)$:
\begin{align}
	\hreg_{\play}(\base_\play) 
	& \geq \hreg_{\play}(\denmat_\play) + \trof{\nabla \hreg_{\play}(\denmat_\play)(\base_\play - \denmat_\play)} + \frac{1}{2}\norm{\base_\play - \denmat_\play}^2
	\notag\\
	& = \trof{\denmat_\play \log\denmat_\play} + \trof{(\base_\play - \denmat_\play)\log\denmat_\play} + \trof{\base_\play - \denmat_\play}
	+ \frac{1}{2}\norm{\base_\play - \denmat_\play}^2
	\notag\\
	& = \trof{\base_\play\log\denmat_\play} + \frac{1}{2}\norm{\base_\play - \denmat_\play}^2
\end{align}
where we used that $ \trof{\base_\play - \denmat_\play} = 0$. Hence, by reordering, we automatically get that
\begin{equation}
\breg_\play(\base_\play,\denmat_\play) \geq \frac{1}{2}\norm{\base_\play - \denmat_\play}^2
\end{equation}
Therefore, we have:
\begin{align}
	\breg(\base,\denmat) 
	& \geq \frac{1}{2}\sum_{\play = 1}^{\nPlayers}\norm{\base_\play - \denmat_\play}^2 
	= \frac{1}{2} \norm{\base - \denmat}^2
\end{align}
and the proof is completed.
\end{proof}

\section{Omitted proofs from \cref{sec:estimates}}
\label{app:estimates}

In this appendix, we develop the basic scaffolding required for the estimators \eqref{eq:2point} and \eqref{eq:1point}.
We begin with the construction of the estimators' sampling basis, as encoded in \cref{prop:basis}, which we restate below for convenience:

\basis*

\begin{proof}
First of all, note that
\begin{equation}
\label{eq:prod-of-deltas}
\diracmat_{k\ell}\diracmat_{mn} = 
    \begin{cases}
        0 & \text{if } \ell \neq m\\
        \diracmat_{kn} & \text{if } \ell = m
    \end{cases}
\end{equation}

\para{Unit norm} 
To begin with, we will show that all elements in $\bvecs$ have unit norm.
Indeed, we have:
\begin{itemize}
\item
For $j = 1,\dots,\vdim-1$, we have:
\begin{align}
\norm{\diagbasis_j}^2
	= \trof{\diagbasis_{j}^{\dag}\diagbasis_{j}}
	&= \frac{1}{j(j+1)}\trof*{\parens*{\sum_{k = 1}^{j}\diracmat_{kk} - j\diracmat_{(j+1)(j+1)}}\parens*{\sum_{k = 1}^{j}\diracmat_{kk} - j\diracmat_{(j+1)(j+1)}}}
	\notag\\
	& =  \frac{1}{j(j+1)}\trof*{\parens*{\sum_{k = 1}^{j}\diracmat_{kk} + j^2\diracmat_{(j+1)(j+1)}}}
	= \frac{1}{j(j+1)}(j+j^2) = 1
\end{align}

\item
For $k < \ell$, we have:
\begin{align}
\norm{\nondiagbasis_{k\ell}}^2
	= \trof{\nondiagbasis_{k\ell}^{\dag}\nondiagbasis_{k\ell}}
	&= \trof*{\parens*{\frac{1}{\sqrt{2}}\diracmat_{\ell k} + \frac{1}{\sqrt{2}}\diracmat_{k\ell}}\parens*{\frac{1}{\sqrt{2}}\diracmat_{k\ell} + \frac{1}{\sqrt{2}}\diracmat_{\ell k}}}
	\notag\\
	& =  \trof*{\frac{1}{2}\diracmat_{kk} + \frac{1}{2}\diracmat_{\ell\ell}}
	= \frac{1}{2} + \frac{1}{2} = 1
\end{align}

\item
For $k < \ell$, we also have:
\begin{align}
\norm{\tilde\nondiagbasis_{k\ell}}^2
	= \trof{\tilde\nondiagbasis_{k\ell}^{\dag}\tilde\nondiagbasis_{k\ell}}
	&= \trof*{\parens*{-\frac{i}{\sqrt{2}}\diracmat_{\ell k} + \frac{i}{\sqrt{2}}\diracmat_{k\ell}}\parens*{\frac{i}{\sqrt{2}}\diracmat_{k\ell} - \frac{i}{\sqrt{2}}\diracmat_{\ell k}}}
	\notag\\
	& =  \trof*{\frac{1}{2}\diracmat_{kk} + \frac{1}{2}\diracmat_{\ell\ell}}
	= \frac{1}{2} + \frac{1}{2} = 1
\end{align}
\end{itemize}

\para{Orthogonality}
Now, we will show that any two elements of $\bvecs$ are orthogonal to each other.

\begin{itemize}
\item
For $m < n$, we have:
\begin{align}
	\trof{\diagbasis_{m}^{\dag}\diagbasis_{n}} & = \frac{1}{\sqrt{m(m+1)}\sqrt{n(n+1)}}\trof*{\parens*{\sum_{k = 1}^{m}\diracmat_{kk} - m\diracmat_{(m+1)(m+1)}}\parens*{\sum_{k = 1}^{n}\diracmat_{kk} - n\diracmat_{(n+1)(n+1)}}}
	\notag\\
	& = \frac{1}{\sqrt{m(m+1)}\sqrt{n(n+1)}}\trof*{\parens*{\sum_{k = 1}^{m}\diracmat_{kk} - m\diracmat_{(m+1)(m+1)}}}
	\notag\\
	& = \frac{1}{\sqrt{m(m+1)}\sqrt{n(n+1)}}(m-m) = 0
\end{align}

\item
For $k < \ell$, we have:
\begin{align}
\trof{\nondiagbasis_{k\ell}^{\dag}\tilde\nondiagbasis_{k\ell}}
	& = \trof*{\parens*{\frac{1}{\sqrt{2}}\diracmat_{\ell k} + \frac{1}{\sqrt{2}}\diracmat_{k\ell}}\parens*{\frac{i}{\sqrt{2}}\diracmat_{k\ell} - \frac{i}{\sqrt{2}}\diracmat_{\ell k}}}
	\notag\\
	& = \trof*{\frac{i}{2}\diracmat_{\ell\ell} - \frac{i}{2}\diracmat_{kk}}
	= \frac{i}{2}-\frac{i}{2} = 0 
\end{align}

\item
For $(k,\ell)\neq (m,n)$ with $k<\ell$ and $m < n$, we have:
\begin{align}
\trof*{\nondiagbasis_{k\ell}^{\dag}\nondiagbasis_{mn}} = \trof*{\nondiagbasis_{k\ell}^{\dag}\tilde\nondiagbasis_{mn}} = \trof*{\tilde\nondiagbasis_{k\ell}^{\dag}\tilde\nondiagbasis_{mn}} = 0
\end{align}
since all the nonzero terms in $\nondiagbasis_{k\ell}^{\dag}\nondiagbasis_{mn}, \nondiagbasis_{k\ell}^{\dag}\tilde\nondiagbasis_{mn}$ and $\tilde\nondiagbasis_{k\ell}^{\dag}\tilde\nondiagbasis_{mn}$ are of the form $c\cdot\diracmat_{\alpha\beta}$ for some $c \in \C$, and $\alpha,\beta \in \{k,\ell,m,n\}$ with $\alpha\neq\beta$. Thus, $\trof*{c\cdot\diracmat_{\alpha\beta}} = 0$, since all the diagonal elements are equal to $0$. Note that it is not possible to have $\alpha = \beta$ because this would imply that $(k,\ell) = (m,n)$.

\item
For $k < \ell$ and $j = 1,\dots,\vdim-1$, we have:
\begin{align}
\trof*{\nondiagbasis_{k\ell}^{\dag}\diagbasis_{j}} =  \trof*{\tilde\nondiagbasis_{k\ell}^{\dag}\diagbasis_{j}} = 0
\end{align}
since the non-zero terms of both $\nondiagbasis_{k\ell}^{\dag}\diagbasis_{j}$ and $\tilde\nondiagbasis_{k\ell}^{\dag}\diagbasis_{j}$ are of the form $\diracmat_{kn}, \diracmat_{\ell m}$ for $k\neq n$ and $\ell\neq m$.
\end{itemize}
We thus conclude that any two elements of $\bvecs$ are orthogonal.

Finally, it is clear $\bvecs \subseteq \aff(\denmats_0)$, since $\bvecs\subseteq \herm[\vdim]$ and $\trof*{\nondiagbasis_{k\ell}} = \trof*{\tilde\nondiagbasis_{k\ell}} = \trof*{\diagbasis_{j}} = 0$, for $k<\ell$ and $j = 1,\dots,\vdim-1$.
Therefore, the elements in $\bvecs$ form an orthonormal basis of $\aff(\denmats_0)$ and $\dim(\aff(\denmats_0)) = \vdim^2-1$.
\end{proof}

We now proceed with the construction of the precise ``safety net'' that guarantees that the sampling perturbation of the gradient estimator remains within the problem's feasible region.
Again, for convenience, we restate the relevant result below:

\adjustment*

\begin{proof}
To begin with, it is clear that $\refmat \in \herm[\vdim]$ and $\trof{\refmat} = \sum_{j=1}^{\vdim}1/\vdim = 1$. Moreover, for any $u \in \C^\vdim\setminus\{0\}$, we have:
\begin{equation}
u^{\dag}\refmat u = \frac{1}{\vdim}\sum_{j=1}^{\vdim}\abs{u_j}^2 > 0
\end{equation}
where $\abs{u_j}$ is the modulus of the complex number $u_j \in \C$. Therefore, $\refmat$ is positive definite, \ie lies in $\relint(\denmats)$.

Now, we need to find $\safe >0$ such that 
\begin{equation}
\refmat +\safe\dir \in \denmats
\end{equation}
for any $\dir\in\dirs$.

It is clear that for any $\dir\in\dirs$, we have $\trof{\refmat +\safe\dir} = \trof{\refmat} = 1$, since $\trof{\dir} = 0$. Hence, it remains to consider the positive semi-definite constraint. For this, we will use the following identities, for $k<\ell$:
\begin{equation}
\label{eq:realpart}
u^{\dag}(\diracmat_{k\ell} + \diracmat_{\ell k})u
	= \bar u_k u_\ell + \bar u_\ell u_k
	= 2\re(\bar u_k u_\ell)
\end{equation}
and 
\begin{equation}
\label{eq:impart}
u^{\dag}(i\diracmat_{k\ell} -i\diracmat_{\ell k})u
	= i(\bar u_k u_\ell - \bar u_\ell u_k)
	= -2\,\imag(\bar u_k u_\ell)
\end{equation}

\begin{itemize}
\item
For $\dir =  \frac{1}{\sqrt 2}(\diracmat_{k\ell}+\diracmat_{\ell k})$ and $u \in \C^\vdim\setminus\{0\}$, and using \eqref{eq:realpart}, we have:
\begin{align}
u^{\dag}(\refmat +\safe\dir)u & = \frac{1}{\vdim}\sum_{j=1}^{\vdim}\abs{u_j}^2 + \frac{\safe}{\sqrt 2}2 \re(\bar u_k u_\ell) \notag\\
& = \frac{1}{\vdim}\sum_{j\neq k,\ell}\abs{u_j}^2 + \frac{1}{\vdim}\parens*{\abs{u_k}^2 + \abs{u_\ell}^2 + \frac{\safe\vdim}{\sqrt 2}2 \re(\bar u_k u_\ell)}
\end{align}
If $\re(\bar u_k u_\ell) > 0$, we get that $u^{\dag}(\refmat +\safe\dir)u >0$, while if $\re(\bar u_k u_\ell) \leq 0$ and $\safe \leq \sqrt{2}/\vdim$:
\begin{equation}
u^{\dag}(\refmat +\safe\dir)u \geq \frac{1}{\vdim}\sum_{j\neq k,\ell}\abs{u_j}^2 + \frac{1}{\vdim}\abs{u_k + u_\ell}^2  \geq 0
\end{equation}
Hence, for $\safe \leq \sqrt{2}/\vdim$, and $\dir =  \frac{1}{\sqrt 2}(\diracmat_{k\ell}+\diracmat_{\ell k})$, we have that $u^{\dag}(\refmat +\safe\dir)u \geq 0$ for all $u \in \C^\vdim$.

\item
For $\dir =  - \frac{1}{\sqrt 2}(\diracmat_{k\ell}+\diracmat_{\ell k})$, we have
\begin{align}
u^{\dag}(\refmat +\safe\dir)u & = \frac{1}{\vdim}\sum_{j=1}^{\vdim}\abs{u_j}^2 - \frac{\safe}{\sqrt 2}2 \re(\bar u_k u_\ell) \notag\\
& = \frac{1}{\vdim}\sum_{j\neq k,\ell}\abs{u_j}^2 + \frac{1}{\vdim}\parens*{\abs{u_k}^2 + \abs{u_\ell}^2 - \frac{\safe\vdim}{\sqrt 2}2 \re(\bar u_k u_\ell)}
\end{align}

If $\re(\bar u_k u_\ell) < 0$, we get that $u^{\dag}(\refmat +\safe\dir)u >0$, while if $\re(\bar u_k u_\ell) \geq 0$ and $\safe \leq \sqrt{2}/\vdim$:
\begin{equation}
u^{\dag}(\refmat +\safe\dir)u \geq \frac{1}{\vdim}\sum_{j\neq k,\ell}\abs{u_j}^2 + \frac{1}{\vdim}\abs{u_k - u_\ell}^2 \geq 0
\end{equation}
Hence, for $\safe \leq \sqrt{2}/\vdim$, and $\dir =  -\frac{1}{\sqrt 2}(\diracmat_{k\ell}+\diracmat_{\ell k})$, we have that $u^{\dag}(\refmat +\safe\dir)u \geq 0$ for all $u \in \C^\vdim$.

\item
For $\dir =  \frac{i}{\sqrt 2}(\diracmat_{k\ell} - \diracmat_{\ell k})$ and $u \in \C^\vdim\setminus\{0\}$, and using \eqref{eq:realpart}, we have:
\begin{align}
u^{\dag}(\refmat +\safe\dir)u & = \frac{1}{\vdim}\sum_{j=1}^{\vdim}\abs{u_j}^2 - \frac{\safe}{\sqrt 2}2 \imag(\bar u_k u_\ell) \notag\\
& = \frac{1}{\vdim}\sum_{j\neq k,\ell}\abs{u_j}^2 + \frac{1}{\vdim}\parens*{\abs{u_k}^2 + \abs{u_\ell}^2 - \frac{\safe\vdim}{\sqrt 2}2 \imag(\bar u_k u_\ell)}
\end{align}
If $\imag(\bar u_k u_\ell) < 0$, we get that $u^{\dag}(\refmat +\safe\dir)u >0$, while if $\imag(\bar u_k u_\ell) \geq 0$ and $\safe \leq \sqrt{2}/\vdim$:
\begin{equation}
u^{\dag}(\refmat +\safe\dir)u \geq \frac{1}{\vdim}\sum_{j\neq k,\ell}\abs{u_j}^2 + \frac{1}{\vdim}\abs{u_k + i\,u_\ell}^2 \geq 0
\end{equation}
Hence, for $\safe \leq \sqrt{2}/2\vdim$, and $\dir =  \frac{i}{\sqrt 2}(\diracmat_{k\ell}-\diracmat_{\ell k})$, we have that $u^{\dag}(\refmat +\safe\dir)u \geq 0$ for all $u \in \C^\vdim$.

\item
For $\dir =  -\frac{i}{\sqrt 2}(\diracmat_{k\ell} - \diracmat_{\ell k})$ and $u \in \C^\vdim\setminus\{0\}$, and using \eqref{eq:realpart}, we have:
\begin{align}
u^{\dag}(\refmat +\safe\dir)u & = \frac{1}{\vdim}\sum_{j=1}^{\vdim}\abs{u_j}^2 + \frac{\safe}{\sqrt 2}2 \imag(\bar u_k u_\ell) \notag\\
& = \frac{1}{\vdim}\sum_{j\neq k,\ell}\abs{u_j}^2 + \frac{1}{\vdim}\parens*{\abs{u_k}^2 + \abs{u_\ell}^2 + \frac{\safe\vdim}{\sqrt 2}2 \imag(\bar u_k u_\ell)}
\end{align}
If $\imag(\bar u_k u_\ell) > 0$, we get that $u^{\dag}(\refmat +\safe\dir)u >0$, while if $\imag(\bar u_k u_\ell) \leq 0$ and $\safe \leq \sqrt{2}/\vdim$:
\begin{equation}
u^{\dag}(\refmat +\safe\dir)u \geq \frac{1}{\vdim}\sum_{j\neq k,\ell}\abs{u_j}^2 + \frac{1}{\vdim}\abs{u_k - i\,u_\ell}^2 
\end{equation}
Hence, for $\safe \leq \sqrt{2}/2\vdim$, and $\dir =  -\frac{i}{\sqrt 2}(\diracmat_{k\ell}-\diracmat_{\ell k})$, we have that $u^{\dag}(\refmat +\safe\dir)u \geq 0$ for all $u \in \C^\vdim$.

\item 
For $\dir = \frac{1}{\sqrt{j(j+1)}}\parens*{\diracmat_{11} + \dots + \diracmat_{jj} - j\diracmat_{(j+1)(j+1)}}$, we have:
\begin{align}
u^{\dag}(\refmat +\safe\dir)u & = \frac{1}{d}\sum_{k = 1}^{\vdim} \abs{u_k}^2 + \frac{\safe}{\sqrt{j(j+1)}}\sum_{k = 1}^{j}\abs{u_k}^2 -  \frac{j\safe}{\sqrt{j(j+1)}}\abs{u_{j+1}}^2 \notag\\
& = \frac{1}{d}\sum_{k \neq j+1}\abs{u_k}^2 + \frac{\safe}{\sqrt{j(j+1)}}\sum_{k = 1}^{j}\abs{u_k}^2 +  \parens*{\frac{1}{d}-\frac{j\safe}{\sqrt{j(j+1)}}}\abs{u_{j+1}}^2
\end{align}
Thus, we need to ensure that
\begin{equation}
\frac{1}{d}-\frac{j\safe}{\sqrt{j(j+1)}} \geq 0
\end{equation}
for all $j = 1,\dots,\vdim - 1$.
Because the function $x\mapsto {\sqrt{x(x+1)}}/{x}$ is decreasing, it obtains the smallest value from $x = \vdim-1$. 
Therefore, for $\safe \leq 1/\sqrt{\vdim(\vdim-1)}$, we readily obtain that $u^{\dag}(\refmat +\safe\dir)u \geq 0$ for all $u \in \C^\vdim$.

\item 
For $\dir = -\frac{1}{\sqrt{j(j+1)}}\parens*{\diracmat_{11} + \dots + \diracmat_{jj} - j\diracmat_{(j+1)(j+1)}}$, we have:
\begin{align}
u^{\dag}(\refmat +\safe\dir)u & = \frac{1}{d}\sum_{k = 1}^{\vdim} \abs{u_k}^2 - \frac{\safe}{\sqrt{j(j+1)}}\sum_{k = 1}^{j}\abs{u_k}^2 + \frac{j\safe}{\sqrt{j(j+1)}}\abs{u_{j+1}}^2 \notag\\
& = \parens*{\frac{1}{\vdim} - \frac{\safe}{\sqrt{j(j+1)}}}\sum_{k = 1}^{j}\abs{u_k}^2 + \frac{1}{d}\sum_{k = j+1}^{\vdim} \abs{u_k}^2  + \frac{j\safe}{\sqrt{j(j+1)}}\abs{u_{j+1}}^2
\end{align}
Thus, we need to ensure that
\begin{equation}
\frac{1}{d}-\frac{\safe}{\sqrt{j(j+1)}} \geq 0
\end{equation}
for all $j = 1,\dots,\vdim - 1$.
Because it holds that 
\begin{equation}
\frac{1}{d}-\frac{\safe}{\sqrt{j(j+1)}} \geq \frac{1}{d}-\frac{j\safe}{\sqrt{j(j+1)}} 
\end{equation}
we obtain the inequality for free by the previous case, \ie for $\safe \leq 1/\sqrt{\vdim(\vdim-1)}$.
\end{itemize}
Therefore, for 
\begin{equation}
\safe = \min\braces*{\frac{1}{\sqrt{\vdim(\vdim-1)}}, \frac{\sqrt{2}}{\vdim}}
\end{equation}
we readily obtain that $u^{\dag}(\refmat +\safe\dir)u \geq 0$ for all $u \in \C^\vdim$, and our proof is complete.
\end{proof}

\section{Omitted proofs from \cref{sec:BMMW}}
\label{app:BMMW}

Our aim in this appendix will be to prove the basic guarantees of \eqref{eq:3MW} with payoff-based feedback.
The structure of this appendix shadows that of \cref{sec:BMMW} and is broken into two parts, depending on the specific type of input available to the players.
The only point of departure is the energy inequality of \cref{lem:template}, which is common to both algorithms, and which we restate and prove below:

\template*

\begin{proof}
By the definition of $\breg$, it is easy to see that for $\base \in \denmats$ and $\denmat, \denmatalt \in \relint(\denmats)$, we have
\begin{equation}
\breg(\base,\denmatalt) = \breg(\base,\denmat) + \breg(\denmat,\denmatalt) + \trof{\parens*{\log\denmatalt - \log\denmat}(\denmat - \base) }
\end{equation} 

Since $\nabla \hreg(\denmat) = \log\denmat + \eye$, the above equality can be written as:
\begin{equation}
\breg(\base,\denmatalt) = \breg(\base,\denmat) + \breg(\denmat,\denmatalt) + \trof{\parens*{\nabla\hreg(\denmatalt) - \nabla\hreg(\denmat)}(\denmat - \base) }
\end{equation}
Setting $\denmat$ as $\denmat_{\run+1}$, and $\denmatalt$ as $\denmat_{\run}$, and invoking the easily verifiable fact that $\nabla\hreg(\denmat_{\run+1}) - \nabla\hreg(\denmat_\run) = \step_\run \paysignal_\run$, we get
\begin{equation}
\breg(\base,\denmat_{\run}) = \breg(\base,\denmat_{\run+1}) + \breg(\denmat_{\run+1},\denmat_{\run}) - \step_\run\trof{ \paysignal_\run(\denmat_{\run+1} - \base) }
\end{equation}
and hence:
\begin{align}
	\breg(\base,\denmat_{\run+1})
	& = \breg(\base,\denmat_\run) - \breg(\denmat_{\run+1},\denmat_{\run}) + \step_\run\trof{ \paysignal_\run(\denmat_{\run+1} - \base) }
	\notag\\
	& \leq \breg(\base,\denmat_\run) - \frac{1}{2}\norm{\denmat_{\run+1}-\denmat_{\run}}^2 + \step_\run\trof{ \paysignal_\run(\denmat_{\run+1} - \base) }
	\notag\\
	& = \breg(\base,\denmat_\run) - \frac{1}{2}\norm{\denmat_{\run+1}-\denmat_{\run}}^2
		+ \step_\run\trof{ \paysignal_\run(\denmat_{\run} - \base) } +  \step_\run\trof{ \paysignal_\run(\denmat_{\run+1} - \denmat_\run) }
	\notag\\
	& \leq \breg(\base,\denmat_\run) 	+ \step_\run\trof{ \paysignal_\run(\denmat_{\run} - \base) } + \frac{\step_{\run}^2}{2}\norm{\paysignal_\run}^2
\end{align}
where the first inequality holds due to \cref{lem:strong-conv}, and in the last step we used that $\norm{\cdot}$ is an inner-product norm on $\denmats$, so
\begin{align}
\frac{1}{2}\norm{\denmat_{\run+1}-\denmat_{\run}}^2 + \frac{\step_{\run}^2}{2}\norm{\paysignal_\run}^2 \geq \step_\run\trof{ \paysignal_\run(\denmat_{\run+1} - \denmat_\run) }
\end{align} 
This concludes our proof.
\end{proof}

With this template inequality in hand, we proceed with the guarantees of \eqref{eq:3MW} in the next sections.

\subsection{Learning with mixed payoff observations}

We begin with the statistics of the $2$-point sampler \eqref{eq:2point}, which we restate below:

\MixStatistics*

\begin{proof}
We prove each part separately.
\begin{enumerate}
[\itshape (i)]
\item
Let $\plusXi_{\play,\run}$ and $\minusXi_{\play,\run}$ be defined for all players $\play \in \{1,2\}$ as
\begin{align}
\plusXi_{\play,\run}
	&= (\pivot_{\play,\run} + \sign_{\play,\run}\radius_\run\dir_{\play,\run})- \denmat_{\play,\run}
	\notag\\
	&= \sign_{\play,\run}\radius_\run \dir_{\play,\run} + \frac{\radius_\run}{\safe_\play}(\refmat_\play - \denmat_{\play,\run})
	= \radius_\run \bracks*{\sign_{\play,\run}\dir_{\play,\run} + \frac{1}{\safe_\play}(\refmat_\play - \denmat_{\play,\run})}
\end{align}
and
\begin{align}
\minusXi_{\play,\run} = (\pivot_{\play,\run} - \sign_{\play,\run}\radius_\run\dir_{\play,\run})- \denmat_{\play,\run}
 = \radius_\run \bracks*{-\sign_{\play,\run}\dir_{\play,\run} + \frac{1}{\safe_\play}(\refmat_\play - \denmat_{\play,\run})}
\end{align}
Taking a first-order Taylor expansion of $\pay_{\play}$, we obtain:
\begin{subequations}
\begin{align}
\label{eq:Taylor_plus}
\pay_{\play}(\pivot_{\run} + \sign_{\run}\radius_\run\dir_{\run}) = \pay_{\play}(\denmat_{\run}) + \sum_{\playalt \in \players} \trof*{\nabla_{\denmat_{\playalt}^{\top}}\pay_\play(\denmat_\run)^{\dag}\plusXi_{\playalt,\run}} + \rem_{2}\parens{\plusXi_{\run}} 
\shortintertext{and}
\label{eq:Taylor_minus}
\pay_{\play}(\pivot_{\run} - \sign_{\run}\radius_\run\dir_{\run}) = \pay_{\play}(\denmat_{\run}) + \sum_{\playalt \in \players} \trof*{\nabla_{\denmat_{\playalt}^{\top}}\pay_\play(\denmat_\run)^{\dag}\minusXi_{\playalt,\run}} + \rem_{2}\parens{\minusXi_{\run}} 
\end{align}
\end{subequations}

where $\rem_2(\cdot)$ is the 2nd order Taylor remainder. Now, for $\playalt \neq \play \in \players$, since $\sign_{\play,\run}$ is zero-mean and independent of any other process:
\begin{align}
\exof*{\trof*{\nabla_{\denmat_{\playalt}^{\top}}\pay_\play(\denmat_\run)^{\dag}(\plusXi_{\playalt,\run}-\minusXi_{\playalt,\run})}\sign_{\play,\run}\dir_{\play,\run} \given \filter_\run}
 =  0
\end{align}
and using that $\plusXi_{\play,\run}-\minusXi_{\play,\run} = 2\sign_{\play,\run}\radius_\run\dir_{\play,\run}$, we have:
\begin{align}
\exof*{\trof*{\nabla_{\denmat_{\play}^{\top}}\pay_\play(\denmat_\run)^{\dag}(\plusXi_{\play,\run}-\minusXi_{\play,\run})}\sign_{\play,\run}\dir_{\play,\run} \given \filter_\run}
	& =  \exof*{\trof*{\payfield_\play(\denmat_\run)^{\dag}(2\sign_{\play,\run}\radius_\run\dir_{\play,\run})}\sign_{\play,\run}\dir_{\play,\run} \mid \filter_\run}
	\notag\\
	& = 2\radius_\run \exof*{\trof*{\payfield_\play(\denmat_\run)^{\dag}\dir_{\play,\run}}\sign_{\play,\run}^{2}\dir_{\play,\run} \mid \filter_\run}
	\notag\\
	& = 2\radius_\run \exof*{\trof*{\payfield_\play(\denmat_\run)^{\dag}\dir_{\play,\run}}\dir_{\play,\run} \mid \filter_\run}
	\notag\\
	& = \frac{2\radius_\run}{\Dim_{\play}}\sum_{W \in \bvecs_\play} \trof*{\payfield_\play(\denmat_\run)^{\dag}W}W
	\notag\\
	&= \frac{2\radius_\run}{\Dim_{\play}}\proj_{\bvecs_\play}\parens*{\payfield_\play(\denmat_\run)}
	= \frac{2\radius_\run}{\Dim_{\play}}\payfield_\play(\denmat_\run)
\end{align}
where in the last step, with a slight abuse of notation, we identify $\proj_{\bvecs_\play}\parens*{\payfield_\play(\denmat_\run)}$ with $\payfield_\play(\denmat_\run)$. The reason for this is that we apply the differential operator $\payfield_\play(\denmat_\run)$ only on elements of $\denmats_\play$, and thus, we can ignore the component of $\payfield_\play(\denmat_\run)$ that is orthogonal to $\text{span}(\bvecs_\play)$.

Moreover, we have that
\begin{equation}
\abs{\rem_{2}\parens{\plusXi_{\run}}} \leq \frac{\smooth}{2}\norm{\plusXi_{\run}}^2 \leq \smooth \radius_{\run}^{2}
\end{equation}
and similarly, we get the same bound for $\abs{\rem_{2}\parens{\minusXi_{\run}}}$.
Therefore, in light of the above, we obtain the bound:
\begin{subequations}
\begin{align}
\dnorm{\exof{\paysignal_{\play,\run} \given \filter_\run} - \payfield_{\play}(\denmat_\run)}  \leq \frac{1}{2}\Dim_\play\smooth \radius_{\run}
\shortintertext{and, hence}
\dnorm{\exof{\paysignal_{\run} \given \filter_\run} - \payfield(\denmat_\run)}  \leq \frac{\sqrt{2}}{2}\Dim\smooth \radius_{\run}
\end{align}
\end{subequations}
\item
By the definition of $\paysignal_{\play,\run}$, we have:
\begin{align}
\dnorm{\paysignal_{\play,\run}}
	& = \frac{\Dim_{\play}}{2\radius_\run}\abs*{\pay_{\play}(\pivot_{\run} + \sign_{\run}\,\radius_\run\,\dir_{\run}) - \pay_{\play}(\pivot_{\run} - \sign_{\run}\,\radius_\run\,\dir_{\run})}\,\dnorm{\sign_{\play,\run}\dir_{\play,\run}}
	\notag\\
	&\leq \frac{\Dim_{\play}}{2\radius_\run}\lips\dnorm{2\sign_{\run}\,\radius_\run\,\dir_{\run}}
	\leq \sqrt{2}\Dim_{\play}\lips
\end{align}
and therefore, we readily obtain that:
\begin{align}
\exof*{\dnorm{\paysignal_{\play,\run}}^2 \given \filter_\run}
	& \leq 2\Dim_{\play}^{2}\lips^{2}
\shortintertext{so}
\exof*{\dnorm{\paysignal_{\run}}^2 \given \filter_\run}
	& \leq 4\Dim^{2}\lips^{2}
\end{align}
and our proof is complete.
\qedhere
\end{enumerate}
\end{proof}


With all these technical elements in place, we are finally in a position to prove our convergence result for \eqref{eq:3MW} run with $2$-point gradient estimators.
As before, we restate our result below for convenience:

\twopoint*

\begin{proof}
Let $\eq \in \denmats$ be a \ac{NE} point. By \cref{lem:template} for $\base = \eq$, and setting $\energy_\run \defeq \breg(\eq,\denmat_{\run})$  for all $\run =1,2\dots$, we have
\begin{align}
\energy_{\run+1} & \leq \energy_\run + \step_\run\trof{\paysignal_\run^{\dag}(\denmat_\run-\eq)} + \frac{\step_\run^2}{2\strong}\dnorm{\paysignal_\run}^2
\end{align}
or, equivalently
\begin{align}
\trof{\paysignal_\run^{\dag}(\eq- \denmat_\run)} & \leq \frac{1}{\step_\run}(\energy_\run -\energy_{\run+1}) +  \frac{\step_\run}{2\strong}\dnorm{\paysignal_\run}^2
\end{align}
Summing over the whole sequence $\run = 1,\dots,\nRuns$, we get:
\begin{align}
\sum_{\run = 1}^{\nRuns}\trof{\paysignal_\run^{\dag}(\eq- \denmat_\run)} & \leq \sum_{\run = 1}^{\nRuns}\frac{1}{\step_\run}(\energy_\run -\energy_{\run+1}) +  \frac{1}{2\strong}\sum_{\run = 1}^{\nRuns}\step_\run\dnorm{\paysignal_\run}^2
\end{align}
which can be rewritten by setting $\step_0 = \infty$, as:
\begin{align}
\label{eq:energy_1}
\sum_{\run = 1}^{\nRuns}\trof{\paysignal_\run^{\dag}(\eq- \denmat_\run)} & \leq \sum_{\run = 1}^{\nRuns}\energy_\run\parens*{\frac{1}{\step_\run} - \frac{1}{\step_{\run-1}}} +  \frac{1}{2\strong}\sum_{\run = 1}^{\nRuns}\step_\run\dnorm{\paysignal_\run}^2
\end{align}
Decomposing $\paysignal_\run$ as
\begin{equation}
\label{eq:signal-decomp}
\paysignal_\run = \payfield(\denmat_\run) + \bias_\run + \noise_\run
\end{equation}
with
\begin{enumerate}[\itshape (i)]
\item
$\bias_{\run} = \exof*{\paysignal_{\run} \given \filter_{\run}} - \payfield(\curr)$
\item
$\noise_{\run} = \paysignal_{\run} - \exof*{\paysignal_{\run} \given \filter_{\run}}$
\end{enumerate}
equation \eqref{eq:energy_1} becomes:
\begin{align}
\label{eq:energy_2}
\sum_{\run = 1}^{\nRuns}\trof{\payfield(\denmat_\run)^{\dag}(\eq- \denmat_\run)} & \leq \sum_{\run = 1}^{\nRuns}\energy_\run\parens*{\frac{1}{\step_\run} - \frac{1}{\step_{\run-1}}} +  \frac{1}{2\strong}\sum_{\run = 1}^{\nRuns}\step_\run\dnorm{\paysignal_\run}^2 \notag\\
& \quad\quad + \sum_{\run = 1}^{\nRuns}\trof{\bias_{\run}^{\dag}(\denmat_\run - \eq)} + \sum_{\run = 1}^{\nRuns}\trof{\noise_{\run}^{\dag}(\denmat_\run - \eq)} \notag\\
& \leq \sum_{\run = 1}^{\nRuns}\energy_\run\parens*{\frac{1}{\step_\run} - \frac{1}{\step_{\run-1}}} +  \frac{1}{2\strong}\sum_{\run = 1}^{\nRuns}\step_\run\dnorm{\paysignal_\run}^2 \notag\\
& \quad\quad + 4\sum_{\run = 1}^{\nRuns}\dnorm{\bias_{\run}} + \sum_{\run = 1}^{\nRuns}\trof{\noise_{\run}^{\dag}(\denmat_\run - \eq)}
\end{align}
The \ac{LHS} of \eqref{eq:energy_2} gives:
\begin{align}
\sum_{\run = 1}^{\nRuns}\trof{\payfield(\denmat_\run)^{\dag}(\eq- \denmat_\run)} & = \sum_{\run = 1}^{\nRuns}\trof{\payfield_{\playone}(\denmat_\run)^{\dag}(\eq_{\playone}- \denmat_{\playone,\run})}
+ \sum_{\run = 1}^{\nRuns}\trof{\payfield_{\playtwo}(\denmat_\run)^{\dag}(\eq_{\playtwo}- \denmat_{\playtwo,\run})}
	\notag\\
& = \sum_{\run = 1}^{\nRuns}\parens*{\pay_{\playone}(\eq_\playone,\denmat_{\playtwo,\run}) - \pay_{\playone}(\denmat_\run)}
+ \sum_{\run = 1}^{\nRuns}\parens*{\pay_{\playtwo}(\denmat_{\playone,\run},\eq_\playtwo) - \pay_{\playtwo}(\denmat_\run)}
	\notag\\
& = \sum_{\run = 1}^{\nRuns}\parens*{\loss(\eq_\playone,\denmat_{\playtwo,\run}) - \loss(\denmat_{\playone,\run},\eq_\playtwo)}
\end{align}
Hence, dividing by $\nRuns$, we get:
\begin{align}
\loss(\eq_\playone,\ergodic_{\playtwo,\nRuns}) - \loss(\ergodic_{\playone,\nRuns},\eq_\playtwo) \leq \frac{1}{\nRuns}\sum_{\run = 1}^{\nRuns}\trof{\payfield(\denmat_\run)^{\dag}(\eq- \denmat_\run)}
\end{align}
or, equivalently,
\begin{align}
\label{eq:LHS}
\gap_{\loss}(\ergodic_{\nRuns}) & \leq \frac{1}{\nRuns}\sum_{\run = 1}^{\nRuns}\trof{\payfield(\denmat_\run)^{\dag}(\eq- \denmat_\run)} \notag\\
& \leq \frac{1}{\nRuns}\sum_{\run = 1}^{\nRuns}\energy_\run\parens*{\frac{1}{\step_\run} - \frac{1}{\step_{\run-1}}} +  \frac{1}{2\strong\nRuns}\sum_{\run = 1}^{\nRuns}\step_\run\dnorm{\paysignal_\run}^2 \notag\\
& \quad\quad + \frac{4}{\nRuns}\sum_{\run = 1}^{\nRuns}\dnorm{\bias_{\run}} + \frac{1}{\nRuns}\sum_{\run = 1}^{\nRuns}\trof{\noise_{\run}^{\dag}(\denmat_\run - \eq)}
\end{align}

Now, we focus on the \ac{RHS} of \eqref{eq:energy_2}. Specifically, we have:
\begin{align}
\label{eq:noise_exp}
\exof{\trof{\noise_{\run}^{\dag}(\denmat_\run - \eq)}} & = \exof{\exof{\trof{\noise_{\run}^{\dag}(\denmat_\run - \eq)} \given \filter_\run}}  
 = 0
\end{align}
since $\denmat_\run$ is $\filter_\run$-measurable and $\exof{\noise_\run \given \filter_\run} = 0$.

Moreover, by \cref{prop:2point}, we have:
\begin{subequations}
\begin{align}
\label{eq:bias_norm}
\dnorm{\bias_{\play,\run}} = \dnorm*{\ex\bracks*{\paysignal_{\play,\run} \mid \filter_\run} - \payfield_\play(\denmat_\run)} \leq 2\Dim_\play \smooth \radius_{\run}
\shortintertext{and}
\label{eq:variance_norm}
\ex\bracks*{\dnorm*{\paysignal_{\play,\run}}^2} = \exof*{\ex\bracks*{\dnorm*{\paysignal_{\play,\run}}^2  \mid \filter_\run}} \leq 4\Dim_{\play}^{2}\lips^{2}
\end{align}
\end{subequations}
Hence, taking expectation in \eqref{eq:energy_2}, we obtain:
\begin{align}
\exof*{\gap_{\loss}(\ergodic_{\nRuns})}
	& \leq \frac{1}{\nRuns} \sum_{\run = 1}^{\nRuns}\exof{\energy_\run}\parens*{\frac{1}{\step_\run} - \frac{1}{\step_{\run-1}}} +  \frac{1}{2\strong\nRuns}\sum_{\run = 1}^{\nRuns}\step_\run\exof{\dnorm{\paysignal_\run}^2}  + \frac{4}{\nRuns}\sum_{\run = 1}^{\nRuns}\exof{\dnorm{\bias_{\run}}}
	\notag\\
	& \leq \frac{1}{\nRuns} \sum_{\run = 1}^{\nRuns}\exof{\energy_\run}\parens*{\frac{1}{\step_\run} - \frac{1}{\step_{\run-1}}} +  \frac{8\Dim^2\lips^2}{\strong\nRuns}\sum_{\run = 1}^{\nRuns}\step_\run  + \frac{16\Dim\smooth}{\nRuns}\sum_{\run = 1}^{\nRuns}\radius_\run
\end{align}
Setting $\step_\run = \step$ and $\radius_\run = \radius$, we obtain:
\begin{align}
\exof*{\gap_{\loss}(\ergodic_{\nRuns})}
& \leq \frac{\init[\energy]}{\step\nRuns} +  8\Dim^2\lips^2 \step+ 16\Dim\smooth\radius
\end{align}
and finally, noting that
\begin{align}
\init[\energy]
	= \breg(\eq,\denmat_1)
	\leq \log(\vdim_\playone \vdim_\playtwo)
\end{align}
we get:
\begin{align}
\exof*{\gap_{\loss}(\ergodic_{\nRuns})}
& \leq \frac{\hmax}{\step\nRuns} +  8\Dim^2\lips^2 \step+ 16\Dim\smooth\radius
\end{align}
for $\hmax = \log(\vdim_\playone \vdim_\playtwo)$.
Hence, after tuning $\step$ to optimize this last expression, our result follows by setting $\step = \sqrt{\frac{\hmax}{8\nRuns\Dim^2\lips^2}}$ and $\radius = \sqrt{\frac{\lips^2 \hmax}{8\smooth^2\nRuns}}$.
\end{proof}

\subsection{Learning with bandit feedback}

We now proceed with the more arduous task of proving the bona fide, bandit guarantees of \eqref{eq:3MW} with $1$-point, stochastic, payoff-based feedback.
The key difference with our previous analysis lies in the different statistical properties of the $1$-point estimator \eqref{eq:1point}.
The relevant result that we will need is restated below:

\statistics*

\begin{proof}
We prove each part separately.
\begin{enumerate}[\itshape (i)]
\item
Let $\Xi_{\play,\run}$ be defined for all players $\play \in \players$:
\begin{equation}
\Xi_{\play,\run}
	= \pivot_{\play,\run} - \denmat_{\play,\run}
	= \radius_\run \dir_{\play,\run} + \frac{\radius_\run}{\safe_\play}(\refmat_\play - \denmat_{\play,\run})
	=  \radius_\run \bracks*{\dir_{\play,\run} + \frac{1}{\safe_\play}(\refmat_\play - \denmat_{\play,\run})}
\end{equation}
Taking a first-order Taylor expansion of $\pay_{\play}$, we obtain:
\begin{align}
\label{eq:Taylor}
\pay_{\play}(\pivot_{\run} + \radius_\run\dir_{\run}) = \pay_{\play}(\denmat_{\run}) + \sum_{\playalt \in \players} \trof*{\nabla_{\denmat_{\playalt}^{\top}}\pay_\play(\denmat_\run)^{\dag}\Xi_{\playalt,\run}} + \rem_{2}(\Xi_{\run}) 
\end{align}
Since $\exof{\payobs_{\play}(\outcome_\run) \given \filter_\run,\dir_{\run}} = \pay(\pivot_{\run} + \radius_\run\dir_{\run})$, combining it with \eqref{eq:Taylor}, we readily get:
\begin{align}
\exof{\paysignal_{\play,\run} \given \filter_\run,\dir_{\run}} & = \frac{\Dim_{\play}}{\radius_\run}\pay_{\play}(\pivot_{\run} + \radius_\run\dir_{\run})\;\dir_{\play,\run} \\
& = \frac{\Dim_{\play}}{\radius_\run}\pay_{\play}(\denmat_{\run})\dir_{\play,\run} + \frac{\Dim_{\play}}{\radius_\run}\sum_{\playalt \in \players} \trof*{\nabla_{\denmat_{\playalt}^{\top}}\pay_\play(\denmat_\run)^{\dag}\Xi_{\playalt,\run}}\dir_{\play,\run} + \frac{\Dim_\play}{\radius_\run}\rem_{2}(\Xi_{\run})\dir_{\play,\run}
\end{align}
Now, because $\exof*{\dir_{\play,\run} \given \filter_\run} = 0$ and $\dir_{\play,\run}$ is sampled independent of any other process, we have:
\begin{align}
\exof*{\pay_{\play}(\denmat_{\run})\dir_{\play,\run} \given \filter_\run} = \pay_{\play}(\denmat_{\run}) \exof*{\dir_{\play,\run} \given \filter_\run} = 0
\end{align}
and for $\playalt \neq \play \in \players$:
\begin{align}
\exof*{\trof*{\nabla_{\denmat_{\playalt}^{\top}}\pay_\play(\denmat_\run)^{\dag}\Xi_{\playalt,\run}}\dir_{\play,\run} \given \filter_\run} = 0
\end{align}
Therefore, we obtain:
\begin{align}
\exof*{\sum_{\playalt \in \players} \trof*{\nabla_{\denmat_{\playalt}^{\top}}\pay_\play(\denmat_\run)^{\dag}\Xi_{\playalt,\run}}\dir_{\play,\run} \mid \filter_\run}
	&=  \exof*{\trof*{\payfield_\play(\denmat_\run)^{\dag}\Xi_{\play,\run}}\dir_{\play,\run} \mid \filter_\run}
	\notag\\
	&= \radius_\run\exof*{\trof*{\payfield_\play(\denmat_\run)^{\dag}\dir_{\play,\run}}\dir_{\play,\run} \mid \filter_\run}
	\notag\\
	&= \frac{\radius_\run}{\Dim_{\play}}\sum_{W \in \bvecs_\play} \trof*{\payfield_\play(\denmat_\run)^{\dag}W}W
	\notag\\
	&= \frac{\radius_\run}{\Dim_{\play}}\proj_{\bvecs_\play}\parens*{\payfield_\play(\denmat_\run)}
	= \frac{\radius_\run}{\Dim_{\play}}\payfield_\play(\denmat_\run)
\end{align}
where in the last step, we identify $\proj_{\bvecs_\play}\parens*{\payfield_\play(\denmat_\run)}$ with $\payfield_\play(\denmat_\run)$, as explained in the proof of \cref{prop:2point}. Moreover, we have that
\begin{equation}
\abs{\rem_{2}\parens{\Xi_{\run}}} \leq \frac{\smooth}{2}\norm{\Xi_{\run}}^2 \leq \smooth \radius_{\run}^{2}
\end{equation}

In view of the above, we have:
\begin{align}
\dnorm{\exof{\paysignal_{\play,\run} \given \filter_\run} - \payfield_{\play}(\denmat_\run)}  = \Dim_\play \smooth\radius_{\run}
\end{align}

and, therefore,
\begin{align}
\dnorm{\exof{\paysignal_{\run} \given \filter_\run} - \payfield(\denmat_\run)}  = \sqrt{2}\Dim \smooth\radius_{\run}
\end{align}

\item
By the definition of $\paysignal_{\play,\run}$, we have:
\begin{align}
\dnorm{\paysignal_{\play,\run}}
	= \frac{\Dim_{\play}}{\radius_\run}\abs*{\pay_{\play}(\pivot_{\run} + \radius_\run\,\dir_{\run})}\,\dnorm{\dir_{\play,\run}}
	\leq \frac{\Dim_{\play}\bound}{\radius_\run}
\end{align}
and therefore, we readily obtain that:
\begin{align}
\exof*{\dnorm{\paysignal_{\play,\run}}^2 \given \filter_\run} & \leq \frac{\Dim_{\play}^{2}\bound^{2}}{\radius_\run^2}
\end{align}
We thus obtain
\begin{align}
\exof*{\dnorm{\paysignal_{\run}}^2 \given \filter_\run} & \leq \frac{2\Dim^{2}\bound^{2}}{\radius_\run^2}
\end{align}
and our proof is complete.
\qedhere
\end{enumerate}
\end{proof}


The only step missing is the proof of the actual guarantee of \eqref{eq:3MW} with bandit feedback.
We restate and prove the relevant result below:

\onepoint*

\begin{proof}
Following the same procedure as in the proof of \cref{thm:2point}, we readily obtain:
\begin{align}
\label{eq:energy_3}
\gap_{\loss}(\ergodic_{\nRuns}) & \leq \frac{1}{\nRuns}\sum_{\run = 1}^{\nRuns}\trof{\payfield(\denmat_\run)^{\dag}(\eq- \denmat_\run)} \notag\\
& \leq \frac{1}{\nRuns}\sum_{\run = 1}^{\nRuns}\energy_\run\parens*{\frac{1}{\step_\run} - \frac{1}{\step_{\run-1}}} +  \frac{1}{2\strong\nRuns}\sum_{\run = 1}^{\nRuns}\step_\run\dnorm{\paysignal_\run}^2 \notag\\
& \quad\quad + \frac{4}{\nRuns}\sum_{\run = 1}^{\nRuns}\dnorm{\bias_{\run}} + \frac{1}{\nRuns}\sum_{\run = 1}^{\nRuns}\trof{\noise_{\run}^{\dag}(\denmat_\run - \eq)}
\end{align}

Now, we have:
\begin{align}
\label{eq:noise_exp}
\exof{\trof{\noise_{\run}^{\dag}(\denmat_\run - \eq)}} & = \exof{\exof{\trof{\noise_{\run}^{\dag}(\denmat_\run - \eq)} \given \filter_\run}}  
 = 0
\end{align}
since $\denmat_\run$ is $\filter_\run$-measurable and $\exof{\noise_\run \given \filter_\run} = 0$.

Moreover, by \cref{prop:1point}, we have: 
\begin{align}
\label{eq:bias_norm}
\dnorm{\bias_{\play,\run}} = \dnorm*{\ex\bracks*{\paysignal_{\play,\run} \mid \filter_\run} - \payfield_\play(\denmat_\run)} \leq 2\Dim_\play \smooth \radius_{\run}
\end{align}
and 
\begin{align}
\label{eq:variance_norm}
\ex\bracks*{\dnorm*{\paysignal_{\play,\run}}^2} = \exof*{\ex\bracks*{\dnorm*{\paysignal_{\play,\run}}^2  \mid \filter_\run}} \leq \frac{\Dim_{\play}^{2}\bound^{2}}{\radius_\run^2}
\end{align}
Hence, taking expectation in \eqref{eq:energy_3}, we obtain:
\begin{align}
\exof*{\gap_{\loss}(\ergodic_{\nRuns})}
& \leq \frac{1}{\nRuns} \sum_{\run = 1}^{\nRuns}\exof{\energy_\run}\parens*{\frac{1}{\step_\run} - \frac{1}{\step_{\run-1}}} +  \frac{1}{2\strong\nRuns}\sum_{\run = 1}^{\nRuns}\step_\run\exof{\dnorm{\paysignal_\run}^2}  + \frac{4}{\nRuns}\sum_{\run = 1}^{\nRuns}\exof{\dnorm{\bias_{\run}}}
	\notag\\
& \leq \frac{1}{\nRuns} \sum_{\run = 1}^{\nRuns}\exof{\energy_\run}\parens*{\frac{1}{\step_\run} - \frac{1}{\step_{\run-1}}} +  \frac{2\Dim^2\bound^2}{\strong\nRuns}\sum_{\run = 1}^{\nRuns}\frac{\step_\run}{\radius_{\run}^2}  + \frac{16\Dim\smooth}{\nRuns}\sum_{\run = 1}^{\nRuns}\radius_\run
\end{align}
Setting $\step_\run = \step$ and $\radius_\run = \radius$, we obtain:
\begin{align}
\exof*{\gap_{\loss}(\ergodic_{\nRuns})}
& \leq \frac{\exof{\energy_1}}{\step\nRuns} + \frac{2\Dim^2\bound^2\step}{\strong\radius^2} + 16\Dim\smooth\radius
\end{align}
and finally, noting that 
\begin{align}
\exof{\energy_1} = \breg(\eq,\denmat_1) \leq \log(\vdim_\playone \vdim_\playtwo)
\end{align}
we get:
\begin{align}
\exof*{\gap_{\loss}(\ergodic_{\nRuns})}
& \leq \frac{\hmax}{\step\nRuns} + \frac{2\Dim^2\bound^2\step}{\strong\radius^2} + 16\Dim\smooth\radius
\end{align}
where $\hmax = \log(\vdim_\playone \vdim_\playtwo)$.
Hence, after tuning $\step$ and $\radius$ to optimize this last expression, our result follows by setting $\step = \parens*{\frac{\hmax}{2\nRuns}}^{3/4} \frac{1}{2\Dim\sqrt{\bound\smooth}}$ and $\radius = \parens*{\frac{\hmax}{2\nRuns}}^{1/4}\sqrt{\frac{\bound}{4\smooth}}$.
\end{proof}

\section{Omitted proofs from \cref{sec:general}}
\label{app:general}
We provide first the bounds of the estimator \eqref{eq:1point} in a $\nPlayers$-player quantum game. Formally, we have:

\begin{lemma}
\label{lem:general-statistics}
The estimator \eqref{eq:1point} in a $\nPlayers$-player quantum game $\qgame$ enjoys the conditional bounds
\begin{equation}
\label{eq:B-SPSA-statistics}
(i)\;\;
\dnorm{\exof{\paysignal_{\run} \given \filter_\run} - \payfield(\denmat_\run)}
	\leq \frac{1}{2}\Dim \smooth \nPlayers^{3/2} \maxnorm\radius_{\run}
	\quad
	\text{and}
	\quad
(ii)\;\;
\exof{\dnorm{\paysignal_{\run}}^2 \given \filter_\run}
	\leq \frac{\Dim^{2}\bound^{2}\nPlayers}{\radius_\run^2}.
\end{equation}

\end{lemma}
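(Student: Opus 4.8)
The plan is to transcribe the proof of the two-player bound \cref{prop:1point} essentially line for line, treating the two parts separately and tracking only how the player count $\nPlayers$ enters. The single structural change from the $2$-player argument is that the first-order Taylor expansion of $\pay_\play$ around $\denmat_\run$ now runs over all $\playalt\in\players = \{1,\dots,\nPlayers\}$ rather than over two players, and that the final bounds are assembled through the product norm \eqref{eq:prod-norm} over $\nPlayers$ factors instead of two. All the probabilistic cancellations and the orthonormal-basis identity go through verbatim.

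For part $(i)$, I would set $\Xi_{\play,\run} = \pivot_{\play,\run} - \denmat_{\play,\run} = \radius_\run\bracks{\dir_{\play,\run} + \safe_\play^{-1}(\refmat_\play - \denmat_{\play,\run})}$ exactly as before, and expand $\pay_\play(\pivot_\run + \radius_\run\dir_\run) = \pay_\play(\denmat_\run) + \sum_{\playalt\in\players}\trof{\nabla_{\denmat_\playalt^\top}\pay_\play(\denmat_\run)^\dag \Xi_{\playalt,\run}} + \rem_2(\Xi_\run)$. Using $\exof{\payobs_\play(\outcome_\run) \given \filter_\run,\dir_\run} = \pay_\play(\pivot_\run+\radius_\run\dir_\run)$ together with the zero-mean and independence properties of $\dir_{\play,\run}$, the constant term and every cross-term $\playalt\neq\play$ vanish in conditional expectation, while the diagonal term reproduces $\tfrac{\radius_\run}{\Dim_\play}\proj_{\bvecs_\play}(\payfield_\play(\denmat_\run)) = \tfrac{\radius_\run}{\Dim_\play}\payfield_\play(\denmat_\run)$ through the same orthonormal-basis identity established for \cref{prop:basis} and already exploited in \cref{prop:1point}. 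The only genuinely new estimate is the remainder: since $\norm{\Xi_\run}^2 = \sum_{\playalt}\norm{\Xi_{\playalt,\run}}^2$ with each summand $\bigoh(\radius_\run^2)$ by \cref{lem:diam}, we obtain $\abs{\rem_2(\Xi_\run)}\leq \tfrac{\smooth}{2}\norm{\Xi_\run}^2 = \bigoh(\nPlayers\radius_\run^2)$, so the per-player bias inflates to order $\tfrac12\Dim_\play\smooth\nPlayers\radius_\run$. Aggregating the $\nPlayers$ player components via \eqref{eq:prod-norm} then contributes a further $\sqrt{\nPlayers}$, and bounding $\Dim_\play\leq\Dim$ yields $\tfrac12\Dim\smooth\nPlayers^{3/2}\radius_\run$. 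This is exactly how the power $\nPlayers^{3/2}$ in the statement decomposes: one factor of $\nPlayers$ from the quadratic remainder summed over all coordinates, and one factor of $\sqrt{\nPlayers}$ from the product-norm bookkeeping.

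For part $(ii)$, the per-player magnitude bound is unchanged from the two-player case: $\dnorm{\paysignal_{\play,\run}} = \tfrac{\Dim_\play}{\radius_\run}\abs{\pay_\play(\pivot_\run+\radius_\run\dir_\run)}\,\dnorm{\dir_{\play,\run}} \leq \Dim_\play\bound/\radius_\run$, using boundedness of $\pay_\play$ and $\dnorm{\dir_{\play,\run}}=1$, hence $\exof{\dnorm{\paysignal_{\play,\run}}^2 \given \filter_\run}\leq \Dim_\play^2\bound^2/\radius_\run^2$. Summing the $\nPlayers$ player contributions through \eqref{eq:prod-norm} and using $\Dim_\play\leq\Dim$ produces the advertised bound $\nPlayers\Dim^2\bound^2/\radius_\run^2$.

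The main obstacle, and really the only place requiring care, is the remainder estimate in part $(i)$: one must verify that the second-order Taylor remainder is controlled uniformly over the admissible sampling directions (which the feasibility adjustment of \cref{prop:adjustment} guarantees) and, more delicately, keep the two distinct sources of $\nPlayers$ separate. Conflating the summation over $\nPlayers$ coordinates inside $\norm{\Xi_\run}^2$ with the product-norm aggregation over the $\nPlayers$ players would give the wrong power of $\nPlayers$; tracking them separately is what produces the precise $\nPlayers^{3/2}$ scaling. Everything else is a routine re-indexing of the two-player computation with $\players = \{1,\dots,\nPlayers\}$ in place of $\{1,2\}$.
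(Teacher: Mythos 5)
Your proposal is correct and follows essentially the same route as the paper's proof: the same Taylor expansion, the same cancellation of constant and cross-player terms via the zero-mean sampling directions, the same orthonormal-basis projection identity, and the same two-step accounting of $\nPlayers$ (a factor $\nPlayers$ from the remainder $\norm{\Xi_\run}^2$ and a factor $\sqrt{\nPlayers}$ from aggregating per-player bounds through the product norm \eqref{eq:prod-norm}). The only cosmetic difference is that you invoke \cref{lem:diam} explicitly to bound each $\norm{\Xi_{\play,\run}}^2$, which the paper leaves implicit.
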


\begin{proof}
We prove each part separately.
\begin{enumerate}[\itshape (i)]
\item
Let $\Xi_{\play,\run}$ be defined for all players $\play \in \players$:
\begin{align}
\Xi_{\play,\run}
	= \pivot_{\play,\run} - \denmat_{\play,\run}
	= \radius_\run \dir_{\play,\run} + \frac{\radius_\run}{\safe_\play}(\refmat_\play - \denmat_{\play,\run})
	=  \radius_\run \bracks*{\dir_{\play,\run} + \frac{1}{\safe_\play}(\refmat_\play - \denmat_{\play,\run})}
\end{align}

Taking a 1st-order Taylor expansion of $\pay_{\play}$, we obtain:
\begin{align}
\label{eq:Taylor}
\pay_{\play}(\pivot_{\run} + \radius_\run\dir_{\run}) = \pay_{\play}(\denmat_{\run}) + \sum_{\playalt \in \players} \trof*{\nabla_{\denmat_{\playalt}^{\top}}\pay_\play(\denmat_\run)^{\dag}\Xi_{\playalt,\run}} + \rem_{2}(\Xi_{\run}) 
\end{align}
Since $\exof{\payobs_{\play}(\outcome_\run) \given \filter_\run,\dir_{\run}} = \pay(\pivot_{\run} + \radius_\run\dir_{\run})$, combining it with \eqref{eq:Taylor}, we readily get:
\begin{align}
\exof{\paysignal_{\play,\run} \given \filter_\run,\dir_{\run}} & = \frac{\Dim_{\play}}{\radius_\run}\pay_{\play}(\pivot_{\run} + \radius_\run\dir_{\run})\;\dir_{\play,\run}
	\notag\\
	&= \frac{\Dim_{\play}}{\radius_\run}\pay_{\play}(\denmat_{\run})\dir_{\play,\run} + \frac{\Dim_{\play}}{\radius_\run}\sum_{\playalt \in \players} \trof*{\nabla_{\denmat_{\playalt}^{\top}}\pay_\play(\denmat_\run)^{\dag}\Xi_{\playalt,\run}}\dir_{\play,\run} + \frac{\Dim_\play}{\radius_\run}\rem_{2}(\Xi_{\run})\dir_{\play,\run}
\end{align}
Now, because $\exof*{\dir_{\play,\run} \given \filter_\run} = 0$ and $\dir_{\play,\run}$ is sampled independent of any other process, we have:
\begin{align}
\exof*{\pay_{\play}(\denmat_{\run})\dir_{\play,\run} \given \filter_\run} = \pay_{\play}(\denmat_{\run}) \exof*{\dir_{\play,\run} \given \filter_\run} = 0
\end{align}
and for $\playalt \neq \play \in \players$:
\begin{align}
\exof*{\trof*{\nabla_{\denmat_{\playalt}^{\top}}\pay_\play(\denmat_\run)^{\dag}\Xi_{\playalt,\run}}\dir_{\play,\run} \given \filter_\run} = 0
\end{align}
Therefore, we obtain:
\begin{align}
\exof*{\sum_{\playalt \in \players} \trof*{\nabla_{\denmat_{\playalt}^{\top}}\pay_\play(\denmat_\run)^{\dag}\Xi_{\playalt,\run}}\dir_{\play,\run} \mid \filter_\run}
	&=  \exof*{\trof*{\payfield_\play(\denmat_\run)^{\dag}\Xi_{\play,\run}}\dir_{\play,\run} \mid \filter_\run}
	\notag\\
	&= \radius_\run\exof*{\trof*{\payfield_\play(\denmat_\run)^{\dag}\dir_{\play,\run}}\dir_{\play,\run} \mid \filter_\run}
	\notag\\
	&= \frac{\radius_\run}{\Dim_{\play}}\sum_{W \in \bvecs_\play} \trof*{\payfield_\play(\denmat_\run)^{\dag}W}W
	\notag\\
	&= \frac{\radius_\run}{\Dim_{\play}}\proj_{\bvecs_\play}\parens*{\payfield_\play(\denmat_\run)}
	= \frac{\radius_\run}{\Dim_{\play}}\payfield_\play(\denmat_\run)
\end{align}
where in the last step, we identify $\proj_{\bvecs_\play}\parens*{\payfield_\play(\denmat_\run)}$ with $\payfield_\play(\denmat_\run)$, as explained in the proof of \cref{prop:2point}. Moreover, we have that 
\begin{equation}
\abs{\rem_{2}\parens{\Xi_{\run}}} \leq \frac{\smooth}{2}\norm{\Xi_{\run}}^2 \leq \frac{1}{2}\smooth \nPlayers \maxnorm \radius_{\run}^{2}
\end{equation}

In view of the above, we have:
\begin{align}
\dnorm{\exof{\paysignal_{\play,\run} \given \filter_\run} - \payfield_{\play}(\denmat_\run)}  \leq \frac{1}{2}\Dim_\play \smooth \nPlayers \maxnorm\radius_{\run}
\end{align}
and, therefore,
\begin{align}
\dnorm{\exof{\paysignal_{\run} \given \filter_\run} - \payfield(\denmat_\run)}  \leq \frac{1}{2}\Dim \smooth \nPlayers^{3/2} \maxnorm\radius_{\run}
\end{align}
\item
By the definition of $\paysignal_{\play,\run}$, we have:
\begin{align}
\dnorm{\paysignal_{\play,\run}}
	= \frac{\Dim_{\play}}{\radius_\run}\abs*{\pay_{\play}(\pivot_{\run} + \radius_\run\,\dir_{\run})}\,\dnorm{\dir_{\play,\run}}
	\leq \frac{\Dim_{\play}\bound}{\radius_\run}
\end{align}
and therefore, we readily obtain that:
\begin{align}
\exof*{\dnorm{\paysignal_{\play,\run}}^2 \given \filter_\run} & \leq \frac{\Dim_{\play}^{2}\bound^{2}}{\radius_\run^2}
\end{align}
Hence, ultimately, we get the bound
\begin{align}
\exof*{\dnorm{\paysignal_{\run}}^2 \given \filter_\run} & \leq \frac{\Dim^{2}\bound^{2}\nPlayers}{\radius_\run^2}
\end{align}
\end{enumerate}
and our proof is complete.
\qedhere
\end{proof}

With all this in hand, we are finally in a position to proceed with the proof of \cref{thm:VS}, which we restate below for convenience:

\VS*

\begin{proof}


Since $\eq$ is variationally stable, there exists a neighborhood $\nhdvs$ of it such that
\begin{equation}
\label{eq:VS_thm}
\trof{\payfield(\denmat) (\denmat - \eq)}
	< 0
	\quad
	\text{for all $\denmat\in\nhdvs\exclude{\eq}$}.
\end{equation}
For any $\pmt >0$, defining
\begin{equation}
\label{eq:nhds}
\pnhd_\pmt \defeq \setdef{\denmat \in \denmats}{\breg(\eq,\denmat) < \pmt}
\end{equation}
we readily obtain by the  continuity of $\denmat \mapsto \breg(\eq,\denmat)$ at $\eq$ that there exists a neighborhood $\nhdone$ of $\eq$ such that $\nhdone \subseteq \nhdvs$. 
Note that if $\eps_1 < \eps_2$, we automatically get that $\pnhd_{\eps_1} \subseteq \pnhd_{\eps_2}$. 

In view of this, we let $\init \in \nhdtwo \subseteq \nhdone \subseteq \nhdvs$.
We divide the rest of the proof in steps.
\begin{enumerate}[\bfseries Step 1.]
\item
\textbf{Deriving the general energy inequality}

By \cref{lem:template} we have that:
\begin{align}
\label{eq:template}
\breg(\eq,\next)
	\leq \breg(\eq,\curr)
		+ \curr[\step]\trof{\curr[\paysignal] (\curr - \eq)}
		+ \frac{\curr[\step]^{2}}{2} \dnorm{\curr[\paysignal]}^{2}.
\end{align}
Decomposing $\paysignal_\run$ into
\begin{equation}
\paysignal_\run = \payfield(\denmat_\run) + \bias_\run + \noise_\run
\end{equation}
 as per \eqref{eq:signal-decomp} and applying \eqref{eq:template} inequality iteratively, we get that 
\begin{align}
\label{eq:energy_1}
	\breg(\eq,\next) & \leq \breg(\eq,\init)
		+ \sum_{\runalt = 1}^{\run} \step_\runalt \trof{\paysignal_\runalt(\denmat_\runalt - \eq)}
		+  \frac{1}{2}\sum_{\runalt = 1}^{\run} \step_{\runalt}^{2} \dnorm{\paysignal_{\runalt}}^{2}
	\notag\\
	& \leq \breg(\eq,\init)
		+ \sum_{\runalt = 1}^{\run} \step_\runalt \trof{\payfield(\denmat_\runalt)(\denmat_\runalt - \eq)}
		+ \sum_{\runalt = 1}^{\run} \step_\runalt \trof{\bias_\runalt(\denmat_\runalt - \eq)}
	\notag\\
	& \quad\quad + \sum_{\runalt = 1}^{\run} \step_\runalt \trof{\noise_\runalt(\denmat_\runalt - \eq)}
		+  \frac{1}{2}\sum_{\runalt = 1}^{\run} \step_{\runalt}^{2} \dnorm{\paysignal_{\runalt}}^{2}
	\notag\\
	& \leq \breg(\eq,\init)
		+ \sum_{\runalt = 1}^{\run} \step_\runalt \trof{\payfield(\denmat_\runalt)(\denmat_\runalt - \eq)}
		+ \sum_{\runalt = 1}^{\run} \step_\runalt \dnorm{\bias_\runalt}\norm{\denmat_\runalt - \eq}
	\notag\\
	& \quad\quad + \sum_{\runalt = 1}^{\run} \step_\runalt \trof{\noise_\runalt(\denmat_\runalt - \eq)}
		+  \frac{1}{2}\sum_{\runalt = 1}^{\run} \step_{\runalt}^{2} \dnorm{\paysignal_{\runalt}}^{2}
	\notag\\
	& \leq \breg(\eq,\init)
		+ \sum_{\runalt = 1}^{\run} \step_\runalt \trof{\payfield(\denmat_\runalt)(\denmat_\runalt - \eq)}
		+ \diam(\denmats) \sum_{\runalt = 1}^{\run} \step_\runalt \dnorm{\bias_\runalt}
	\notag\\
	& \quad\quad + \sum_{\runalt = 1}^{\run} \step_\runalt \trof{\noise_\runalt(\denmat_\runalt - \eq)}
		+  \frac{1}{2}\sum_{\runalt = 1}^{\run} \step_{\runalt}^{2} \dnorm{\paysignal_{\runalt}}^{2}
\end{align}
Defining the processes $\serror_\run,\snoise_\run$ and $\sbias_\run$ for $\run = 1,2,\dots$ as
\begin{subequations}
\begin{align}
\serror_\run
	&\defeq  \frac{1}{2}\sum_{\runalt = 1}^{\run} \step_{\runalt}^{2} \dnorm{\paysignal_{\runalt}}^{2}
	\\
\snoise_\run
	&\defeq \sum_{\runalt = 1}^{\run} \step_\runalt \trof{\noise_\runalt(\denmat_\runalt - \eq)}
	\\
\sbias_\run
	&\defeq \diam(\denmats) \sum_{\runalt = 1}^{\run} \step_\runalt \dnorm{\bias_\runalt}
\end{align}
\end{subequations}
equation \eqref{eq:energy_1} can be rewritten as
\begin{equation}
\label{eq:energy_2}
	\breg(\eq,\next) \leq \breg(\eq,\init)
		+ \sum_{\runalt = 1}^{\run} \step_\runalt \trof{\payfield(\denmat_\runalt)(\denmat_\runalt - \eq)}
		+ \sbias_\run + \snoise_\run + \serror_\run
\end{equation}
\item
\textbf{Bounding the noise terms}

Let $\eps > 0$ as defined in the beginning of the proof.
\begin{itemize}
\item 
Regarding the term $\sbias_\run$, it is clear that the process $\braces{\sbias_\run : \run \geq 1}$ is a sub-martingale. Hence, by Doob's maximal inequality for sub-martingales \cite{HH80}, we get that:
\begin{align}
\label{eq:sbias_bound_1}
	\probof*{\sup_{\runalt \leq \run} \sbias_\runalt \geq \eps/4}
	& \leq \frac{\exof{\sbias_\run}}{\eps/4}
	\notag\\
	& \leq \frac{\diam(\denmats)\sum_{\runalt =1}^{\run}\step_\runalt \exof{\dnorm{\bias_{\runalt}}}}{\eps/4}
	\notag\\
	& \leq \frac{\diam(\denmats)\sum_{\run =1}^{\infty}\step_\run \exof{\dnorm{\bias_{\run}}}}{\eps/4}
	\notag\\
	& = \frac{\diam(\denmats)\sum_{\run =1}^{\infty}\step_\run \exof*{\exof{\dnorm{\bias_{\run}}\given\filter_\run}}}{\eps/4}
	\notag\\
	& \leq  \frac{2\diam(\denmats)\Dim \smooth \nPlayers^{3/2} \maxnorm\sum_{\run =1}^{\infty}\step_\run \radius_{\run}}{\eps}
\end{align}
By ensuring that
\begin{equation}
\label{eq:bound-1}
\sum_{\run =1}^{\infty} \step_\run \radius_{\run} \leq \frac{\eps\conf}{6\diam(\denmats)\Dim \smooth \nPlayers^{3/2} \maxnorm}
\end{equation}
and taking $\run$ go to $\infty$, \eqref{eq:sbias_bound_1} becomes:
\begin{align}
\label{eq:sbias_bound}
	\probof*{\sup_{\run \geq 1} \sbias_\run \geq \eps/4}
	& \leq \conf/3
\end{align}

\item
Similarly, it is clear that the process $\braces{\serror_\run : \run \geq 1}$ is a sub-martingale. Following the same procedure, by Doob's maximal inequality for sub-martingales \cite{HH80}, we get that:
\begin{align}
\label{eq:serror_bound_1}
\probof*{\sup_{\runalt \leq \run} \serror_\runalt \geq \eps/4}
	\leq \frac{\exof{\serror_\run}}{\eps/4}
	&\leq \frac{\frac{1}{2}\sum_{\runalt = 1}^{\run} \step_{\runalt}^{2} \exof*{\dnorm{\paysignal_{\runalt}}^{2}}}{\eps/4}
	\notag\\
	& \leq \frac{\frac{1}{2}\sum_{\run = 1}^{\infty} \step_{\run}^{2} \exof*{\dnorm{\paysignal_{\run}}^{2}}}{\eps/4}
	\notag\\
	& \leq \frac{2\Dim^{2}\bound^{2}\nPlayers\sum_{\run = 1}^{\infty} \step_{\run}^{2}/\radius_\run^2}{\eps}
\end{align}
By ensuring that
\begin{equation}
	{\sum_{\run =1}^{\infty} }\step_{\run}^{2}/\radius_\run^2 
	\leq \frac{\eps\conf}{6\Dim^{2}\bound^{2}\nPlayers}
\end{equation}
and taking $\run\to\infty$, \eqref{eq:serror_bound_1} becomes:
\begin{align}
\label{eq:serror_bound}
	\probof*{\sup_{\run \geq 1} \serror_\run \geq \eps/4}
	& \leq \conf/3
\end{align}

\item
Finally, regarding the term $\snoise_\run$, the process $\braces{\snoise_\run : \run \geq 1}$ is a martingale. Following the same procedure, by Doob's maximal inequality for martingales \cite{HH80}, we get that:
\begin{align}
\label{eq:snoise_bound_1}
\probof*{\sup_{\runalt \leq \run} {\snoise_\runalt} \geq \eps/4} 
	&\leq \probof*{\sup_{\runalt \leq \run} \abs{\snoise_\runalt} \geq\eps/4}
	\leq \frac{\exof{\snoise_{\run}^{2}}}{(\eps/4)^2}
	= \frac{\sum_{\runalt = 1}^{\run} \step_{\runalt}^2 \exof*{\trof{\noise_\runalt(\denmat_\runalt - \eq)}^2} }{(\eps/4)^2}
	\notag\\
	&\leq \frac{\diam(\denmats)^2\sum_{\runalt = 1}^{\run} \step_{\runalt}^2 \exof*{\dnorm{\noise_\runalt}^2} }{(\eps/4)^2}
	\notag\\
	&\leq \frac{\diam(\denmats)^2\sum_{\run = 1}^{\infty} \step_{\run}^2 \exof*{\dnorm{\noise_\run}^2} }{(\eps/4)^2}
	\notag\\
	&\leq \frac{4\diam(\denmats)^2\sum_{\run = 1}^{\infty} \step_{\run}^2 \exof*{\dnorm{\paysignal_\run}^2} }{(\eps/4)^2}
	\notag\\
	&\leq  \frac{4\diam(\denmats)^2 \Dim^{2}\bound^{2}\nPlayers \sum_{\run = 1}^{\infty} \step_{\run}^2 / \radius_\run^2}{(\eps/4)^2} 
\end{align}
where we used the fact that
\begin{align}
\exof{\snoise_{\run}^{2}} &= \exof*{\sum_{\runalt = 1}^{\run} \step_{\runalt}^2 {\trof{\noise_\runalt(\denmat_\runalt - \eq)}^2
	+ \sum_{k < \ell}\step_{k}\step_{\ell} \trof{\noise_{k}(\denmat_{k} - \eq)} \trof{\noise_{\ell}(\denmat_{\ell} - \eq)}}}
	\notag\\
	&= \exof*{\sum_{\runalt = 1}^{\run} \step_{\runalt}^2 {\trof{\noise_\runalt(\denmat_\runalt - \eq)}^2}}
\end{align}
itself following from the total expectation
\begin{align}
\exof*{\trof{\noise_{k}(\denmat_{k} - \eq)} \trof{\noise_{\ell}(\denmat_{\ell} - \eq)}} 
	&= \exof*{\exof*{\trof{\noise_{k}(\denmat_{k} - \eq)} \trof{\noise_{\ell}(\denmat_{\ell} - \eq)} \given \filter_{\ell}}}
	\notag\\
	&= \exof*{\trof{\noise_{k}(\denmat_{k} - \eq)}\exof*{\trof{\noise_{\ell}(\denmat_{\ell} - \eq)} \given \filter_{\ell}}}
	\notag\\
	&= 0
\end{align}
Now, by ensuring that
\begin{equation}
\sum_{\run =1}^{\infty} \step_{\run}^2 / \radius_\run^2 \leq \frac{(\eps/4)^2\conf}{12\diam(\denmats)^2 \Dim^{2}\bound^{2}\nPlayers}
\end{equation}
and taking $\run$ go to $\infty$, \eqref{eq:snoise_bound_1} becomes:
\begin{align}
\label{eq:snoise_bound}
	\probof*{\sup_{\run \geq 1} {\snoise_\run} \geq \eps/4}
	& \leq \conf/3
\end{align}
\end{itemize}

Therefore, combining \eqref{eq:sbias_bound}, \eqref{eq:serror_bound} and \eqref{eq:snoise_bound} and applying a union bound, we get:
\begin{align}
\label{eq:union-bound}
\probof*{\braces*{\sup_{\run \geq 1} {\sbias_\run} \geq \eps/4}\cup\braces*{\sup_{\run \geq 1} {\serror_\run} \geq \eps/4}\cup\braces*{\sup_{\run \geq 1} {\snoise_\run} \geq \eps/4}} \leq \conf
\end{align}
Thus, defining the event $\event \defeq \braces*{\sup_{\run \geq 1}  \, {\sbias_\run + \serror_\run + \snoise_\run} < \frac{3}{4}\eps}$, \cref{eq:union-bound} readily implies that:
\begin{align}
\label{eq:good-event}
\probof*{ \event } \geq 1-\conf
\end{align}

\item
\textbf{$\curr \in \nhdvs$ with high probability}

Since $\init \in \nhdtwo \subseteq \nhdvs$, by induction on $\run$ we have that under the event $\event$ 
\begin{align}
\label{eq:energy_3}
	\breg(\eq,\next) & \leq \breg(\eq,\init)
		+ \sum_{\runalt = 1}^{\run} \step_\runalt \trof{\payfield(\denmat_\runalt)(\denmat_\runalt - \eq)}
		+ \sbias_\run + \snoise_\run + \serror_\run \\
	& \leq \frac{\eps}{4} + \frac{\eps}{4} + \frac{\eps}{4} +\frac{\eps}{4} = \eps
\end{align}
where in the last step we used the inductive hypothesis that $\denmat_\runalt \in \nhdvs$ for all $\runalt = 1,\dots,\run$, which implies $\trof{\payfield(\denmat_\runalt)(\denmat_\runalt - \eq)} < 0$.
This implies that $\denmat_{\run+1} \in \nhdone \subseteq \nhdvs$.

Therefore, we obtain that $\denmat_{\run+1} \in \nhdone \subseteq \nhdvs$ for all $\run \geq 1$. For the rest of the proof we will work under the event $\event$.

\item
\textbf{Subsequential convergence}

Now we will show that there exists a subsequence $\{\denmat_{\run_k}: k \geq 1\}$ suct that $\lim_{k \to \infty}\denmat_{\run_k} = \eq$.
Suppose it does not. Then, this would mean that the quantity $ \trof{\payfield(\denmat_\run)(\denmat_\run - \eq)}$ is bounded away from zero. Combining it with the fact that $\denmat_\run \in \nhdvs$ for all $\run \geq 0$, we readily get that there exists $\const > 0$ such that:
\begin{equation}
 \trof{\payfield(\denmat_\runalt)(\denmat_\runalt - \eq)} < - \const
\end{equation}
Then, \eqref{eq:energy_2} would give:
\begin{align}
\breg(\eq,\next) & \leq \eps - c\sum_{\runalt = 1}^{\run} \step_\runalt
\end{align}
Hence, taking $\run \to \infty$, and using that $\sum_{\run \geq 1}\step_\run = \infty$, we would get that $\breg(\eq,\curr) \to - \infty$, which is a contradiction, since $\breg(\eq,\curr) \geq 0$. 

Hence, there exists a subsequence $\{\denmat_{\run_k}: k \geq 1\}$ suct that $\lim_{k \to \infty}\denmat_{\run_k} = \eq$, \ie
\begin{equation}
\label{eq:subseq-breg}
\lim_{k \to \infty}\breg(\eq,\denmat_{\run_k})= 0.
\end{equation}

\item
\textbf{Existence of $\lim_{\run\to\infty} \breg(\eq,\curr)$}

We define the sequence of events $\{\event_\run : \run \geq 1\}$ as
\begin{align}
& \event_\run \defeq  \braces*{\sup_{\runalt \leq \run - 1}  \, {\sbias_\runalt + \serror_\runalt + \snoise_\runalt} < \frac{3}{4}\eps} \quad\text{for $\run \geq 2$}
\end{align}
and
\begin{align}
& \event_{1} \defeq \braces*{\init \in \nhdtwo}
\end{align}
Then we have that $\event_\run \in \filter_\run$ and $\event_\run \subseteq \braces*{\denmat_\runalt \in \nhdvs: \runalt = 1,\dots,\run}$. 

Defining the random process $\braces*{\almost_\run: \run \geq 1}$ as
\begin{align}
\almost_\run = \breg(\eq,\curr)\one_{\event_\run}
\end{align}
Then, by \eqref{eq:template} we have
\begin{align}
\breg(\eq,\next) & \leq \breg(\eq,\init)
		+ \step_\run \trof{\payfield(\curr)(\curr - \eq)}
		+ \diam(\denmats)\step_\run \dnorm{\bias_\run} \notag\\
		&\quad + \step_\run \trof{\noise_\run(\curr - \eq)}
		+  \frac{1}{2}\step_{\run}^{2} \dnorm{\paysignal_{\run}}^{2}
\end{align}
Multiplying the above relation with $\one_{\event_\run}$, and noting that $\one_{\event_{\run+1}} \leq \one_{\event_\run}$, since $\event_{\run+1}\subseteq \event_\run$, we have
\begin{align}
\almost_{\run+1} & \leq \almost_{\run}
		+ \step_\run \trof{\payfield(\curr)(\curr - \eq)} \one_{\event_\run}
		+ \diam(\denmats)\step_\run \dnorm{\bias_\run} \one_{\event_\run} \notag\\
		&\quad + \step_\run \trof{\noise_\run(\curr - \eq)}\one_{\event_\run}
		+  \frac{1}{2}\step_{\run}^{2} \dnorm{\paysignal_{\run}}^{2} \one_{\event_\run} \\
		& \leq \almost_{\run} + \diam(\denmats)\step_\run \dnorm{\bias_\run} \one_{\event_\run} 
		+ \step_\run \trof{\noise_\run(\curr - \eq)}\one_{\event_\run}
		+  \frac{1}{2}\step_{\run}^{2} \dnorm{\paysignal_{\run}}^{2} \one_{\event_\run}
\end{align}
where in the last step we used that $\trof{\payfield(\curr)(\curr - \eq)} \one_{\event_\run} \leq 0$. Therefore, we obtain that:
\begin{align}
\exof{\almost_{\run+1} \given \filter_\run} 
		& \leq \almost_{\run} + \diam(\denmats)\step_\run  \one_{\event_\run}\exof{\dnorm{\bias_\run} \given \filter_\run}
		+  \frac{1}{2}\step_{\run}^{2}\one_{\event_\run}\exof{\dnorm{\paysignal_{\run}}^{2} \given \filter_\run}
\end{align}
where we used that
\begin{equation}
\exof*{\trof{\payfield(\curr)(\curr - \eq)} \one_{\event_\run} \given \filter_\run} = \one_{\event_\run} \exof*{\trof{\payfield(\curr)(\curr - \eq)}  \given \filter_\run} = 0
\end{equation}
Therefore, $\braces*{\almost_\run: \run \geq 1}$ is an almost super-martingale \cite{RS71} and, thus, there exists $\almost_\infty$ with $\almost_\infty$ finite \as and $\almost_\run \to \almost_\infty$ \as.

Since $\event = \cap_{\run \geq 1}\event_\run$, we have:
\begin{align}
	\probof*{\lim_{\run \to \infty}\breg(\eq,\curr) \;\,\text{exists} \given \event} 
	& = \frac{\probof*{\braces*{\lim_{\run \to \infty}\breg(\eq,\curr) \;\,\text{exists}}\cap\event}}{\probof{\event}} \\
	& = \frac{\probof*{\braces*{\lim_{\run \to \infty}\almost_\run \;\,\text{exists}}\cap\event}}{\probof{\event}} 
	 = 1
\end{align}
Hence, $\lim_{\run \to \infty}\almost_\run$ exists on $\event$ and by \emph{Step 3} we readily get that $\lim_{\run \to \infty}\almost_\run = 0$ on $\event$. 
Thus, by \cref{lem:strong-conv}, we get
\begin{equation}
\lim_{\run\to\infty} \curr = \eq \;\;\text{on the event $\event$}
\end{equation}
and setting $\nhd = \nhdtwo$, we obtain
\begin{equation}
\probof*{\lim_{\run\to\infty} \curr = \eq}
	\geq 1-\conf
	\quad
	\text{whenever $\init\in\nhd$.}
\end{equation}
\end{enumerate}
This concludes our discussion and our proof.
\end{proof}

\section{Numerical experiments}
\label{app:simulations}

In this last appendix, we provide a series of additional numerical simulations to validate and explore the performance of \eqref{eq:MMW} with payoff-based feedback.

\para{Trajectory analysis}

First, we proceed to a trajectory analysis of the game setup presented in \cref{sec:numerics}. Specifically, in \cref{fig:traj}, we provide a visualization of the actual trajectories of play generated by the three methods with the same parameters as before, for different initial conditions.
The trajectories are presented in Bloch spheres \cite{nielsen_chuang_2010}, where the points $\ket{0}$ and $\ket{1}$ in the figure correspond to the density matrices
\begin{equation}
\ket{0}
	= \begin{pmatrix}
		1 & 0\\
		0 & 0
	\end{pmatrix}
	\qquad
	\text{and}
	\qquad
\ket{1}
	= \begin{pmatrix}
		0 & 0\\
		0 & 1
	\end{pmatrix}
\end{equation}
respectively.
In all figures, the points in red indicate the trajectory of Player 1, while the points in blue are for Player $2$.
The initial points of the red trajectories are marked with $\bullet$, while the initial points of the blue ones are marked with $\blacksquare$.
[Each column of Bloch spheres in \cref{fig:traj} has the same initial conditions.]

\begin{figure}[t]
\centering
\begin{subfigure}{\textwidth}
\includegraphics[width=0.3\textwidth]{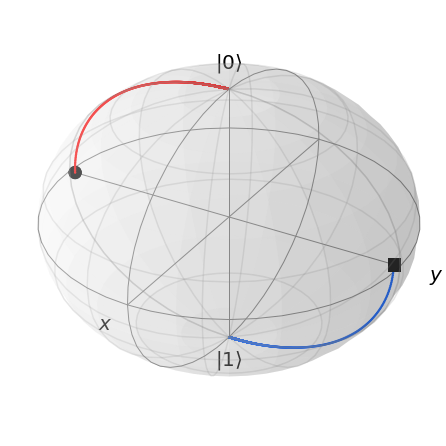}
\hfill
\includegraphics[width=0.3\textwidth]{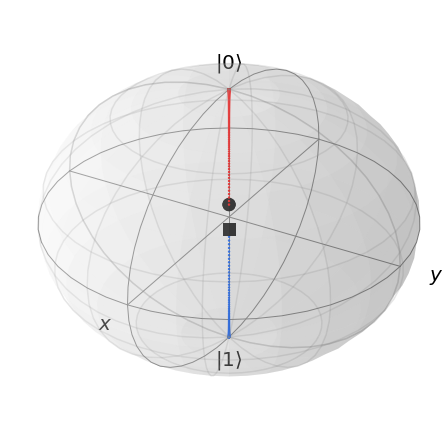}
\hfill
\includegraphics[width=0.3\textwidth]{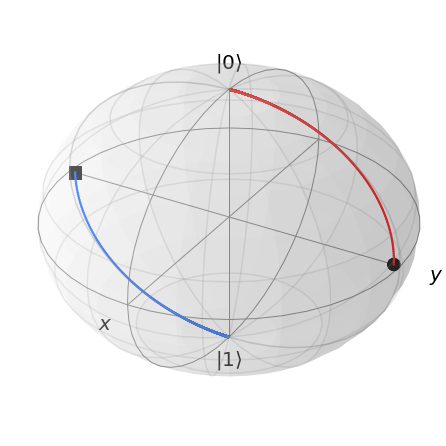}
\caption{Orbits of \eqref{eq:MMW} with full gradient feedback.}
\end{subfigure}
\begin{subfigure}{\textwidth}
\includegraphics[width=0.3\textwidth]{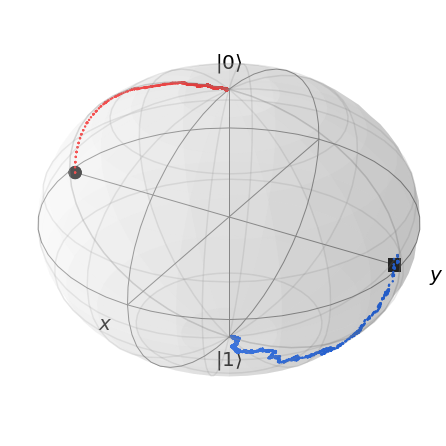}
\hfill
\includegraphics[width=0.3\textwidth]{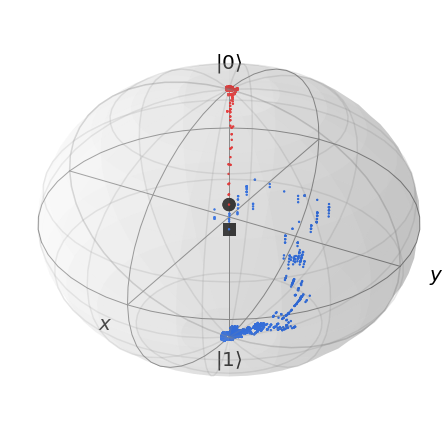}
\hfill
\includegraphics[width=0.3\textwidth]{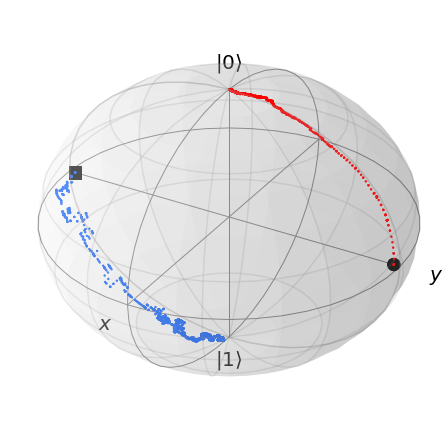}
\caption{Orbits of \eqref{eq:3MW} with mixed payoff observations as per \eqref{eq:2point}}
\end{subfigure}
\begin{subfigure}{\textwidth}
\includegraphics[width=0.3\textwidth]{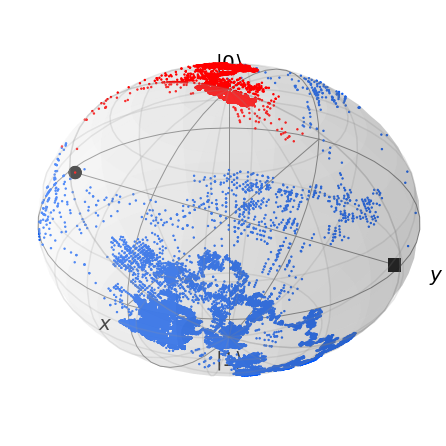}
\hfill
\includegraphics[width=0.3\textwidth]{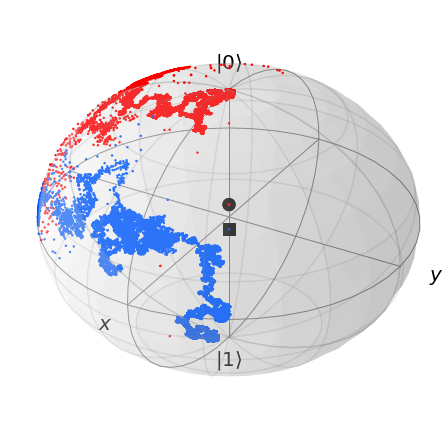}
\hfill
\includegraphics[width=0.3\textwidth]{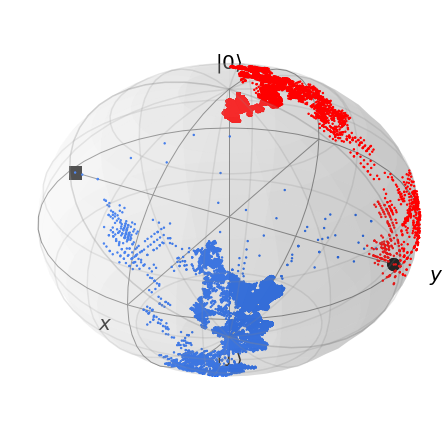}
\caption{Orbits of \eqref{eq:3MW} with bandit feedback as per \eqref{eq:1point}}
\end{subfigure}
\caption{Trajectories of the three methods for different initial conditions. The red points correspond to player 1, and the blue points to player 2. The initial points of the red trajectories are marked with $\bullet$, while the initial points of the blue ones are marked with $\blacksquare$.}
\vspace{-\baselineskip}
\label{fig:traj}
\end{figure}

An important remark here is that, as suggested by \cref{thm:VS}, the trajectories of all methods converge \textendash\ and quite rapidly at that \textendash\ to the game's (strict) \acl{NE}.
In fact, given that the trajectories converge to a pure state, this goes to explain the faster convergence rates observed in \cref{fig:gap}:
instead of oscillating around a solution, the \ac{MMW} orbits actually \emph{converge} to equilibrium in this case, so the trailing average converges at a much faster rate.
This holds in all zero-sum games with a pure equilibrium, thus indicating a very important class of zero-sum games where the worst-case guarantees of \ac{MMW} algorithms can be significantly improved.


\para{Convergence speed analysis}
In addition to the game setup described in \cref{sec:numerics}, we consider the following quantum games:
\begin{itemize}
\item
$\qgame_2$:  quantum analogue of the $2 \times 2$ min-max game with payoff matrix
\begin{equation}
\tag{$\qgame_{2}$}
P_2 = 
\begin{pmatrix}
(10,-10) & (10,-10)\\
(-10,10) & (-10,10)
\end{pmatrix}
\end{equation}
\item
$\qgame_3$: quantum analogue of the $3\times3$ min-max game with payoff 
\begin{equation}
\tag{$\qgame_{3}$}
P_3 = 
\begin{pmatrix}
(4,-4) & (2,-2) & (4,-4)\\
(-4,4) & (-2,2) & (-4,-4)\\
(-4,4) & (-2,2) & (-4,-4)
\end{pmatrix}
\end{equation}
\item
$\qgame_4$: quantum analogue of the $3\times3$ min-max game with payoff
\begin{equation}
\tag{$\qgame_{4}$}
P_4 = 
\begin{pmatrix}
(10,-10) & (10,-10) & (10,-10)\\
(-10,10) & (-10,10) & (-10,10)\\
(-10,10) & (-10,10) & (-10,10)
\end{pmatrix}
\end{equation}
\end{itemize}

\begin{figure}[t]
\centering
\begin{subfigure}{\textwidth}
\includegraphics[width=.32\textwidth]{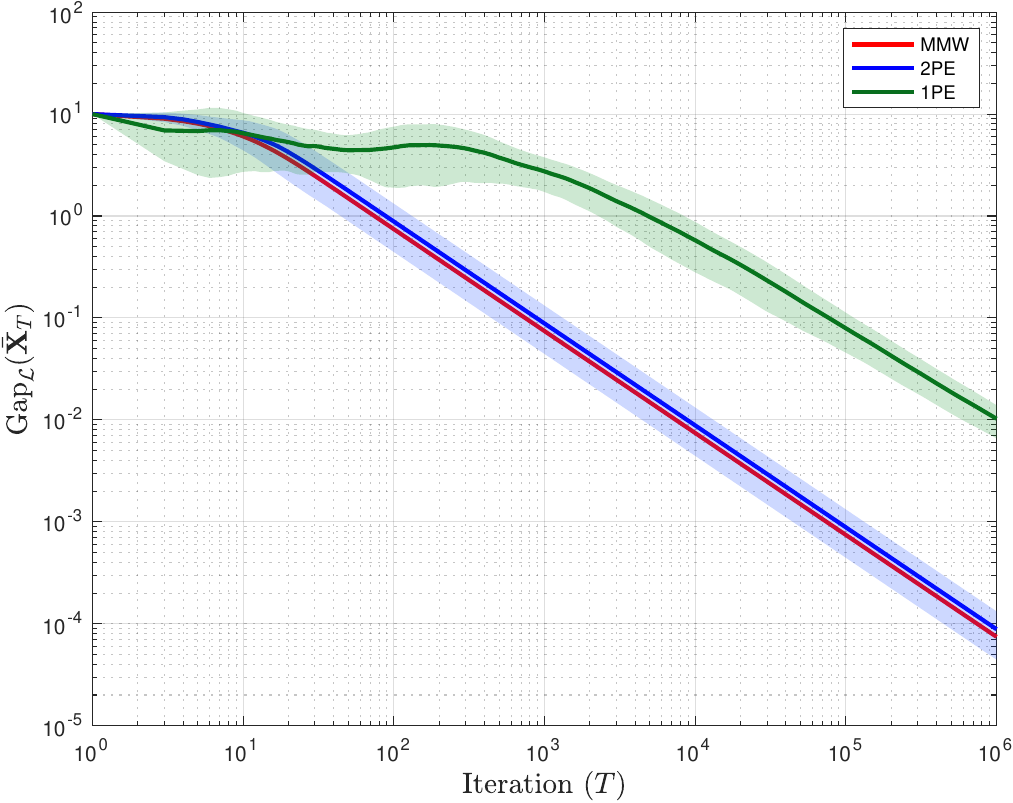}
\hfill
\includegraphics[width=.32\textwidth]{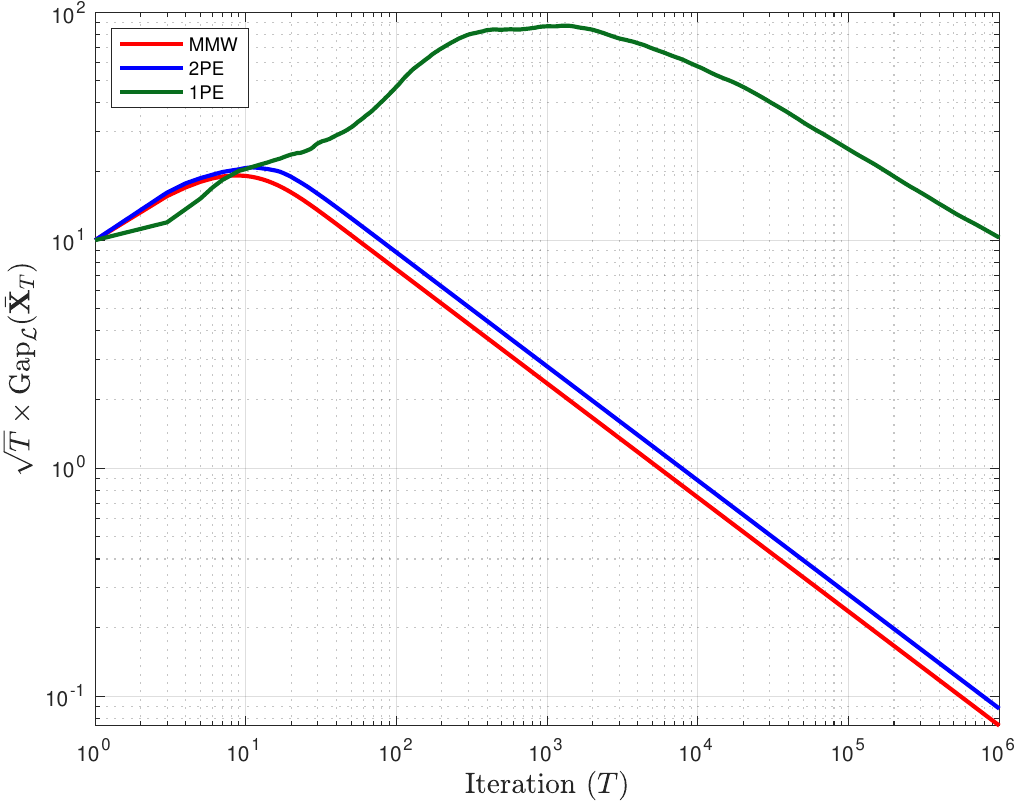}
\hfill
\includegraphics[width=.32\textwidth]{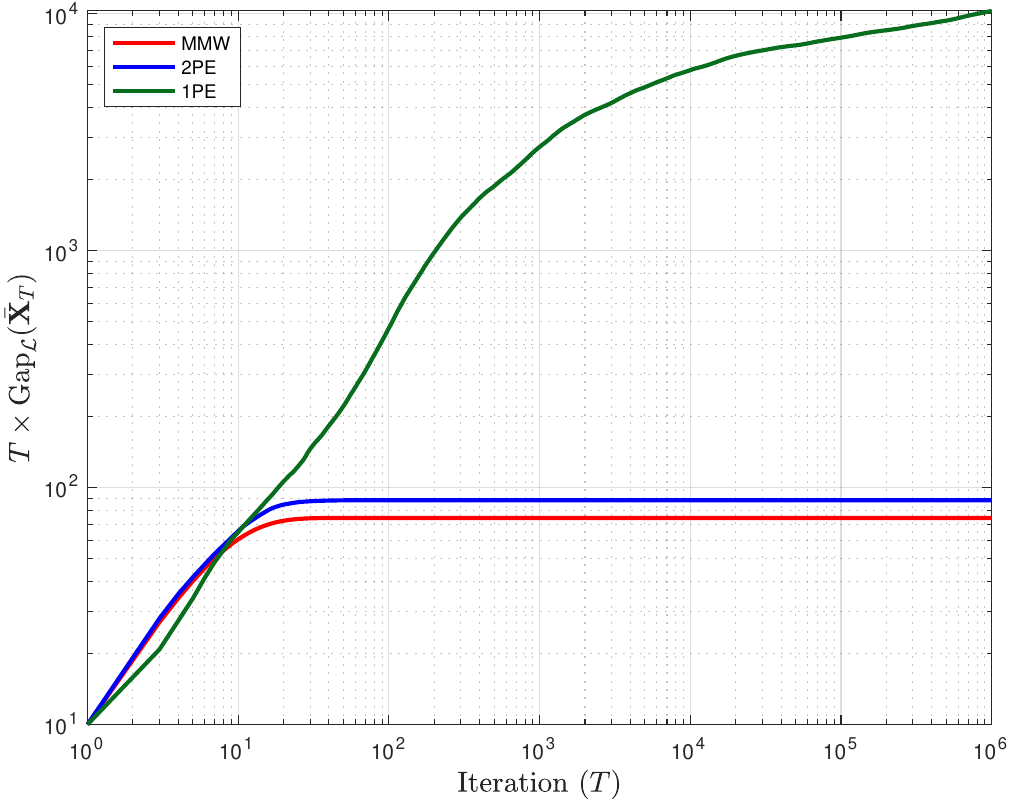}
\caption{Performance evaluation and comparison on $\qgame_{2}$.}
\label{fig:gap2}
\end{subfigure}
\smallskip
\begin{subfigure}{\textwidth}
\includegraphics[width=.32\textwidth]{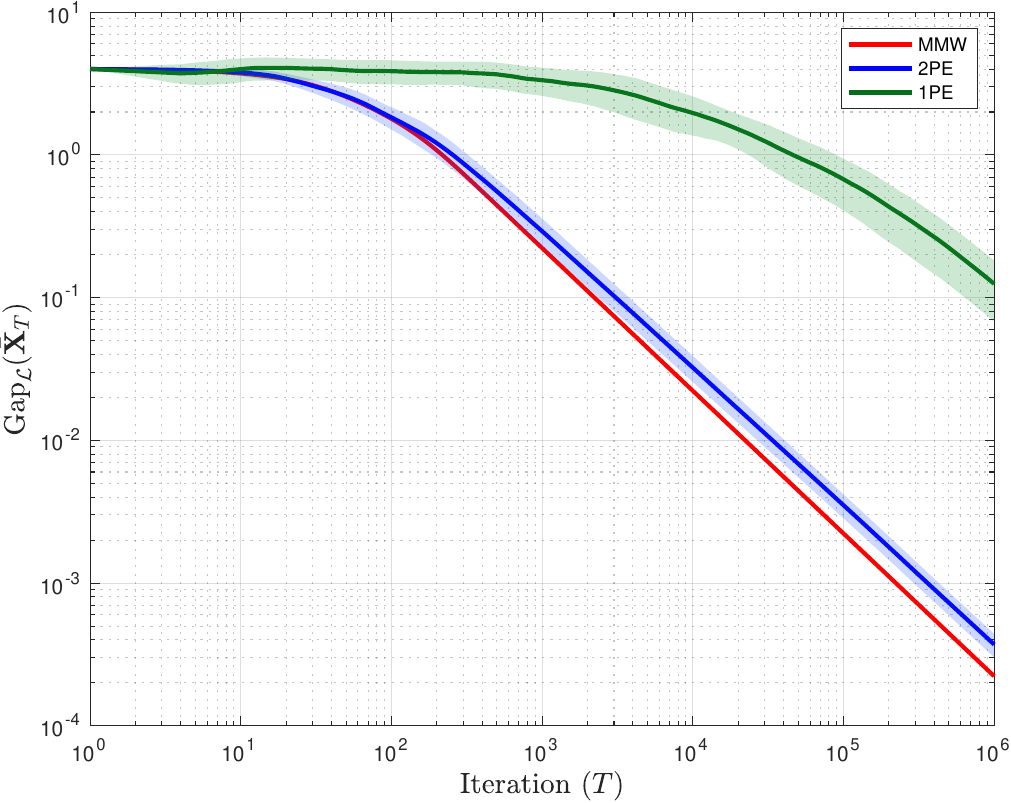}
\hfill
\includegraphics[width=.32\textwidth]{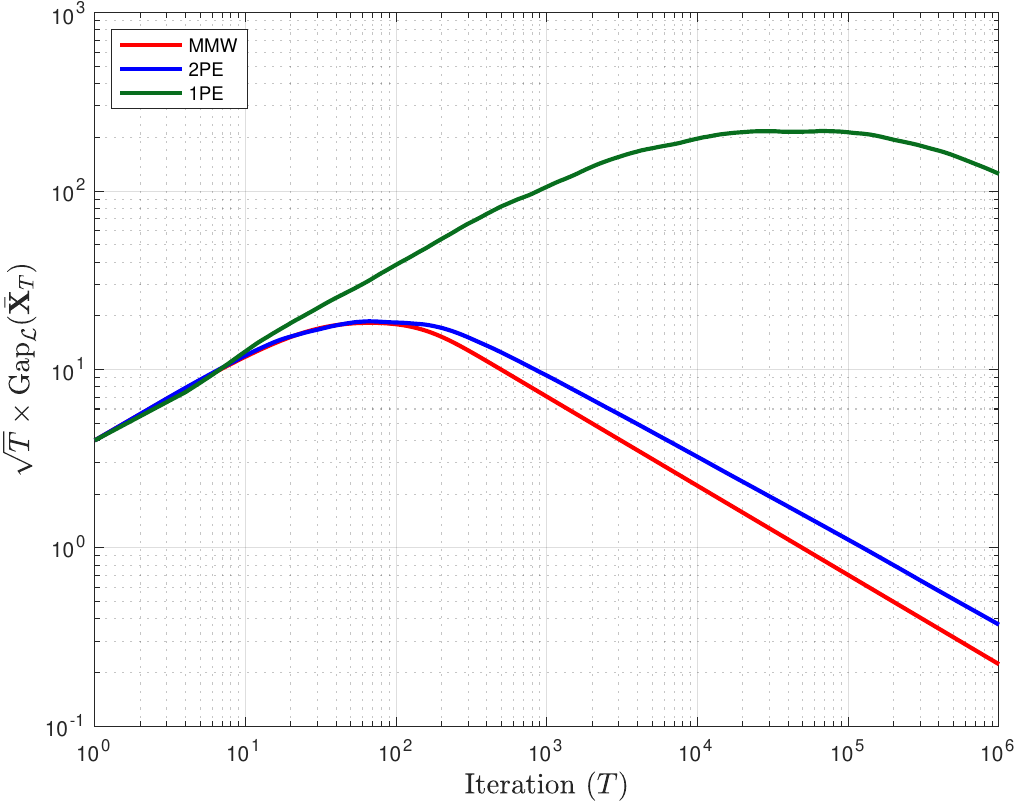}
\hfill
\includegraphics[width=.32\textwidth]{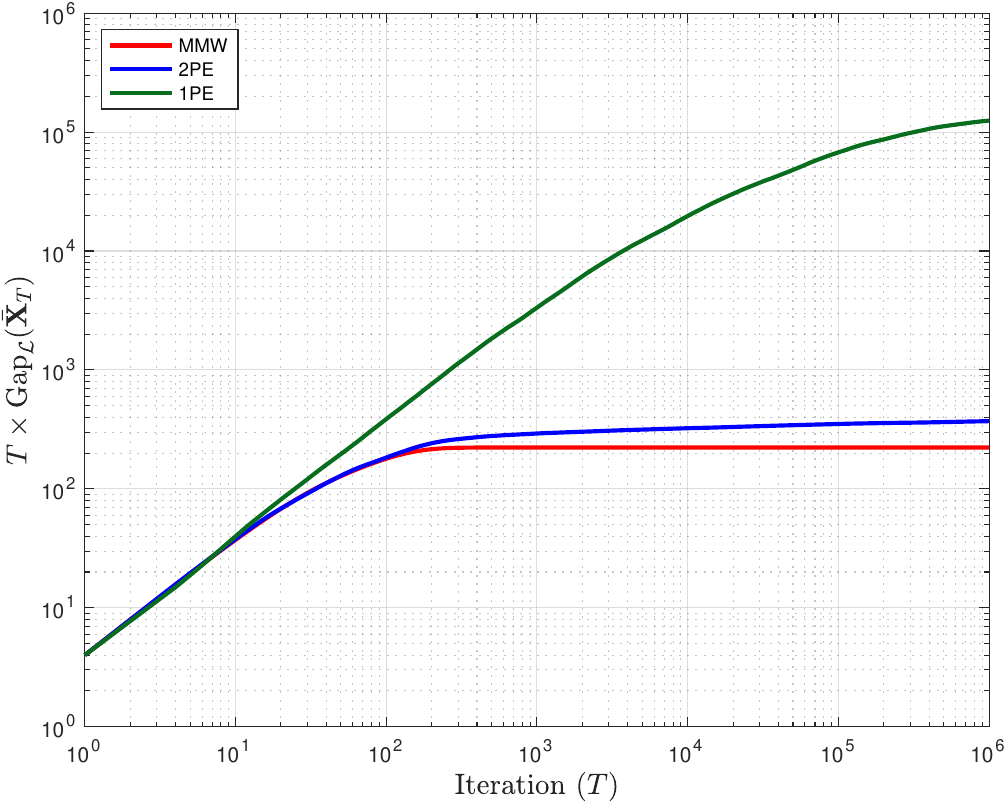}
\caption{Performance evaluation and comparison on $\qgame_{3}$.}
\label{fig:gap3}
\end{subfigure}
\smallskip
\begin{subfigure}{\textwidth}
\includegraphics[width=.32\textwidth]{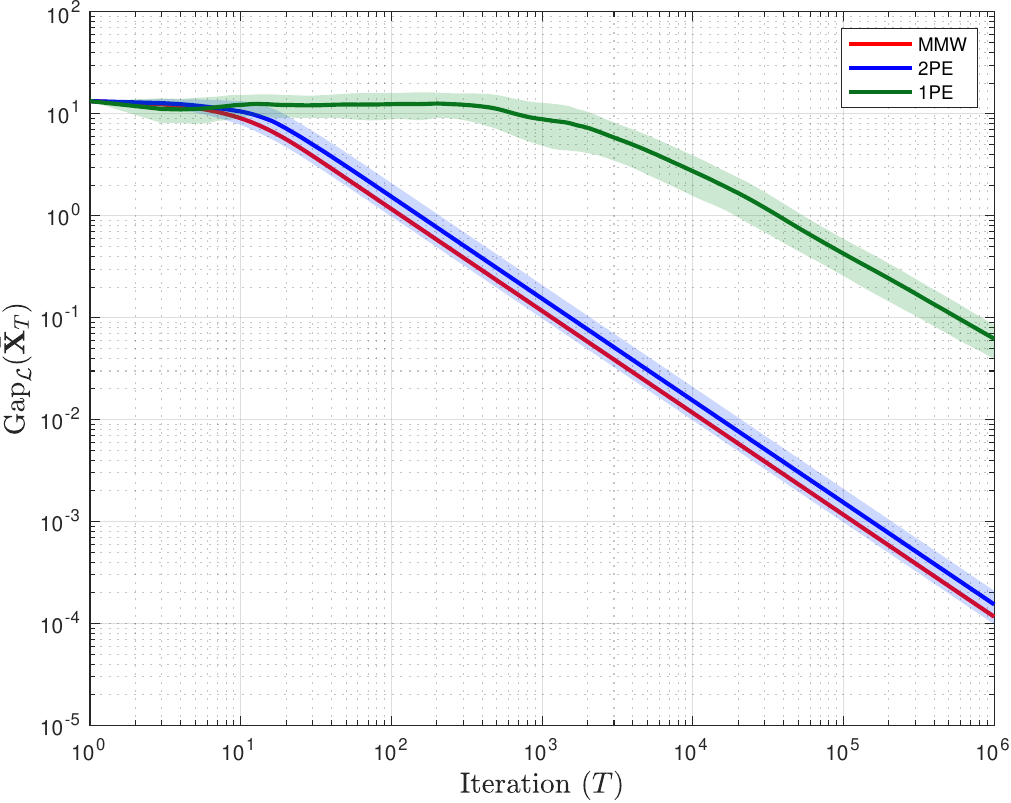}
\hfill
\includegraphics[width=.32\textwidth]{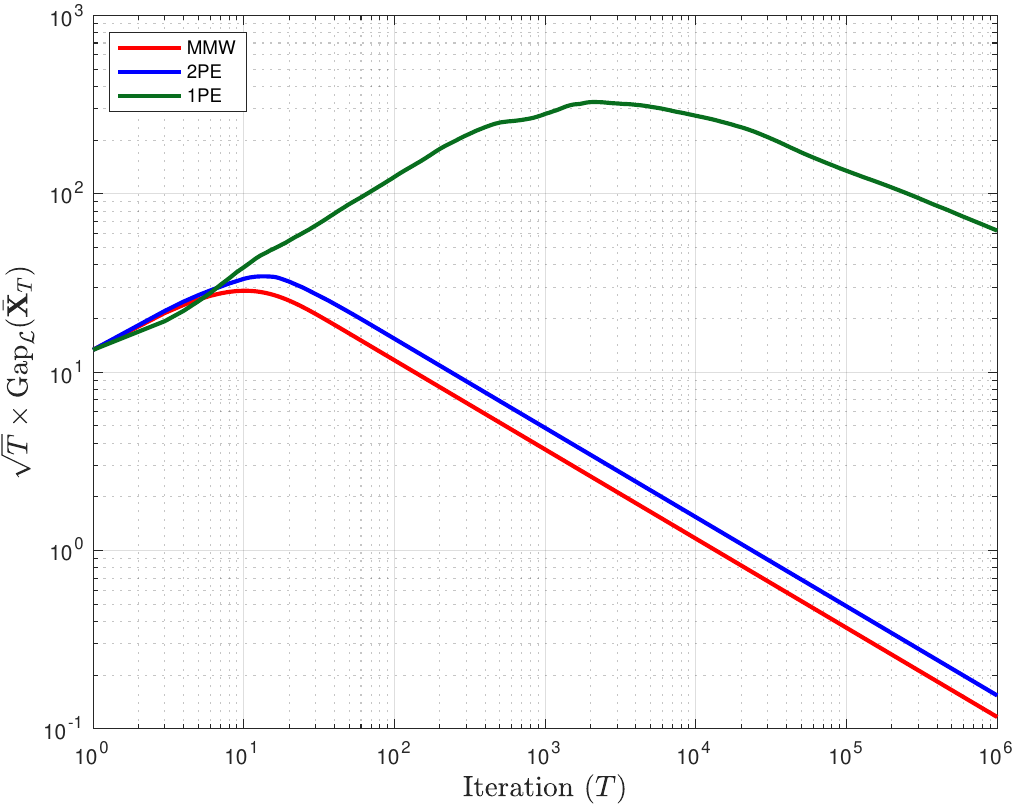}
\hfill
\includegraphics[width=.32\textwidth]{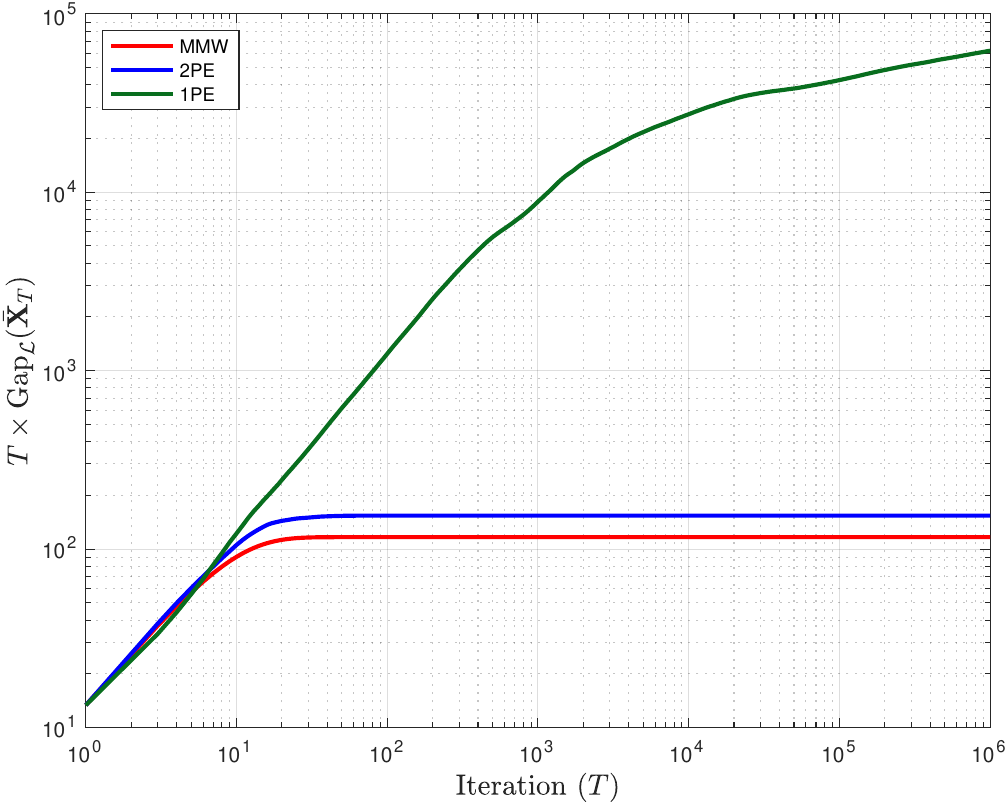}
\caption{Performance evaluation and comparison on $\qgame_{4}$.}
\label{fig:gap4}
\end{subfigure}
\caption{Performance evaluation of \eqref{eq:3MW} with estimators provided by \eqref{eq:2point} and \eqref{eq:1point}, and comparison with the full information algorithm \eqref{eq:MMW}.}
\label{fig:gaps}
\vspace{-3ex}
\end{figure}

In \cref{fig:gaps}, we evaluate the convergence properties of \eqref{eq:3MW} using the estimators \eqref{eq:2point} and \eqref{eq:1point}, and compare it with the full information variant \eqref{eq:MMW}, following the same setup as described in \cref{sec:numerics}.
Specifically, for each method, we perform $10$ different runs, with $\nRuns = 10^{5}$ steps each, and compute the mean value of the duality gap as a function of the iteration $\run = \running,\nRuns$.
The solid lines correspond to the mean values of the duality gap of each method, and the shaded regions enclose the area  of $\pm 1$ (sample) standard deviation among the $10$ different runs. Note that the red line, which corresponds to the full information \eqref{eq:MMW}, does not have a shaded region, since there is no randomness in the algorithm.
All the runs for the three different methods were initialized for $\dmat = 0$ and we used $\step = 10^{-2}$ for all methods.
In particular, for \eqref{eq:3MW} with gradient estimates given by \eqref{eq:2point} estimator, we used a sampling radius $\radius = 10^{-2}$, and for \eqref{eq:3MW} with \eqref{eq:1point} estimator, we used $\radius = 10^{-1}$ (in tune with our theoretical results which suggest the use of a tighter sampling radius when mixed payoff information is available to the players). As highlighted in the main text, we observe that the decrease in performance is mild, and the different algorithms achieved better rates than their theoretical guarantees.

\bibliographystyle{icml}
\bibliography{bibtex/IEEEabrv,bibtex/Bibliography-PM}

\end{document}